\providecommand{\U}[1]{\protect\rule{.1in}{.1in}}
\newtheorem{theorem}{Theorem}
\newenvironment{proof}[1][Proof]{\noindent\textbf{#1.} }{\ \rule{0.5em}{0.5em}}
\begin{document}
\preprint{UATP/1004}
\title{Non-equilibrium thermodynamics.\ II: Application to inhomogeneous systems}
\author{P.D. Gujrati}
\email{pdg@uakron.edu}
\affiliation{Department of Physics, Department of Polymer Science, The University of Akron,
Akron, OH 44325 USA}

\begin{abstract}
We provide an extension of a recent approach to study non-equilibrium
thermodynamics [Phys. Rev. E \textbf{81}, 051130 (2010), to be denoted by I in
this work] to inhomogeneous systems by considering the latter to be composed
of quasi-independent subsystems. The system $\Sigma$ along with the
(macroscopically extremely large) medium $\widetilde{\Sigma}$\ forms an
isolated system $\Sigma_{0}$. Starting from the Gibbsian formulation of the
entropy for $\Sigma_{0}$, which is valid even when $\Sigma_{0}$ is out of
equilibrium, we derive the Gibbsian formulation of the entropy of $\Sigma$,
which need not be in equilibrium. We show that the additivity of entropy
requires \emph{quasi-independence} of the subsystems, which requires that the
interaction energies between different subsystems must be \emph{negligible} so
that the energy also becomes additive. The thermodynamic potentials of
subsystems such as the Gibbs free energy that continuously decrease during
approach to equilibrium are determined by the field parameters (temperature,
pressure, etc.) of the medium and exist no matter how far the subsystems are
out of equilibrium so that their field variables may not even exist. This and
the requirement of quasi-independence make our approach different from the
conventional approach due to de Groot and others as discussed in the text. As
the energy depends on the frame of reference, the thermodynamic potentials and
Gibbs fundamental relation, but not the entropy, depend on the frame of
reference. The possibility of relative motion between subsystems described by
their\ net linear and angular momenta gives rise to viscous dissipation. The
concept of internal equilibrium introduced in I is developed further here and
its important consequences are discussed for inhomogeneous systems. The
concept of internal variables (various examples are given in the text) as
variables that cannot be controlled by the observer for non-equilibrium
evolution is also discussed. They are important because the internal
equilibrium in the presence of internal variables is lost if internal
variables are not used in thermodynamics. It is argued that their affinity
vanishes only in equilibrium. Gibbs fundamental relation, thermodynamic
potentials and irreversible entropy generation are expressed in terms of
observables and internal variables. We use these relations to eventually
formulate the non-equilibrium thermodynamics of inhomogeneous systems. We also
briefly discuss the case when bodies form an isolated system without any
medium to obtain their irreversible contributions and show that this case is
no different than when bodies are in an extremely large medium.

\end{abstract}
\date[July 28, 2010]{}
\maketitle

\section{ Introduction}

\subsection{Nature of the Problem}

In an earlier paper \cite{GujratiNETI}, which we will refer to as I in this
work, we have considered some of the consequences of applying the second law
of thermodynamics to an isolated system $\Sigma_{0}$, which consists of a
macroscopic system of interest $\Sigma$ containing a fixed number $N$ of
particles (atoms or molecules) surrounded by an extremely large medium
$\widetilde{\Sigma}$; see \ Fig. \ref{Fig_Systems}. From now on, it will be
implicitly assumed that the medium is extremely large to be unaffected by the
system. (Later in this work, we will also consider $\Sigma$ to have a fixed
volume $V$ instead of $N$ or fixed $V$ and $N$. Furthermore, we will also
consider briefly the case of many similar size systems forming the isolated
system $\Sigma_{0}$ without the extremely large medium $\widetilde{\Sigma}$.)
The motivation has been to develop this approach to obtain a non-equilibrium
thermodynamic description of the open system $\Sigma$ under various
conditions. Throughout this work, we will use \emph{body} to refer to any of
the three kinds systems: the isolated system, the medium or the system. All
quantities related to $\Sigma_{0},\widetilde{\Sigma}$ and $\Sigma$ will be
denoted by a suffix $0$,$\sim$ over the top, and without any suffix,
respectively. All quantities related to a body will be denoted without any
suffix, as if we are dealing with an open system.\ Similarly, in this work, we
will say that the system is \emph{open} when it is in a medium. Even though it
is not the common usage, this should not cause any confusion as the context
will be clear.%
\begin{figure}
[ptb]
\begin{center}
\includegraphics[
height=2.5322in,
width=5.2243in
]%
{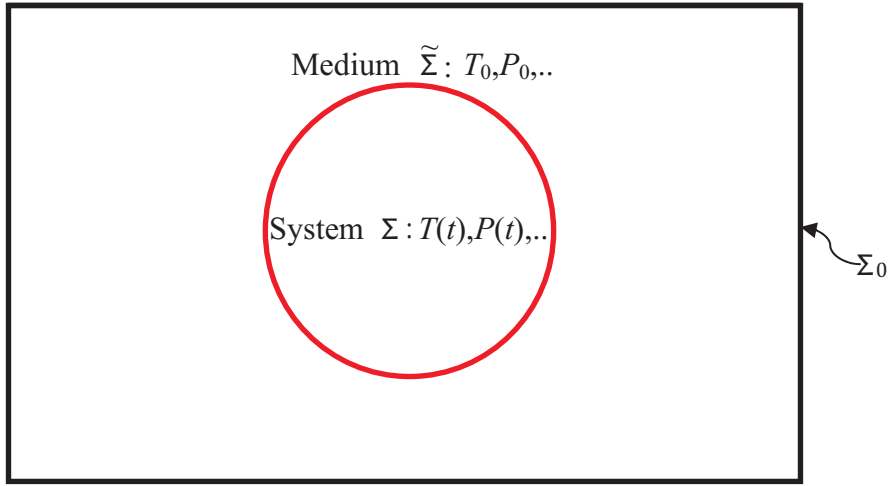}%
\caption{Schematic representation of a macroscopically large system $\Sigma$
and the medium $\widetilde{\Sigma}$ surrounding it to form an isolated system
$\Sigma_{0}$. The system is a very small part of $\Sigma_{0}$. The medium is
described by its fixed fields $T_{0},P_{0},$ etc. while the system, if in
internal equilibrium (see text) is characterized by $T(t),P(t),$ etc.
\ \ \ \ }%
\label{Fig_Systems}%
\end{center}
\end{figure}

To avoid complications due to external shear, we had only considered $\Sigma$
under no external shear in I. This restriction is easily removed as we will do
here. The isolated system will still have \emph{no} external force acting on
it to ensure its isolation; see below also. We will now allow forces acting at
the surface $\partial V$ of the system or any of its subsystems; see Fig.
\ref{Fig_two_boxes_modified_system}(a). These forces must balance the internal
stress tensor for mechanical equilibrium. Thus, the force $t_{i}df$ acting on
a surface element $df$ must equal the stress force $\tau_{ij}n_{j}df$
(summation over repeated indices implied), and we have
\cite{Landau-FluidMechanics,Landau-Elasticity} for the surface traction force%
\begin{equation}
t_{i}=\tau_{ij}n_{j}. \label{Force_stress_relation}%
\end{equation}%
\begin{figure}
[ptb]
\begin{center}
\includegraphics[
trim=1.057052in 0.399253in 0.000000in 0.000000in,
height=4.3716in,
width=4.4105in
]%
{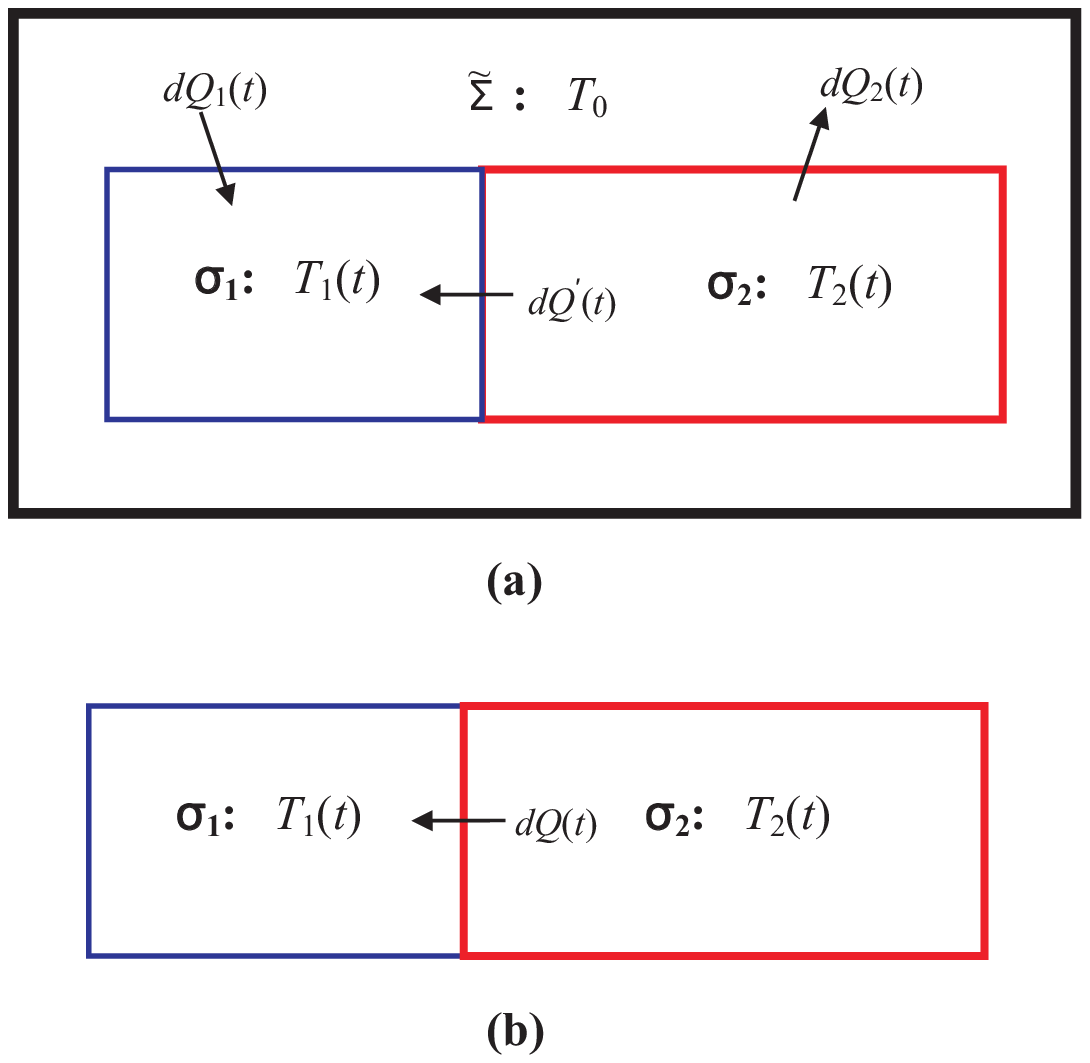}%
\caption{We show schemtically the two subsystems $\sigma_{1}$ and $\sigma_{2}$
($T_{2}(t)>T_{1}(t)$) forming the system $\Sigma$ in an extensively large
medium in (a) and by themselves forming an isolated system without an
extensively large medium in (b). The heat output $dQ(t)$ in (a) by $\sigma
_{2}$ is the sum of $dQ^{\prime}(t)$ and $dQ_{2}(t)$, while the heat intake by
$\sigma_{1}$ is the sum of $dQ^{\prime}(t)$ and $dQ_{1}(t)$. As we are dealing
with an isolated system, the heat input and output must be equal. Therefore,
$dQ_{1}(t)\equiv dQ_{2}(t)$. The equality of the heat input and output is also
true in (b). As the heat transfer occurs between objects does not occur
isothermally, there is irreversible entropy generation due to each heat
transfer. We will study this issue later in Sect.
\ref{marker_comparable_bodies}.}%
\label{Fig_two_boxes_modified_system}%
\end{center}
\end{figure}
Here, $\mathbf{n}$ is the outward unit normal at the surface element $df$
surrounding a point on the surface. This condition must be satisfied at every
point on the surface $\partial V.$ The net force and torque acting on the
system are given by
\[
\mathbf{F}\equiv%
{\displaystyle\oint\limits_{\partial V}}
\mathbf{t}df,\ \ \mathbf{K}\equiv%
{\displaystyle\oint\limits_{\partial V}}
\mathbf{r}^{(\text{s})}\times\mathbf{t}df,
\]
respectively; here, for convenience, $\mathbf{r}^{(\text{s})}$ is taken to be
the radius vector of the surface element with respect to the center of mass of
the system. The external forces are responsible for the deformation of the
system, and also result in the translation and rotation of the system. Let us
consider an infinitesimal volume element $dV$ of mass $dm$, which is moving
with a velocity $\mathbf{v}(t)$ and rotating with an angular velocity
$\boldsymbol{\omega}(t)$ and has an intrinsic angular momentum $\mathbf{m}%
(t)dV$. The linear and angular momenta of the system in some fixed frame are
given by%
\[
M\mathbf{V(}t\mathbf{)\equiv}%
{\textstyle\int\nolimits_{V(t)}}
\mathbf{v}(t)dm,\ \mathbf{M}(t)\mathbf{\equiv}M\mathbf{R(}t\mathbf{)\times
V(}t\mathbf{)+}%
{\textstyle\int\nolimits_{V(t)}}
\mathbf{r}(t)\times\mathbf{v}(t)dm+%
{\textstyle\int\nolimits_{V(t)}}
\mathbf{m}(t)dV,
\]
respectively; here $M=m_{0}N$\ is the mass of the system ( $m_{0}$ being the
mass of a particle), which we consider fixed for fixed $N$, and $\mathbf{R(}%
t\mathbf{)}$ and $\mathbf{V}(t)$ are the location and the translational
velocity of the center of mass in this frame. If the frame is taken to be the
center of mass frame, then $\mathbf{R(}t\mathbf{)}$ and $\mathbf{V}(t)$ are
zero. If the body as a whole is \emph{stationary}, then the energy of the body
is known as the internal energy. Such a stationary situation was considered in
our previous work \cite{GujratiNETI}, where no motions were considered. This
limitation will be removed here so as to allow for relative motions
(translation and rotation) between the system and the medium. These relative
motions are the additional sources of viscous dissipation and give rise to
additional irreversibility. The irreversibility due to temperature difference
(such as between the system and the medium) has already been considered in I.

We should mention here the recent somewhat comprehensive investigation carried
out by Bouchbinder and Langer \cite{Langer} who also consider a system under
external shear; however, our approaches and emphases are very different. We
should also mention earlier very different equilibrium-like attempts by
Lubchenko and Wolynes \cite{Wolynes} and by \"{O}ttinger \cite{Ottinger}.
Mention should also be made of a very interesting phenomenological approach by
Oono and Paniconi \cite{Oono} on steady state thermodynamics, which was later
advanced by Sasa and Tasaki \cite{Sasa}. The classical local non-equilibrium
thermodynamics due to de Donder \cite{Donder,deGroot,Prigogine,Prigogine-old}
is close in spirit to our approach and that of Bouchbinder and Langer
\cite{Langer}. It will be the standard formulation with which we will compare
and contrast our approach initiated in \cite{GujratiNETI}. Therefore, for the
sake of continuity and clarity, we briefly discuss the classical formulation
involving local equilibrium in Sect. \ref{Marker_Local_Thermodynamics}. We
note that there are other versions of non-equilibrium thermodynamics usually
known as the extended, rational and GENERIC non-equilibrium thermodynamics
\cite{Demirel,Muschik}; however, we do not discuss these formulations in this work.

It is well known that the second law of thermodynamics for the isolated system
$\Sigma_{0}$ states that%
\begin{equation}
\frac{dS_{0}(t)}{dt}\geq0, \label{Second_Law_0}%
\end{equation}
where $S_{0}(t)$ denotes the entropy of $\Sigma_{0}$ at some instant $t$. It
should be stressed that the isolation of $\Sigma_{0}$ requires that there be
something outside of\ $\Sigma_{0}$ from which it is isolated. Therefore, we
will assumes that $\Sigma_{0}$\ is confined to a finite though extremely large
volume $V_{0}$ \cite{Gujrati-Symmetry}. The isolation requires that we neglect
all interactions, such as gravitational interactions, of $\Sigma_{0}$ with the
outside that cannot be shielded. All interactions with the outside should be
relatively very weak to be \emph{negligible}. All relevant interactions must
be confined inside the volume $V_{0}$. This to not to be taken as a weakness
of our approach as including the interactions with outside will only make
$\Sigma_{0}$ an open system, so that our investigation of an open system,
which is our primary concern, can then be applied to it.

We should emphasize the following important point. The law of increase of
entropy in Eq. (\ref{Second_Law_0}) should give a pause to those readers who
believe that the concept of entropy is meaningful only for an equilibrium
state and that the entropy cannot be defined for a non-equilibrium state. This
belief is unfounded. The mere fact that the second law dictates the approach
to equilibrium clearly implies that the entropy exists even when the system is
out of equilibrium. This issue and the related history, in particular the
Gibbs formulation of the non-equilibrium entropy, has been reviewed recently
\cite[and references cited therein]{Gujrati-Symmetry}, and we refer the reader
to this for further details. We should stress that the Gibbs formulation of
the entropy requires that the dynamics of the system and that of the isolated
system are not deterministic, but rather stochastic; see
\cite{Gujrati-Symmetry} for further details.

\subsection{Important Restrictions in I and Their Removal in the Current Work
\label{marker_important_assumptions_I}}

An assumption that was implicit, but not stated, in I was that there was no
relative motion between the system and the medium and that the isolated system
was stationary. This is normally the case in practice in which the various
bodies $(\Sigma_{0},\widetilde{\Sigma}$ and $\Sigma)$ are stationary in the
laboratory frame of reference; the latter we will denote by $\mathcal{L}$ in
this work. In general, one can allow for a translation and rotation of a body,
as done here because we now wish to study their effects on the deformation of
the system. Because of the stationary assumption, there is no difference
between the energy and the internal energy for $\Sigma_{0},\widetilde{\Sigma}$
and $\Sigma$. This is a very common but useful assumption as the entropies
depend on the internal energies and not on the energies \cite{Landau}. The
latter energies may contain the contribution from the translation and rotation
of the system as a whole. In contrast, the internal energies are the energies
systems have in a frame in which they are stationary. Whenever we discuss both
energies together in the following, the internal energy will be denoted by a
superscript "i" to distinguish it from the energy, which is denoted without
the superscript; otherwise, it will be clear which energy we are considering.
For a stationary system in the lab frame $\mathcal{L}$,\ the internal energy
is the same as the energy. The translation of a body as a whole merely affects
the energy, but not its thermodynamic properties. However, the rotation of a
body as a whole gives rise to centrifugal potential energy that modifies the
energies of microstates and has to be carefully incorporated in any
thermodynamic investigation \cite[see Sect. 34]{Landau} as done in Sect.
\ref{marker_Rotating_Systems}; see also Appendices \ref{Appd_Frames} and
\ref{Appd_Rotating_System_0}.

\subsection{Present Goal and the Layout}

We had focused exclusively on the system as a whole without worrying about its
internal parts in I. As entropy palys a central role in our development, we
discuss the formulation of entropy and thermodynamic averages in Sect.
\ref{marker_averages}, where we show the entropy itself as a thermodynamic
average. Under the assumption of internal equilibrium in the system, there
were no relative motions between its various parts that could add additional
irreversible processes inside the system. In this work, we will remove this
limitation and treat the system $\Sigma$\ as inhomogeneous as glasses normally
exhibit both spatial and temporal inhomogeneity; see
\cite{Wolynes2,Chandler,Gujrati-Sokolov} for some recent investigations. We do
this by considering $\Sigma$ to be composed of a collection of a large number
$N_{\text{S}}$ of \emph{subsystems} $\sigma_{k}$, $k=1,2,\cdots,N_{\text{S}},$
which may be different from each other to allow for inhomogeneity and for
relative motions and shear forces between different subsystems in terms of
internal variables. Each subsystem is still \emph{macroscopically} large so
that we can introduce a legitimate entropy function $s_{k}$; see Sect.
\ref{marker_inhomogeneous_arbitrary} for further elaboration.

Apart from the \emph{observables} that can be manipulated by the observer,
there also appear \emph{internal variables} often used in describing glasses,
as is well known from the early works of Davies and Jones
\cite{Jones,Gujrati-book}. The latter variables cannot be manipulated by the
observer and were briefly introduced in I, but not explored. We will also
remedy this situation here and consider glass as an inhomogeneous system with
internal variables. Specifically, we treat translations and rotations of
various parts of a system as internal variables that are generated by surface
traction forces. The alternative approach is to use the traction forces and
the strains instead; see for example \cite{Langer}. The phenomenological
ideology introduced by Davies and Jones \cite{Jones}, which has been recently
reviewed by \"{O}ttinger \cite{Ottinger}, is by now standard and has been
discussed in several textbooks; see for example
\cite{Nemilov-Book,Gutzow-Book}. The observables and internal variables will
be collectively called \emph{state variables}; see Sect. \ref{marker_averages}
for proper definitions of these terms.

In Sect. \ref{marker_internal_equilibrium}, we discuss the consequences of
internal equilibrium and its similarity with and differences from the concept
of local equilibrium \cite{Donder,deGroot,Prigogine,Prigogine-old} discussed
in Sect. \ref{Marker_Local_Thermodynamics}. In particular, we argue in the
form of Theorem \ref{marker_Uniform_Motion} that the system can only sustain a
uniform translation and rotation in internal equilibrium. It is assumed here
that there are no additional conditions (such as the potential flow in a
superfluid) on the velocity. The proof is trivial but the theorem has far
reaching consequences. For example, it implies that the uniform rotation must
be about a principle axis of inertia. A simple way to understand internal
equilibrium is to think of the system as follows. We first disconnect it form
the medium with which it is not in equilibrium, and connect it to another
medium whose conjugate field variables ($T,P$, etc.) are exactly the same as
that of the system. The system remains in equilibrium with this medium so that
there will be no irreversible process in the system. We discuss the
generalization of equilibrium Maxwell relation to systems in internal
equilibrium. The condition for additivity and quasi-independence is considered
in Sect. \ref{marker_entropy_additivity}, where we prove that the Gibbs
entropy formulation in Eq. (\ref{Gibbs entropy}) is also applicable to an open
system, which is assumed to be \emph{quasi-independent} of the medium. Various
thermodynamic potentials are identified in Sect.
\ref{marker_thermodynamic_potentials} that are in accordance with the second
law. It is here that we see a clear distinction between our approach and the
conventional non-equilibrium theory exploiting the local equilibrium concept
\cite{Donder,deGroot,Prigogine,Prigogine-old}. Internal variables are
discussed in Sect. \ref{Marker_Internal Variables}. We prove that the chemical
potential or the affinity associated with an internal variable must be zero
when the system is in equilibrium. We also prove that the entropy expressed
solely in terms of observables when there are independent internal variables
must explicitly depend on time so that while the system is in internal
equilibrium with respect to all state variables, it is not in internal
equilibrium with respect to only observables. A system undergoing uniform
translation and rotation is studied in Sect. \ref{marker_Rotating_Systems}
where we also develop the Gibbs fundamental relation for such a system. We
then apply the results from this study to an isolated system in which the
system and the medium undergo relative translational motion in Sect.
\ref{marker_homogeneous}, but the system is homogeneous. An inhomogeneous
system with relative motions between its subsystems is studied in the next
section. We also discuss in this section the case of several bodies, each in
internal equilibrium but different from others, that form an isolated body
together by themselves without a medium; see Fig.
\ref{Fig_two_boxes_modified_system}(b). These bodies are macroscopic in size,
but are not extensively as large as a medium. We discuss a direct method of
calculating the irreversible entropy generation in each body in terms of the
equilibrium state of all the bodies. We find that the same results are also
obtained by bringing all bodies in contact with a medium as shown in Fig.
\ref{Fig_two_boxes_modified_system}(a). This equivalence is used to prove
Theorem \ref{Theorem_Isolated_Bodies}. All these investigations are extended
to include extra observables and internal variables. The final section
contains concluding discussion and a brief summary of our results.

\section{Local Non-equilibrium Thermodynamics: A Brief
Review\label{Marker_Local_Thermodynamics}}

As entropy is the central quantity appearing in the second law in Eq.
(\ref{Second_Law_0}), we will pay close attention to its determination,
although this is usually not done in classical local non-equilibrium
thermodynamics \cite{Donder,deGroot,Prigogine,Prigogine-old}, where its
existence is taken as a postulate with an implicit assumption that it is
always \emph{additive} \cite{deGroot}. The entropy $S$ of a system is defined
in terms of the \emph{local} entropy density $s(\mathbf{r})$ per unit volume:%
\begin{equation}
S\equiv%
{\textstyle\int\nolimits_{V}}
s(\mathbf{r})dV. \label{Entropy_Additivity_Thermodynamics}%
\end{equation}
The local temperature $T$ and pressure $P$ are assumed continuous functions of
the location $\mathbf{r}$ and time $t$, and are postulated to always exist.
The \emph{additivity} of the energy $E$ results in%
\begin{equation}
E\equiv%
{\textstyle\int\nolimits_{V}}
\left[  e(\mathbf{r})+%
\frac12
m_{0}\rho(\mathbf{r})\mathbf{v}^{2}(\mathbf{r})+\mathbf{m(\mathbf{r})\cdot
}\boldsymbol{\omega}(\mathbf{r})+\psi(\mathbf{r})\right]  dV.
\label{Energy_Additivity_Thermodynamics}%
\end{equation}
Here, $e(\mathbf{r})dV$ is the internal energy, $%
\frac12
m_{0}\rho(\mathbf{r})\mathbf{v}^{2}(\mathbf{r})dV$ and $\mathbf{m(\mathbf{r}%
)\cdot}\boldsymbol{\omega}(\mathbf{r})dV$\ the translational and rotational
kinetic energy density, respectively, and $\psi(\mathbf{r})dV$ the additional
energy contribution due to interactions not included in the internal energy
density in a volume $dV$ of the system; the local mass and angular momentum
densities are given by $m_{0}\rho$ and $\mathbf{m(\mathbf{r})},$ respectively;
compare with Eq. (\ref{subsystem_energy_relation0}) derived later after
limiting it to the volume element $dV$.

The functional form of the entropy density depends on the nature of the
system. For example, for a simple system containing a fixed number of
structureless particles, it is assumed to be a function only of the internal
energy density $e$ and the local number density $\rho$; see for example,
\cite[see Eq. (III.14)]{deGroot}%
\begin{equation}
s(\mathbf{r})=s(e(\mathbf{r}),\rho(\mathbf{r})).
\label{local_equilibrium_entropy}%
\end{equation}
The local Gibbs free energy density $\widehat{g}$ is given by
\begin{equation}
\widehat{g}=e-Ts+P. \label{Local_Gibbs_Free_Energy}%
\end{equation}
whether local equilibrium exists or not. (The unconventional use of the symbol
$\widehat{g}$ instead of $g$ will become clear later when we discuss the Gibbs
free energy, which follows from the second law and which continuously decrease
as the system approaches equilibrium.) However, no direct method of
calculating the entropy is given in this approach except by assuming as
another postulate the validity of Gibbs fundamental relation, which for a
simple system with no internal variables reads
\begin{equation}
Td(s/\rho)=d(e/\rho)+Pd(1/\rho); \label{Gibbs_fundamental_relation}%
\end{equation}
this postulate is a consequence of assuming \emph{local equilibrium} \cite[see
Eq. (III.15)]{deGroot}. The presence of the local temperature $T$ and pressure
$P$ in the fundamental relation imposes strong conditions on the nature of the
entropy in that its partial derivatives are related to the given $T$ and $P$
under local equilibrium, which follow from Eq.
(\ref{Gibbs_fundamental_relation}).

We avoid the above issues in the conventional non-equilibrium thermodynamics
\cite{Donder,deGroot,Prigogine,Prigogine-old} by first identifying the entropy
of the system in terms of microstate probabilities as described below, see Eq.
(\ref{Gibbs entropy}), and then use the concept of \emph{internal equilibrium}
to introduce the temperature and pressure in terms of this entropy; the latter
are defined only when there is internal equilibrium \cite{GujratiNETI}. This
thus avoids the issues inherent in the conventional approach. Our approach is
\emph{not} local in that we always deal with quantities $S,E,$ etc. related to
macroscopically large systems or subsystems as opposed to the conventional
thermodynamics which deals with local quantities $s,e,$ etc. As a consequence,
not only $S,E,$ etc. but also the temperatures, pressures, etc. associated
with these systems or subsystems will not \emph{always} be continuous
functions of space at the interfaces. Thus, we will not impose continuity in
space on these quantities, which makes our approach distinct from the
traditional local non-equilibrium approach of de Donder
\cite{Donder,deGroot,Prigogine,Prigogine-old} where these quantities are
always treated as continuous. In the latter approach, a system can be broken
into subsystems, each sufficiently small, yet large enough to be in internal
equilibrium to satisfy Gibbs fundamental relation. This hypothesis is known as
the local equilibrium hypothesis. We add another requirement, that of
quasi-independence of the subsystems in our approach, which we believe to be
extremely important. Only this requirement ensures that the entropy retains
the additivity property and also remains a state variable, as we will discuss
later in Sect. \ref{marker_entropy_additivity}.

\subsection{Helmholtz Theorem\label{marker_Helmholtz_theorem}}

To accommodate relative motion, we need to allow for surface traction forces
$t_{i}$, which then give rise to internal forces causing the deformation of
the system. These forces can be related to the induced stress tensor
$\tau_{ij}$ as shown in Eq. (\ref{Force_stress_relation}), and will result in
a motion of the system due to non-zero net force and torque acting on the
system. It is well known that the local motions for a deformable system can be
described as a combination of three distinct types of
motions\ \cite{Sommerfeld,Batchelor}:

\begin{enumerate}
\item[(a)] a pure translation

\item[(b)] a pure strain, and

\item[(c)] a pure rotation
\end{enumerate}

by expressing the instantaneous difference in the velocity $\delta\mathbf{v}$
at two nearby points separated by a displacement vector $\delta\mathbf{r}$ as
\begin{equation}
\delta v_{i}=\psi_{ij}\delta x_{j}+\widehat{\omega}_{ij}\delta x_{j},
\label{velocity_decomposition}%
\end{equation}
where the symmetric tensor
\[
\psi_{ij}\equiv%
\frac12
(\frac{\partial v_{i}}{\partial x_{j}}+\frac{\partial v_{j}}{\partial x_{i}})
\]
represents the rate of strain tensor and the antisymmetric tensor
\[
\widehat{\omega}_{ij}\equiv%
\frac12
(\frac{\partial v_{j}}{\partial x_{i}}-\frac{\partial v_{i}}{\partial x_{j}%
})\equiv e_{ijk}\widehat{\omega}_{k}%
\]
is the vorticity tensor, and represents the axial vector $\widehat
{\boldsymbol{\omega}}\boldsymbol{=}%
\frac12
\boldsymbol{\partial}\times\mathbf{v}$\ associated with $\widehat{\omega}%
_{ij}.$ The second term $\widehat{\omega}_{ij}\delta x_{j}$ in Eq.
(\ref{velocity_decomposition}) represents the components of the vector
$\widehat{\boldsymbol{\omega}}\times\delta\mathbf{r}$\textbf{.} One should
think of $\delta\mathbf{v}$ as the relative velocity between two neighboring
points separated by $\delta\mathbf{r}$.

The first contribution represents a pure straining motion while the second
contribution represents a rigid-body rotation. For example, a simple shearing
motion in which plane layers of the system slide over each other can be
treated as a combination of a pure strain (with no rate of volume change) and
a rotation \cite{Batchelor}.

\subsection{Stress Tensor}

The motion at the local level can also be studied directly by considering the
stress tensor. The stress tensor is normally expressed as a sum of the
non-dissipative and dissipative or viscous contributions
\cite{Landau-Elasticity}:%
\begin{equation}
\tau_{ij}=\sigma_{ij}+\sigma_{ij}^{\prime},
\label{stress_tensor_decomposition}%
\end{equation}
in which the viscous contribution $\sigma_{ij}^{\prime}$\ is some function
that depends on the velocity gradients, i.e., on
\[
\partial v_{i}/\partial x_{j},\partial^{2}v_{i}/\partial x_{j}\partial
x_{k},\text{ etc.}%
\]
Thus, we can express it as
\[
\sigma_{ij}^{\prime}=A_{ijkl}f_{kl}(\partial v_{m}/\partial x_{n},\partial
^{2}v_{m}/\partial x_{n}\partial x_{p},...),
\]
where $A_{ijkl}$ does not depend on the velocity distribution and $f_{kl}$ is
a function of the velocity gradients$.$ For example, in a linear approximation
using only $\partial v_{i}/\partial x_{j}$, $f_{kl}$ is taken to be
\[
f_{kl}=f_{0}\partial v_{k}/\partial x_{l}%
\]
with a constant $f_{0}$, which could be conveniently absorbed in $A_{ijkl}.$
In this approximation, we see from Eq. (\ref{velocity_decomposition}) that
$\sigma_{ij}^{\prime}$ depends on both the rate of strain tensor and the
vorticity $\widehat{\boldsymbol{\omega}}$.

In general, we can partition $\sigma_{ij}^{\prime}$\ into a symmetric and an
antisymmetric part, the latter due to the presence of intrinsic rotation of
the system \cite{deGroot} and describing the role of the rotational viscosity.
Thus, we can also partition $\tau_{ij}$\ into a symmetric and an antisymmetric
part. We refer the reader to Chapter 12 in the monograph of de\ Groot and
Mazur \cite{deGroot}. Of course, there may be symmetry reasons such as the
isotropy of the system that would forbid the dependence on vorticity, in which
case there would be no antisymmetric part. It is possible to show
\cite{deGroot} that the rate of change of the intrinsic angular momentum is
determined solely by the antisymmetric part of $\tau_{ij}^{\prime}.$ Thus, the
absence of this part will imply the conservation of the orbital and intrinsic
angular momentum separately.

It can now be shown \cite[Eqs. (XII.18) and (XII.18)]{deGroot} that the
antisymmetric part
\[
\tau_{ij}^{\text{a}}\equiv%
\frac12
\left(  \tau_{ij}-\tau_{ji}\right)
\]
contributes a term proportional to%
\[
\boldsymbol{\tau}\cdot(\widehat{\boldsymbol{\omega}}\mathbf{-}%
\boldsymbol{\omega})
\]
to the rate of change of the internal energy $e$ and to the entropy
production. Here, $\boldsymbol{\tau}\ $\ is the vector associated with
$\tau_{ij}^{\text{a}}$
\[
\tau_{ij}^{\text{a}}=e_{ijk}\tau_{k}%
\]
and $\boldsymbol{\omega}$ represents the angular velocity of rotation of the
system; cf. Eq. (\ref{Energy_Additivity_Thermodynamics}). Thus, this
contribution vanishes for uniform rotation $\widehat{\boldsymbol{\omega}%
}\mathbf{=}\boldsymbol{\omega}$ as expected. For $\widehat{\boldsymbol{\omega
}}\mathbf{\neq}\boldsymbol{\omega}$, there would be precession of the local
volume element \cite{Landau-Mechanics} about the direction of
$\boldsymbol{\omega}$, so that the rotational viscosity would play an
important role until $\widehat{\boldsymbol{\omega}}$\textbf{ }becomes equal
to\textbf{ }$\boldsymbol{\omega}$.

\subsection{Energy Balance and the First Law}

The decomposition in Eq. (\ref{stress_tensor_decomposition}) allows us to
break the surface traction also in two terms related to the individual
contribution:%
\[
\mathbf{t}=\boldsymbol{\sigma}+\boldsymbol{\sigma}^{\prime},
\]
where the two new terms are defined similar to that in Eq.
(\ref{Force_stress_relation}). The non-viscous contribution is sometimes
expressed by taking out the isotropic pressure term as follows%
\[
\boldsymbol{\sigma}=-P\mathbf{n}+\boldsymbol{\varepsilon},
\]
assuming that the thermodynamic pressure can be defined. The rate at which the
work is done \emph{on} the system is given by the average (over all
microstates) of the stress tensor at the surface $\partial V$ of the system:%
\begin{equation}
\frac{dW^{\prime}(t)}{dt}=\overline{%
{\displaystyle\oint\limits_{\partial V(t)}}
t_{j}v_{j}^{(\text{s})}df}=\overline{%
{\displaystyle\oint\limits_{\partial V(t)}}
\sigma_{j}v_{j}^{(\text{s})}df}+\overline{%
{\displaystyle\oint\limits_{\partial V(t)}}
\sigma_{j}^{\prime}v_{j}^{(\text{s})}df}. \label{Total_Work0}%
\end{equation}
where $df_{i}$ and $v_{j}^{(\text{s})}$\ are the components of $d\mathbf{f}%
(t)$\ and the velocity $\mathbf{v}^{(\text{s})}$ of the surface element,
respectively. When the pressure can be defined, this rate can be expressed as%
\begin{equation}
\frac{dW^{\prime}(t)}{dt}=-P(t)\frac{dV(t)}{dt}+\overline{%
{\displaystyle\oint\limits_{\partial V(t)}}
\varepsilon_{j}v_{j}^{(\text{s})}df}+\overline{%
{\displaystyle\oint\limits_{\partial V(t)}}
\sigma_{j}^{\prime}v_{j}^{(\text{s})}df}. \label{Total_Work}%
\end{equation}

In this work, we will not consider latent heats and chemical or nuclear
reactions within the body. In this case, the rate at which the heat is added
to the system is obtained by considering the heat flux through the surface
$\partial V$ and is given by%
\[
\frac{dQ(t)}{dt},
\]
where $dQ(t)$ is the heat added to the system in time $dt$. We can then write
down for the rate of change of the energy due to the dynamics in the system as%
\begin{equation}
\frac{dE(t)}{dt}=\frac{dQ(t)}{dt}+\frac{dW^{\prime}(t)}{dt},
\label{Internal_Energy_Balance_Eq}%
\end{equation}
which is a restatement of the first law of thermodynamics; see also
\cite{Langer}.

It should be remarked that the balance equation
(\ref{Internal_Energy_Balance_Eq}) is in an integral form for the entire
system and should be contrasted with the differential form commonly stated in
textbooks; see for example \cite{deGroot}. The latter formulation is valid for
infinitesimal volumes. Here, we are not interested in such a local
description. Our main focus is to consider regions of the system that are
macroscopically large enough so that they can be treated as quasi-independent.
Under this condition, the entropy of the system can be approximated to a high
accuracy by adding the entropies of the subsystems. We discuss this point
further in Sect. \ref{marker_homogeneous_arbitrary}.

\subsection{Need for Internal Equilibrium\label{marker_internal_equilibrium_0}%
}

If the system as a whole is stationary, then the average velocity
$\overline{\mathbf{V(}t\mathbf{)}}=0$. In this case, the energy of the system
is the internal energy. We will consider this case below in this section for simplicity.

The first law statement in Eq. (\ref{Internal_Energy_Balance_Eq}) does not by
itself allow us to determine the way the entropy of the system changes. This
law is valid even if the system is not in internal equilibrium. In the absence
of internal equilibrium, there is no way to determine the change in the
entropy from this law. We need to relate $dE(t)$ with the change $dS(t)$ in
the entropy to determine the latter. However, if we now assume that the system
is in internal equilibrium (see I and Sect. \ref{marker_internal_equilibrium}
below), then the entropy no longer has an explicit $t$-dependence. In this
case, we can write down the differential of the entropy $S(E,V,N)$ of a
monatomic system of neutral particles with fixed $N$\ as \cite{GujratiNETI}
\begin{equation}
dS=\frac{1}{T(t)}dE+\frac{P(t)}{T(t)}dV; \label{Internal_Equilibrium_Relation}%
\end{equation}
compare this equation with Eq. (\ref{Gibbs_fundamental_relation}). We can now
use Eq. (\ref{Internal_Energy_Balance_Eq}) in this equation to determine the
rate of change of the entropy. However, as we discuss in Sect.
\ref{marker_internal_equilibrium}, in this case there is no \emph{additional}
irreversible entropy production arising from viscous stress tensor. Thus, the
last contribution in Eqs. (\ref{Total_Work0}) or (\ref{Total_Work}) vanishes
\cite[Sect. 7.4.2]{Kuiken}. If there is no shearing at the surface, then the
second term in Eq. (\ref{Total_Work}) also vanishes, and we obtain the
standard relation
\begin{equation}
T(t)\frac{dS(t)}{dt}=\frac{dQ(t)}{dt}; \label{Entropy_Production}%
\end{equation}
this identification was used in I \cite{GujratiNETI}.

\subsection{Reversible Entropy Change and Irreversible Entropy
Production\label{marker_entropy_production}}

We wish to emphasize, as was done in the previous work \cite[see Eqs. (16) and
(18) in particular]{GujratiNETI} that this assumption of internal equilibrium
does not mean that the\emph{ irreversible entropy production} $d_{\text{i}%
}S(t)$ in the system is absent unless the system happens to be in equilibrium
with the medium. To see this most easily, we recount from \cite{GujratiNETI}
that
\begin{equation}
d^{(\text{E})}S(t)=\frac{dQ(t)}{T(t)} \label{Entropy_Change}%
\end{equation}
is the change in the entropy due to heat transfer $dQ(t)$ to the system that
is in internal equilibrium and depends on its instantaneous temperature
$T(t)$. We have used the superscript E to indicate that the entropy change we
are discussing is due to energy (heat) transfer for which the associated
conjugate variable, see Eq. (\ref{Field_Variables}), is $y_{\text{E}}=$
($\partial S/\partial E$)$_{V,N}=1/T(t)$ and takes the value $y_{0\text{E}}=$
$1/T_{0}$ in equilibrium$.$ It follows from this that the reversible entropy
change $d_{\text{e}}^{(\text{E})}S(t)$ in the system depends on its
\emph{equilibrium} temperature $T_{0}$, which is also the \emph{constant}
temperature of the medium, and is given by%
\begin{equation}
d_{\text{e}}^{(\text{E})}S(t)=\frac{dQ(t)}{T_{0}},
\label{Reversible_Entropy_Change}%
\end{equation}
regardless of the instantaneous temperature of the system. This results in
\cite{GujratiNETI}
\begin{equation}
d_{\text{i}}^{(\text{E})}S(t)=dQ(t)\left(  \frac{1}{T(t)}-\frac{1}{T_{0}%
}\right)  =F_{\text{E}}(t)dQ(t)\geq0 \label{Irreversible_Entropy_Change}%
\end{equation}
in all cases. Here,
\begin{equation}
F_{\text{X}}(t)\equiv F\left[  y(t)\right]  \equiv y(t)-y_{0}, \label{Force_y}%
\end{equation}
with $y=\allowbreak y_{0}$ representing the equilibrium value of the conjugate
field $y$ associated with the observable $X$, represents the
\emph{thermodynamic force} for the flow of $X$. \ Since $T(t)$ in Eq.
(\ref{Entropy_Change}) undergoes a change $dT\propto dQ(t)$ due to the heat
transfer, the heat transfer is not isothermal. Thus, it should not be
surprising that there is an irreversible entropy generation as part of it. On
the other hand, the determination of $d_{\text{e}}S(t)$ in Eq.
(\ref{Reversible_Entropy_Change}) requires the heat transfer to be
\emph{isothermal} for the process to be \emph{reversible}. What is surprising
is that $d_{\text{e}}S(t)$ is determined not by the current state of the
system at $t$, but its equilibrium state in the future so that the concept of
causality is inapplicable \cite{Gujrati-Symmetry}. It is this particular
aspect of $d_{\text{e}}S(t)$ that is the cornerstone of the second law of
thermodynamics: $d_{\text{i}}S(t)\geq0$ also depends on the future. We will
make use of this observations later.

The situation with other extensive variables like volume $V$ is no different.
As shown in \cite{GujratiNETI}, the entropy change and the reversible entropy
change due to a change $dV$ are given by
\begin{equation}
d^{(\text{V})}S(t)=\frac{P(t)}{T(t)}dV(t);\text{\ }d_{\text{e}}^{(\text{V}%
)}S(t)=\frac{P_{0}}{T_{0}}dV(t); \label{Entropy_Change_V}%
\end{equation}
It follows from this that the irreversible entropy generation $d_{\text{i}%
}^{(\text{V})}S(t)$ is given by an identical formulation as above for
$d_{\text{i}}^{(\text{E})}S(t)$
\begin{equation}
d_{\text{i}}^{(\text{V})}S(t)=dV(t)\left(  \frac{P(t)}{T(t)}-\frac{P_{0}%
}{T_{0}}\right)  =dV(t)F_{\text{V}}(t)\geq0, \label{Entropy_Change_V_i}%
\end{equation}
with $F_{\text{V}}(t)$ given by Eq. (\ref{Force_y}) with $y_{\text{V}%
}=P(t)/T(t)$, see also Eq. (\ref{Field_Variables_0}), represents the
thermodynamic force for the "flow" of volume $V$.

It should be stressed that the validity of Eq.
(\ref{Internal_Equilibrium_Relation}) follows from the continuity of the
entropy and the existence of the internal equilibrium. This is a general
relation and is valid for all systems whose macrostate can be described by the
three variables $E,V,$ and $N.$ On the other hand, the form of Eq.
(\ref{Entropy_Production}) depends on particular form of the processes that go
on in the system. Thus, it is process-specific; recall that we have eliminated
various processes in deriving this equation. We can incorporate the missing
additional contributions in Eq. (\ref{Entropy_Production}) by properly
introducing internal variables to describe the microstates of the system and
generalizing Eq. (\ref{Internal_Equilibrium_Relation}) to include other
extensive observables. The issue\ of internal variables is taken up in Sect.
\ref{Marker_Internal Variables}, and the generalization of Eq.
(\ref{Internal_Equilibrium_Relation}) is taken up in
Sect.\ref{marker_Rotating_Systems}.

\section{Entropy and Averages\label{marker_averages}}

\subsection{Isolated System}

The entropy of an isolated system such as $\Sigma_{0}$, whether in equilibrium
or not, is given by the Gibbs formulation%
\begin{equation}
S_{0}(t)=-%
{\textstyle\sum\limits_{\alpha}}
p_{\alpha}(t)\ln p_{\alpha}(t), \label{Gibbs entropy}%
\end{equation}
where $p_{\alpha}(t)$ is the time-dependent probability of the $\alpha$-th
microstate of the isolated system $\Sigma_{0}$; see a recent review
\cite{Gujrati-Symmetry} of this formulation for an isolated system, regardless
of how far it is from equilibrium. It is assumed that the dynamics of the
system is stochastic and not deterministic, as the latter dynamics makes the
above entropy a constant of motion in direct contradiction with the second
law. In a deterministic, i.e. Hamiltonian dynamics, a microstate uniquely
evolves into a microstate, while in a stochastic dynamics, a microstates
evolves into several microstates with certain probabilities. The unique
evolution is time-reversible, which results in the entropy being a constant of
motion. In a stochastic dynamic, the entropy can only increase
\cite{Gujrati-Symmetry}. Since the system is isolated, we do not allow
external forces acting on it; we of course neglect weak stochastic forces
acting on it. Thus, any deformation if it occurs is due to internal forces. We
have defined the entropy as a $\emph{dimensionless}$ quantity, which is
equivalent to setting the Boltzmann constant $k_{\text{B}}$ equal to $1$. The
collection $\boldsymbol{\alpha}=\left\{  \alpha\right\}  $ of these
microstates along with their \emph{non-zero} probabilities represents a\emph{
macrostate }$\mathcal{M}_{0}^{\boldsymbol{\alpha}}$ or simply $\mathcal{M}%
_{0}$ of $\Sigma_{0}$. As equilibrium and non-equilibrium thermodynamics is an
experimental science, a macrostate of any body is specified in terms of a set
of some \ extensive \emph{observables} $\mathbf{X}$ that can be manipulated by
the observer. The same set of observables are also used to identify the
microstates. Most often, the microstates are identified by the energy, volume,
and the number of particles that form the elements of $\mathbf{X}$, because of
their special role in thermodynamics. If there are other \emph{extensive}
mechanical quantities (quantities that are not purely thermodynamic in nature)
such as the total linear and angular momenta, then the microstates are
characterized by all these extensive quantities, collectively denote by
$\mathbf{X}$; see the discussion in Sect. \ref{Marker_Internal Variables} for
more details. Apart from these observables, a body can also be specified by a
set $\mathbf{I}$\emph{ }of \emph{internal variables}
\cite{Donder,deGroot,Prigogine,Prigogine-old} that have been found very useful
in describing glassy dynamics and will be discussed later in Sect.
\ref{Marker_Internal Variables}. Indeed, one needs internal variables to also
explain the time-evolution of an isolated system towards equilibrium since all
its observables remain constant. Thus, the macrostate of the isolated system
can use the internal variables for its specification. We will take these
internal variables to be also extensive and call both of them as \emph{state
variables }and collectively denote them by $\mathbf{Z}$.\ Taking internal
variables as extensive allows us to deal all state variables on equal footing,
so that generalization from observables to internal variables becomes almost trivial.

Let us continue with the discussion of the isolated system. In general,
microstate probabilities $p_{\alpha}(t)$ are functions of the state variables
$\mathbf{Z}_{0}$ along with $t$. As a consequence, the entropy $S_{0}%
(\mathbf{Z}_{0}(t),t)$ is also going to be a function of $\mathbf{Z}_{0}(t)$
and $t$. There are situations, when the entropy can also depend on some
\emph{external parameters} such as the angular velocity of the rotation of the
frame of reference. These parameters need not necessarily be extensive just as
$t$ is not. The number of state variables are too limited for a complete
microscopic description of the system, but is sufficient to describe the
macroscopic conditions of a system.

The observables remain fixed for the isolated system $\Sigma_{0}$. The
internal variable $\mathbf{I}_{0}(t)$, if present in the isolated system, is
normally a function of time; its time-dependence describes the temporal
evolution of $\Sigma_{0}$. While the microstate $\alpha$, hence\ the value of
the state variable $\mathbf{Z}_{0\alpha}$ in the microstate $\alpha$, does not
vary with time, the \emph{average }$\mathbf{Z}_{0}(t)$ for the macrostates
varies with $t$:%
\begin{equation}
\mathbf{Z}_{0}(t)\equiv%
{\textstyle\sum\limits_{\alpha}}
p_{\alpha}(t)\mathbf{Z}_{0\alpha}. \label{Average 0}%
\end{equation}
It should be pointed out that the entropy, as formulated in Eq.
(\ref{Gibbs entropy}) is also an \emph{average} of $(-\ln p)$
\cite{Gujrati-Symmetry}, the negative of the index of probability $\ln p$.
There will be times, when we will also use an overbar such as in
$\overline{\mathbf{Z}}_{0}(t)$ to indicate such averages for the sake of
clarity. For common thermodynamic quantities such as average energy, volume,
etc. the normal practice is to not use the overbar (unless clarity is needed)
as it is mostly these average quantities that we deal with.

\subsection{An Arbitrary Body}

It should also be stressed that the microstates for a body remain the same
whether the body is isolated or not. We can apply Eqs. (\ref{Gibbs entropy})
and (\ref{Average 0}) to determine the entropy and the average quantity for
any body, isolated or not (such as the system $\Sigma$ or the medium
$\widetilde{\Sigma}$, neither of which is isolated). In the following, we will
always use $i$ to denote a microstate of a body but reserve $\alpha$ to denote
the microstate of the isolated system. The entropy and the average energy of a
macrostate of a body\ is given by
\begin{subequations}
\label{0}%
\begin{align}
S(t)  &  \equiv-%
{\textstyle\sum\limits_{i}}
p_{i}(t)\ln p_{i}(t),\label{System_Entropy}\\
E(t)  &  \equiv%
{\textstyle\sum\limits_{i}}
p_{i}(t)E_{i}, \label{System_Energy}%
\end{align}
where $i$ denotes one of its microstates, whose probability is denoted by
$p_{i}(t)>0$. While we can certainly allow microstates with probabilities
$p_{i}(t)=0$ in Eq. (\ref{0}), we find it convenient to only allow microstates
with non-zero probabilities in the sum. Microstates with non-zero
probabilities will be identified as \emph{allowed} \cite{Gujrati-Symmetry} in
this work.

While there cannot be any doubt about the validity of Eq. (\ref{System_Energy}%
), one may feel some reservation about Eq. (\ref{System_Entropy}) for the
entropy of an open system. Therefore, we will give a direct proof of Eq.
(\ref{System_Entropy}) in Sect. \ref{marker_entropy_additivity}.

\section{Internal Equilibrium Thermodynamics}

\subsection{Equiprobability Concept and Consequences:\ No Internal Variables}

We will first consider the case when there are no internal variables. As is
the normal practice (see I for details), we assume that the medium is in
\emph{internal equilibrium }even if the system and the medium are not in
equilibrium. This assumption is similar to the assumption of local equilibrium
in the conventional nonequilibrium thermodynamics noted in Sect.
\ref{marker_averages}. As discussed in I and in the review
\cite{Gujrati-Symmetry}, the condition for the internal equilibrium to be met
is that the entropy has the maximum possible value at each instant for
the\emph{ instantaneous} average value $\mathbf{X}_{\text{IE}}\equiv
\mathbf{X}(t)\ $of the observable of the body. For the case considered in I,
they are $\widetilde{E}_{\text{IE}}\equiv\widetilde{E}(t)$ and $\widetilde
{V}_{\text{IE}}\equiv\widetilde{V}(t)$ for the medium; the number of particles
of the medium is not allowed to change at all, so that $\widetilde
{N}_{\text{IE}}\equiv\widetilde{N}$ is always kept fixed. It is easy to see
from Gibbs' formulation of the entropy in Eqs. (\ref{Gibbs entropy}) or
(\ref{System_Entropy}) that this happens \emph{if and only if } all the
microstates that are allowed ($p_{i}(t)>0$) at that instant are
\emph{equiprobable}. Another way to think about the internal equilibrium is to
imagine isolating the medium by disconnecting it from the system. This will
keep $\widetilde{\mathbf{X}}$ \emph{fixed} at $\widetilde{\mathbf{X}%
}_{\text{IE}}$. Then the entropy of the isolated medium cannot change anymore.
In other words, it is in \emph{equilibrium}. We can apply the same idea to any
body in internal equilibrium. The body will remain in equilibrium if it is isolated.

Let us follow the consequences of this concept beyond what was discussed in I.
\end{subequations}
\begin{enumerate}
\item[(1)] We allow for the possibility of external forces including stresses
acting on the system and produced by the medium; see Fig. \ref{Fig_Systems}.
Under internal equilibrium,
\begin{equation}
p_{i}(t)=1/W(t),\ \ \forall i, \label{Internal_equilibrium}%
\end{equation}
where $W(t)$ is the number of allowed microstates \cite{Gujrati-Symmetry} at
that instant. This immediately leads to the Boltzmann entropy
\begin{equation}
S(t)=\ln W(t) \label{Boltzmann entropy}%
\end{equation}
for a system in internal equilibrium, a common assumption in non-equilibrium
thermodynamics; see for example Bouchbinder and Langer \cite{Langer}. Since
the entropy is maximum at each instant $t$, it cannot increase further if all
observables are kept \emph{fixed }at their values $\mathbf{X}_{\text{IE}}$ at
that moment $t$, when the internal equilibrium is achieved. To ensure that
$\mathbf{X}$ is held fixed at $\mathbf{X}_{\text{IE}}$, we \emph{isolate} the
system from the medium so that $\Sigma$ turns into an isolated system. Let its
entropy be denoted by $S_{\text{IS}}(t)$ after isolation. As the entropy is
already at its maximum, it cannot change. In other words, $S_{\text{IS}}\equiv
S(\mathbf{X}_{\text{IE}})\ $at \emph{fixed} $\mathbf{X}_{\text{IE}}$\ must be
independent of time.

The above argument proves that the entropy $S_{\text{IE}}\equiv S(\mathbf{X}%
_{\text{IE}})$ has no \emph{explicit} $t$-dependence when the system is in
internal equilibrium:
\[
S_{\text{IE}}\equiv S(E_{\text{IE}},V_{\text{IE}},\cdots,N)=S(E(t),V(t),\cdots
,N).
\]

Its implicit time dependence when $\Sigma$ is treated as an open system comes
from the temporal evolution of $\mathbf{X}(t)$. Thus, the (maximum) entropy in
Eq. (\ref{Boltzmann entropy}) will change with time as $\mathbf{X}(t)$ changes
with time in the open system. For the isolated situation, $S_{\text{IS}}\equiv
S(\mathbf{X}_{\text{IE}})\ $at \emph{fixed} $\mathbf{X}_{\text{IE}}$\ will
remain constant.

\item[(2)] Since the entropy is maximum for fixed $\mathbf{X}_{\text{IE}}$,
there cannot be any \emph{additional} irreversible entropy production
$d_{\text{i}}S^{(\text{IE})}$ anymore%
\begin{equation}
\left.  d_{\text{i}}S_{\text{IE}}\right\vert _{\mathbf{X}_{\text{IE}}}=0\text{
in internal equilibrium (IE); } \label{IE_entropy_production}%
\end{equation}
here, the symbol $\left.  {}\right\vert _{\mathbf{X}_{\text{IE}}}$means that
$\mathbf{X}_{\text{IE}}$ are held fixed.

\item[(3)] When the system is in internal equilibrium, its various parts must
be in equilibrium with each other. Otherwise, there would be irreversible
entropy generation.

\item[(4)] It also follows from (3) that all the arguments that one uses to
follow the consequences of equilibrium can be applied to different parts of
$\Sigma$ that are in equilibrium. For example, the arguments that establish
that a system in equilibrium can only sustain uniform translation and rotation
\cite[Sect. 10]{Landau} can be applied without any change to a system in
internal equilibrium. As this result is going to play an important role in our
approach, we state it as a theorem.

\begin{theorem}
\label{marker_Uniform_Motion}There cannot be any relative motion between
different parts of $\Sigma$ for \emph{fixed} $\mathbf{X}_{\text{IE}}$\ in the
state of internal equilibrium. Thus, a system in internal equilibrium can only
sustain uniform translation and rotation \cite{Landau}.
\end{theorem}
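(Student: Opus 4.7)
The plan is to exploit the \emph{isolation argument} articulated just above the theorem statement: because $S_{\text{IE}}$ is maximal over all microstate distributions consistent with $\mathbf{X}_{\text{IE}}$, disconnecting $\Sigma$ from the medium while holding $\mathbf{X}_{\text{IE}}$ fixed turns $\Sigma$ into a genuinely isolated, equilibrated system whose entropy can no longer change. Consequently any assertion about an ordinary isolated equilibrium system (in particular the classical Landau argument of Sect. 10 of \cite{Landau}) applies verbatim to $\Sigma$ in internal equilibrium. I would carry out the proof in two steps corresponding to the two clauses of the theorem.

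\textbf{Step 1 (no internal relative motion).} Suppose, for contradiction, that two subsystems of $\Sigma$ had non-vanishing relative velocity $\mathbf{v}_{k}-\mathbf{v}_{l}\neq \mathbf{0}$ (translational or rotational, the argument is the same). As anticipated in the paragraph following Eq. (\ref{velocity_decomposition}) and in the discussion of the viscous stress $\sigma_{ij}^{\prime}$, any such relative motion feeds the viscous terms in Eq. (\ref{Total_Work}) and yields a strictly positive irreversible entropy contribution once the observables $\mathbf{X}_{\text{IE}}$ are held fixed. This directly contradicts the already-established identity (\ref{IE_entropy_production}), namely $\left.d_{\text{i}}S_{\text{IE}}\right\vert_{\mathbf{X}_{\text{IE}}}=0$. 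Hence the only velocity field compatible with internal equilibrium is one for which $\mathbf{v}_{k}-\mathbf{v}_{l}=\mathbf{0}$ for every pair of subsystems, i.e.\ a velocity field describing rigid motion of $\Sigma$ as a whole.

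\textbf{Step 2 (uniform translation plus uniform rotation).} Having reduced to rigid motion, I would decompose the energy of $\Sigma$, treated now as an isolated equilibrium body, as $E = \tfrac{1}{2}M\mathbf{V}^{2} + \tfrac{1}{2}\boldsymbol{\Omega}\!\cdot\!\hat I\boldsymbol{\Omega} + E^{(\text{i})}$, where $M\mathbf{V}$ and $\hat I\boldsymbol{\Omega}$ are the total linear and angular momenta (themselves among the observables in $\mathbf{X}_{\text{IE}}$, as noted in Sect.\ \ref{marker_averages}) and $E^{(\text{i})}$ is the internal energy in the co-moving frame. Since the entropy depends only on $E^{(\text{i})}$ and the remaining observables, maximizing $S$ at fixed total energy, momentum and angular momentum determines $\mathbf{V}$ and $\boldsymbol{\Omega}$ uniquely as the center-of-mass translational velocity and the (principal-axis) rigid rotational angular velocity, respectively; this is precisely the calculation of \cite[Sect. 10]{Landau}. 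Combining the two steps yields the theorem.

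\textbf{Main obstacle.} The non-routine point is the appeal to Step 1: one must be sure that the \emph{viscous} entropy production really is unavoidable whenever there is any internal relative motion, and is not merely an artifact of a particular constitutive assumption on $\sigma_{ij}^{\prime}$. The clean way around this is to invoke the very definition of internal equilibrium through equiprobability (Eq. (\ref{Internal_equilibrium})): any internal relative velocity distinguishes the allowed microstates and breaks equiprobability at fixed $\mathbf{X}_{\text{IE}}$, so that $S<\ln W$ strictly and the system has not yet attained internal equilibrium—a contradiction independent of any hydrodynamic modelling of the dissipation.
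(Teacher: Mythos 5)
Your proposal follows essentially the same route as the paper's proof: fix $\mathbf{X}_{\text{IE}}$, isolate $\Sigma$ so that internal equilibrium becomes genuine equilibrium of an isolated body, and then apply the argument of Landau and Lifshitz \cite[Sect. 10]{Landau} without change. The only remark worth making is that your Step 1 (viscous dissipation, or the equiprobability-breaking fallback, whose microscopic justification is shaky since a velocity field is a macroscopic average and does not by itself label microstates) is superfluous: the entropy maximization at fixed total energy, momentum and angular momentum that you invoke in Step 2 is precisely the Landau calculation, and it already excludes relative motion between parts and yields uniform translation plus rigid rotation in one stroke, which is all the paper uses.
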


\begin{proof}
We refer the reader to Landau and Lifshitz \cite[Sect. 10]{Landau} for the
details. Their argument applies without any change to an isolated system in
equilibrium. We easily extend their argument by considering our system at some
instant $t$ when it has $\mathbf{X}_{\text{IE}}=\mathbf{X}(t)$. We keep
$\mathbf{X}(t)$ fixed at $\mathbf{X}_{\text{IE}}$ by isolating the system from
the medium. The arguments now apply without any change to the system in
internal equilibrium. This proves the theorem.
\end{proof}

The axis of the uniform rotation must be a principal axis of the instantaneous
moment of inertia of the system. Otherwise, the system will undergo precession
in space \cite{Landau-Mechanics} and the rotation will not be uniform.

\item[(5)] Even with internal equilibrium in the system, there are both
elastic and inelastic or plastic deformations \cite[Sect. 7.4.2]{Kuiken},
which result in viscoelasticity in the system.
\end{enumerate}

If and only if the system is under internal equilibrium, the derivatives of
$S(t)$ with respect to $\mathbf{X}(t)$ have the significance of the
\emph{field} variables, also called the \emph{conjugate} variables, which we
will denote by $\mathbf{y}(t)$ or $\mathbf{Y}(t):$
\begin{equation}
\mathbf{y}(t)\equiv\frac{\mathbf{Y}(t)}{T(t)}\equiv\left(  \frac{\partial
S(t)}{\partial\mathbf{X}(t)}\right)  _{\mathbf{X}^{\prime}(t)}
\label{Field_Variables}%
\end{equation}
where $\mathbf{X}^{\prime}(t)$ denotes all other elements of $\mathbf{X}(t)$
except the one used in the derivative; compare with Eq. (\ref{Medium_Fields}).
The temperature $T(t)$ and the pressure $P(t)$ are defined in the customary
way by
\begin{equation}
y_{E}(t)\equiv\frac{1}{T(t)}=\left(  \frac{\partial S(t)}{\partial
E(t)}\right)  _{\mathbf{X}^{\prime}(t)},y_{V}(t)\equiv\ \ \frac{P(t)}%
{T(t)}=\left(  \frac{\partial S(t)}{\partial V(t)}\right)  _{\mathbf{X}%
^{\prime}(t)} \label{Field_Variables_0}%
\end{equation}
where $\mathbf{X}^{\prime}(t)$ denotes all other elements of $\mathbf{X}(t)$
except $E(t)$ or $V(t),$ respectively in the two derivatives. The pair of
quantities $\mathbf{X}(t),\mathbf{Y}(t)$ or $\mathbf{X}(t),\mathbf{y}(t)$ are
called \emph{conjugate} to each other.

The definitions of the conjugate fields give us an alternative way to
interpret the internal equilibrium. We imagine bringing the system in contact
with another medium whose field variables are also $\mathbf{Y}_{\text{IE}}$,
where
\[
\mathbf{y}_{\text{IE}}\equiv\frac{\mathbf{Y}_{\text{IE}}}{T_{\text{IE}}}%
\equiv\left.  \left(  \frac{\partial S(t)}{\partial\mathbf{X}(t)}\right)
\right\vert _{\mathbf{X}_{\text{IE}}};
\]
here $\left.  {}\right\vert _{\mathbf{X}_{\text{IE}}}$ denotes the value of
the derivative at $\mathbf{X}_{\text{IE}}$. To distinguish this medium from
the first medium that is characterized by $\mathbf{Y}_{0}=(T_{0},P_{0}%
,\cdots)$, we denote the first medium by $\widetilde{\Sigma}(\mathbf{Y}_{0})$,
and the new medium by $\widetilde{\Sigma}(\mathbf{Y}_{\text{IE}})$. The system
$\Sigma$ in internal equilibrium with observables $\mathbf{X}_{\text{IE}}$ is
in equilibrium with the new medium $\widetilde{\Sigma}(\mathbf{Y}_{\text{IE}%
})$. If the system is isolated by disconnecting it from $\widetilde{\Sigma
}(\mathbf{Y}_{\text{IE}})$, then there cannot be any change in the macrostate
of the system as all its observables remain constant at $\mathbf{X}%
_{\text{IE}}$. In other words, the system $\Sigma$ after being isolated
remains in equilibrium if it was originally in internal equilibrium, as noted
earlier. This will not be true if there are internal variables, to which we
will turn to in a moment.

\subsection{Zeroth and the Second Law}

All the above discussion has been for the entire system, but can be easily
extended to a system consisting of various subsystems $\sigma_{k}$, each in
internal equilibrium. Let us consider the system shown in Fig.
\ref{Fig_two_boxes_modified_system}, which consists of two subsystems
$\sigma_{1}$ and $\sigma_{2}$\ whose internal temperatures are $T_{1}(t)$ and
$T_{2}(t)>T_{1}(t)$, respectively. Let their respective energies be $E_{1}(t)$
and $E_{2}(t)$, with their sum denoted by $E(t)$. We consider all other
observables fixed for both subsystems. We first consider the system to be
isolated with no medium, as shown in Fig. \ref{Fig_two_boxes_modified_system}%
(b). Then $E$ remains constant, but not the individual energies. The
irreversible entropy gain for the entire system is%
\begin{equation}
d_{\text{i}}^{(\text{E})}S=dQ\left(  \frac{1}{T_{1}(t)}-\frac{1}{T_{2}%
(t)}\right)  >0 \label{Isolated_S_Generation}%
\end{equation}
during an infinitesimal heat transfer $dQ$ from the hotter subsystem to the
colder subsystem. Here,
\[
F_{\text{E}}(t)\equiv\frac{1}{T_{1}(t)}-\frac{1}{T_{2}(t)}%
\]
plays the role of the thermodynamic force $F_{\text{E}}(t)$ for the isolated
system, and should be contrasted with the same quantity for an open system in
Eq. (\ref{Force_y}). As the system is isolated, this is also the entropy
change $d^{\left(  \text{E}\right)  }S=d^{\left(  \text{E}\right)  }%
S_{1}+d^{\left(  \text{E}\right)  }S_{2}$ for the system, with
\[
d^{\left(  \text{E}\right)  }S_{1}=\frac{dQ}{T_{1}(t)},\ \ \ d^{\left(
\text{E}\right)  }S_{2}=-\frac{dQ}{T_{2}(t)}%
\]
for the two subsystems as follows from Eq. (\ref{Entropy_Change}). We now
consider the system to be in a medium at a fixed temperature $T_{0}$, as shown
in Fig. \ref{Fig_two_boxes_modified_system}(a). We take $T_{0}$ to be the
equilibrium temperature of the isolated subsystems; it is intermediate between
$T_{1}(t)$ and $T_{2}(t)$. The infinitesimal heat given out by the hotter
subsystem is now $dQ=dQ^{\prime}+dQ_{2}.$ The heat gained $dQ^{\prime}+dQ_{1}%
$by the colder subsystem must be exactly the heat loss $dQ$, since we are
dealing with an isolated system. Therefore, $dQ_{1}\equiv dQ_{2}$, so that the
entropy of the medium does not change. As the entropy change of the isolated
system is equal to that of the system, we have%
\begin{equation}
d^{\left(  \text{E}\right)  }S=\frac{dQ^{\prime}}{T_{1}(t)}-\frac{dQ^{\prime}%
}{T_{2}(t)}-\frac{dQ_{2}}{T_{2}(t)}+\frac{dQ_{1}}{T_{1}(t)}=dQ\left(  \frac
{1}{T_{1}(t)}-\frac{1}{T_{2}(t)}\right)  >0, \label{Medium_S_Generation}%
\end{equation}
since $dQ=dQ^{\prime}+dQ_{1}=dQ^{\prime}+dQ_{2}$. This is the same
irreversible entropy gain in Eq. (\ref{Isolated_S_Generation}) for the
isolated system in Fig. \ref{Fig_two_boxes_modified_system}(b). This should
not be surprising as none of the heat transfers is isothermal. Thus, bringing
the isolated system $\Sigma$, which consists of two subsystems, in contact
with a medium, characterized by the equilibrium temperature $T_{0}$ of
$\Sigma$, does not affect the irreversible entropy production. It is easy to
see that the arguments can be extended to many subsystems and to other field
variables. We will not pause here to do that.

At this moment, it is important to follow another important consequence of the
thermodynamic force $F_{\text{E}}(t)$, which vanishes if and only if the
system has come to \emph{thermal equilibrium}. This is the \emph{zeroth} law
of thermodynamics in terms of the internal temperatures of the two subsystems.
Thus, the internal instantaneous temperature plays the role of a thermodynamic
temperature in that the heat always flows from a hotter subsystem to a colder
subsystem. The above result can be easily generalized to many subsystems.

Let us now consider the volumes of the two subsystems to adjust as they come
to equilibrium. All other extensive observables are considered fixed. Then the
same reasoniong as above results in%
\[
d_{\text{i}}^{(\text{V})}S=dV(t)\left(  \frac{P_{1}(t)}{T_{1}(t)}-\frac
{P_{2}(t)}{T_{2}(t)}\right)  >0
\]
where $dV(t)$ is change in the volume of $\sigma_{1}$; the volume of the
isolated system remains unchanged. The corresponding thermodynamic force%
\[
F_{\text{V}}(t)\equiv\frac{P_{1}(t)}{T_{1}(t)}-\frac{P_{2}(t)}{T_{2}(t)}%
\]
vanishes when the system comes to \emph{mechanical equilibrium}. It usually
happens that thermal equilibrium requires mechanical equilibrium in that the
forces exerted on each other by any two subsystems must be equal and opposite.
Thus, the conditions for the equilibrium is that not only the pressure $P(t)$
but also the temperature $T(t)$ have the same value in both subsystems.

\subsection{Presence of Internal Variables}

We can easily extend the above discussion to include internal variables
$\mathbf{I}(t)$ by replacing $\mathbf{X}(t)$ by $\mathbf{Z}(t)$. In the
context of internal variables, their conjugate variables are known as
"\emph{affinity}." The general form of Eq. (\ref{Field_Variables}) is%
\begin{equation}
\mathbf{w}(t)\equiv\frac{\mathbf{W}(t)}{T(t)}\equiv\left(  \frac{\partial
S(t)}{\partial\mathbf{Z}(t)}\right)  _{\mathbf{Z}^{\prime}(t)},
\label{Field_Variables_1}%
\end{equation}
where $\mathbf{Z}^{\prime}(t)$ denotes all other elements of $\mathbf{Z}(t)$
except the one used in the derivative. The affinity $\mathbf{a}(t)$ is given
by
\begin{equation}
\mathbf{a}(t)\equiv\frac{\mathbf{A}(t)}{T(t)}\equiv\left(  \frac{\partial
S(t)}{\partial\mathbf{I}(t)}\right)  _{\mathbf{Z}^{\prime}(t)},
\label{Field_Variables_2}%
\end{equation}
so that $\mathbf{w}(t)$ consists of
\begin{equation}
\mathbf{y}(t)\equiv\left(  \frac{\partial S(t)}{\partial\mathbf{X}(t)}\right)
_{\mathbf{Z}^{\prime}(t)} \label{Field_Variables_3}%
\end{equation}
and $\mathbf{a}(t)$. The generalization of the thermodynamic force in Eq.
(\ref{Force_y}) is given by%
\begin{equation}
F_{\text{Z}}(t)\equiv F\left[  w(t)\right]  \equiv w(t)-w_{0}, \label{Force_w}%
\end{equation}
with $w_{0}$ representing the equilibrium value of $w$ corresponding to the
state variable $Z$.

Let us now consider the system to be in internal equilibrium, while the medium
containing it is $\widetilde{\Sigma}(\mathbf{Y}_{0},\mathbf{A}_{0})$, where
$\mathbf{Y}_{0},\mathbf{A}_{0}$ characterize the medium. If we now disconnect
the system from this medium, but bring it in contact with another medium
$\widetilde{\Sigma}(\mathbf{Y}_{\text{IE}},\mathbf{A}_{\text{IE}})$, where
$\mathbf{Y}_{\text{IE}}$ and $\mathbf{A}_{\text{IE}}$ are the field and
affinity vectors associated with the system in internal equilibrium, then the
system will remain in equilibrium with this medium. This is no different than
what we have said above in the absence of any internal variable $\mathbf{I}$.

But the situation is very different when we try to keep the system isolated.
Since internal variables are not under the control of the observer, they
cannot be manipulated to remain constant after isolation and will continue to
change. Thus, after the isolation, $\mathbf{X}_{\text{IS}}$, the value of
$\mathbf{X}(t)$ at the instance of isolation, will remain constant, but
$\mathbf{I}(t)$\ will not remain fixed at its value $\mathbf{I}_{\text{IS}}$
at the instant it was isolated. This time variation in the internal variables
is what drives this isolated system towards its equilibrium state during which
its entropy will continuously increase. This is very different from the case
above when there were no internal variables. Thus, a system in internal
equilibrium cannot be isolated as an equilibrium system if there are internal
variables present. It can only remain in equilibrium with the medium
$\widetilde{\Sigma}(\mathbf{Y}_{\text{IS}},\mathbf{A}_{\text{IS}})$.

\subsection{Maxwell Relations}

As the concept of internal equilibrium is not that different from the concept
of equilibrium, it should not come as a surprise that there are analogs of
Maxwell relations. We recall that in equilibrium thermodynamics, the standard
Maxwell relations for a system characterized by only $S$ and $V$ (fixed $N$)
are as follows in terms of Jacobians:
\begin{align}
\frac{\partial(T_{0},S,N)}{\partial(V,S,N)}  &  =\frac{\partial(P_{0}%
,V,N)}{\partial(V,S,N)},\ \ \ \frac{\partial(T_{0},S,N)}{\partial(P_{0}%
,S,N)}=\frac{\partial(P_{0},V,N)}{\partial(P_{0},S,N)},\nonumber\\
\frac{\partial(P_{0},V,N)}{\partial(T_{0},V,N)}  &  =\frac{\partial
(T_{0},S,N)}{\partial(T_{0},V,N)},\ \ \ \frac{\partial(T_{0},S,N)}%
{\partial(P_{0},T_{0},N)}=\frac{\partial(P_{0},V,N)}{\partial(P_{0},T_{0},N)}.
\label{Maxwell_Jacobians}%
\end{align}
All four Maxwell relations use the same numerators $\partial(T_{0},S,N)$ and
$\partial(P_{0},V,N)$. They use different denominators. Thus, they can all be
combined into one compact relation that can be simply written as
\begin{equation}
\partial(T_{0},S,N)=\partial(P_{0},V,N). \label{Maxwell_Compact}%
\end{equation}
Here, the relation only has a meaning if each side is divided by one of the
possible denominators $\partial(V,S,N),\partial(P_{0},S,N),\partial
(T_{0},V,N)$ and $\partial(P_{0},T_{0},N)$ on both sides.

We now consider a system. For simplicity, we assume that only one internal
variable, which we denote by $\xi$, characterizes this system. We assume the
system is in internal equilibrium. To simplify the notation, we will suppress
$N$ but use the additional variable $\xi$ along with the other two variables.
By considering the system in a medium $\widetilde{\Sigma}(\mathbf{Y}%
_{\text{IS}},\mathbf{A}_{\text{IS}})$, we recognize that the system is in
equilibrium. Thus, the Maxwell relations now for fixed $\xi$\ can be compactly
represented by%
\begin{equation}
\partial(T,S,\xi)=\partial(P,V,\xi), \label{Maxwell_Relation_TSPV/N}%
\end{equation}
by replacing $T_{0},P_{0}$ by $T(t)=T_{\text{IS}},P(t)=P_{\text{IS}},$ which
for simplicity have been written as $T,P$. The extension to the Maxwell
relation in terms of the internal variable requires considering the pair
$A,\xi$ in place of $T,S$ or $P,V$, where $A$ denotes the conjugate affinity
to $\xi$. For fixed $V$, the Maxwell relation is%
\begin{equation}
\partial(T,S,V)=\partial(A,\xi,V), \label{Maxwell_Relation_TSPV/N_1}%
\end{equation}
and for fixed $S$, the Maxwell relation is%
\begin{equation}
\partial(P,V,S)=-\partial(A,\xi,S). \label{Maxwell_Relation_TSPV/N_2}%
\end{equation}
If we consider the system in the medium $\widetilde{\Sigma}(\mathbf{Y}%
_{\text{IS}},\mathbf{A}_{\text{IS}})$, then we have the standard equilibrium
Maxwell relations similar to \ those in Eq. (\ref{Maxwell_Jacobians}).
However, we are interested in the possible "non-equilibrium" Maxwell relations
when the system is in the medium $\widetilde{\Sigma}(\mathbf{Y}_{0}%
,\mathbf{A}_{0})$. We first consider fixed $\xi$. The Maxwell relation in Eq.
(\ref{Maxwell_Relation_TSPV/N}) turns into the identity

\bigskip%
\begin{equation}
\frac{\partial(T,S,\xi)}{\partial(P_{0},S,\xi)}=\frac{\partial(P,V,\xi
)}{\partial(P_{0},S,\xi)} \label{Maxwell_Enthalpy/n}%
\end{equation}
that is
\[
\left(  \frac{\partial T}{\partial P_{0}}\right)  _{S,\xi}=\left(
\frac{\partial P}{\partial P_{0}}\right)  _{S,\xi}\left(  \frac{\partial
V}{\partial S}\right)  _{P_{0},\xi}-\left(  \frac{\partial V}{\partial P_{0}%
}\right)  _{S,\xi}\left(  \frac{\partial P}{\partial S}\right)  _{P_{0},\xi}%
\]
for the enthalpy and
\begin{equation}
\frac{\partial(T,S,\xi)}{\partial(T_{0},V,\xi)}=\frac{\partial(P,V,\xi
)}{\partial(T_{0},V,\xi)}, \label{Maxwell_Helmholtz/n}%
\end{equation}
that is
\[
\left(  \frac{\partial P}{\partial T_{0}}\right)  _{V,\xi}=\left(
\frac{\partial T}{\partial T_{0}}\right)  _{V,\xi}\left(  \frac{\partial
S}{\partial V}\right)  _{T_{0},\xi}-\left(  \frac{\partial S}{\partial P_{0}%
}\right)  _{V,\xi}\left(  \frac{\partial T}{\partial V}\right)  _{T_{0},\xi}%
\]
for the Helmholtz free energy; the details will be given in a separate
publications \cite{Gujrati-IV}. One can also obtain Maxwell relations at fixed
$V$ or $S$. For example, we find the following Maxwell relation
\[
\frac{\partial(P,V,S)}{\partial(\xi,V,S)}=-\frac{\partial(A,\xi,S)}%
{\partial(\xi,V,S)},
\]
that is
\[
\left(  \frac{\partial P}{\partial\xi}\right)  _{V,S}=\left(  \frac{\partial
A}{\partial V}\right)  _{\xi,S}.
\]
Similarly, from the Maxwell relation%
\[
\frac{\partial(T,S,V)}{\partial(\xi,S,V)}=\frac{\partial(A,\xi,V)}%
{\partial(\xi,S,V)},
\]
we find%
\[
\left(  \frac{\partial T}{\partial\xi}\right)  _{V,S}=-\left(  \frac{\partial
A}{\partial S}\right)  _{\xi,V}.
\]
The Maxwell relations in Eq. (\ref{Maxwell_Relation_TSPV/N}%
-\ref{Maxwell_Relation_TSPV/N_2}) contain the internal fields of the system
and not of the medium \cite{Gujrati-IV} when the system is out of equilibrium.
Obviously, the extensive variables in the relation must refer to the system.

\subsection{Internal Equilibrium Thermodynamics versus Local Thermodynamics}

We will argue in Sect. \ref{marker_entropy_additivity} that the concept of
internal equilibrium, which we adopt, is no different than the concept of
local equilibrium used in Eq. (\ref{local_equilibrium_entropy}) or in the
Gibbs fundamental relation in Eq. (\ref{Gibbs_fundamental_relation}). Despite
this, the two approaches based on the concept of local and internal
equilibrium, respectively, differ in some important ways that will be
elaborated later. Here, we briefly mention these differences. The first
important difference is that our approach is truly a statistical mechanical
approach for non-equilibrium systems. Once the probabilities of microstates
are known, the averages and other moments of all state variables that are used
to identify the microstates and the entropy are determined for the macrostate.
For example, the average fluctuation in $\mathbf{Z}$ for a body is given by%
\begin{equation}
\left[  \Delta\mathbf{Z}(t)\right]  ^{2}\equiv%
{\textstyle\sum\limits_{i}}
p_{i}(t)\left[  \mathbf{Z}_{i}-\overline{\mathbf{Z}}(t)\right]  ^{2},
\label{Fluctuations}%
\end{equation}
where%
\begin{equation}
\overline{\mathbf{Z}}(t)\equiv%
{\textstyle\sum\limits_{i}}
p_{i}(t)\mathbf{Z}_{i} \label{Average}%
\end{equation}
is the average $\overline{\mathbf{Z}}$ for the body; compare with Eq.
(\ref{Average 0}). As the microstate probabilities exist even when the system
is out of equilibrium, these averages including the entropy exist at all times
even if the system is not in internal equilibrium. Their temporal variations
are controlled by the dynamics governing the system and give rise to various
balance equations; see for example Eq. (\ref{Internal_Energy_Balance_Eq}). The
use of a probabilistic approach in the determination of the entropy and other
statistical properties means that the dynamics in the system must be
\emph{stochastic} and not deterministic, as the entropy remains constant in a
deterministic dynamics \cite{Gujrati-Symmetry}. A consequence of the
stochastic nature is that irreversible dissipation becomes an integral part of
the statistical description of any system, which then results in the law of
increase of entropy as captured by Eq. (\ref{Second_Law_0}). It is this law
that was the foundation of our approach in I, and which we develop further in
this work.

The second difference is in the identification of the thermodynamic potentials
for the open system, which has been discussed in I and will be further
elaborated in Sect. \ref{marker_thermodynamic_potentials} and again in Sect.
\ref{marker_inhomogeneous_arbitrary}. It is discovered in our approach that
thermodynamic potentials contain the field parameters (temperature, pressure,
chemical potentials,etc.) of the medium, which determine how far an open
system is from its equilibrium with the medium. These thermodynamic potentials
satisfy the second law in that they do \emph{not} increase in a spontaneous
process. In contrast, the form given in Eq. (\ref{Local_Gibbs_Free_Energy})
for the local Gibbs free energy density or its integral over the volume of the
system does not always satisfy this requirement; see the discussion
surrounding $\widehat{G}(t)$ in Eq. (22) of I. However, there is no
discrepancy for the internal energy in the two approaches as both approaches
give the same Gibbs fundamental relation. This is because the fundamental
relation in both approaches does \emph{not} depend on the\ field parameters of
the medium, but include the instantaneous field parameters of the system.

The third difference is in the reversible and irreversible entropy changes,
which depend on the equilibrium value $w_{0}$ ($y_{0}$ or $a_{0}$)\ of the
conjugate field $w$ ($y$ or $a$), as is easily seen from Eqs.
(\ref{Reversible_Entropy_Change}), (\ref{Irreversible_Entropy_Change}),
(\ref{Entropy_Change_V}) and (\ref{Entropy_Change_V_i}). In the
localthermodynamics, these quantities are determined by the local conjugate fields.

The presence of the medium field variable in the thermodynamic potentials of
the system does not mean that the situation would be any different if we
consider the system or subsystems to form the isolated system $\Sigma_{0}$
without any medium, as shown in Fig. \ref{Fig_two_boxes_modified_system}(b).
In this case, which we consider in Sect. \ref{marker_comparable_bodies}, we
again find that the thermodynamic potentials do not depend on the field
variables of the system or subsystems. The role of the field variables of the
medium $\widetilde{\Sigma}(\mathbf{Y}_{0},\mathbf{A}_{0})$ are now played by
the equilibrium conjugate variables $\mathbf{Y}_{0},\mathbf{A}_{0}$ of the
system or subsystems. This then leads us to the following important theorem:

\begin{theorem}
\label{Theorem_Isolated_Bodies}An isolated system $\Sigma$ is no different
than the open system $\Sigma$ in an extensively large medium $\widetilde
{\Sigma}(\mathbf{Y}_{0},\mathbf{A}_{0})$, provided the medium is appropriately
chosen to represent the equilibrium state (in terms of $\mathbf{Y}%
_{0},\mathbf{A}_{0}$) of the isolated system $\Sigma$.
\end{theorem}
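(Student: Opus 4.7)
The plan is to generalize the explicit two-subsystem calculation carried out between Eqs. (\ref{Isolated_S_Generation}) and (\ref{Medium_S_Generation}) to an arbitrary collection of subsystems and an arbitrary set of state variables $\mathbf{Z}$. Let $\Sigma$ consist of subsystems $\sigma_{1},\dots,\sigma_{N_{\text{S}}}$, each in internal equilibrium with state variables $\mathbf{Z}_{k}(t)$ and conjugate fields $\mathbf{w}_{k}(t)$, and let $\mathbf{Y}_{0},\mathbf{A}_{0}$ (equivalently $\mathbf{w}_{0}$) denote the common equilibrium values attained when the isolated $\Sigma$ fully relaxes. First I would compute $d_{\text{i}}S$ directly for the isolated case: the Gibbs relation applied to each $\sigma_{k}$ together with the conservation constraint $\sum_{k}d\mathbf{Z}_{k}=0$ shows that each infinitesimal internal exchange between $\sigma_{j}$ and $\sigma_{k}$ contributes $d_{\text{i}}S=d\mathbf{Z}\cdot(\mathbf{w}_{j}-\mathbf{w}_{k})\ge 0$, reproducing Eq. (\ref{Isolated_S_Generation}) in generalized form.

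Next I would place $\Sigma$ in the medium $\widetilde{\Sigma}(\mathbf{Y}_{0},\mathbf{A}_{0})$ and route every exchange through the medium. For any transfer $d\mathbf{Z}$ leaving $\sigma_{j}$ and arriving at $\sigma_{k}$ via the medium, the two legs contribute $d\mathbf{Z}\cdot(\mathbf{w}_{j}-\mathbf{w}_{0})$ and $d\mathbf{Z}\cdot(\mathbf{w}_{0}-\mathbf{w}_{k})$ respectively, using Eqs. (\ref{Irreversible_Entropy_Change})--(\ref{Entropy_Change_V_i}) with $\mathbf{w}_{0}$ as the medium field. The intermediate $\mathbf{w}_{0}$ telescopes and the total irreversible generation reduces identically to the isolated-system expression. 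Because the medium is in internal equilibrium at $\mathbf{w}_{0}$, Eq. (\ref{IE_entropy_production}) forces $d_{\text{i}}\widetilde{S}=0$, and because the medium is characterized by the \emph{equilibrium} fields of the isolated $\Sigma$, the net $d\mathbf{Z}$ into the medium must balance on average exactly as $dQ_{1}\equiv dQ_{2}$ did in the two-box example of Fig. \ref{Fig_two_boxes_modified_system}; this secures $d\widetilde{S}=0$ as well.

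The equilibrium endpoints also coincide: in the isolated case equilibrium requires $\mathbf{w}_{k}=\mathbf{w}_{0}$ for all $k$ so that every thermodynamic force $F_{\text{Z}}(t)$ in Eq. (\ref{Force_w}) vanishes, and in the open case with medium $\widetilde{\Sigma}(\mathbf{Y}_{0},\mathbf{A}_{0})$ the same condition $\mathbf{w}_{k}=\mathbf{w}_{0}$ characterizes the equilibrium state of $\Sigma$ with the medium. Hence the two settings agree at every stage of the approach to equilibrium---the rates of $d_{\text{i}}S$ and of each $d\mathbf{Z}_{k}$ are identical---and at equilibrium itself. This yields the stated equivalence.

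The hard part will be justifying the medium balance $d_{\text{i}}\widetilde{S}=0$ and the telescoping when some state variables are \emph{internal} variables $\mathbf{I}_{k}$, since these cannot be externally controlled and ``transfer through the medium'' is a bookkeeping device rather than a physical exchange across a wall. Handling this requires invoking the quasi-independence and entropy additivity of Sect.~\ref{marker_entropy_additivity} to express the total entropy as a sum over subsystems plus medium, and then observing that every appearance of $\mathbf{A}_{0}$ enters only through differences $\mathbf{a}_{k}(t)-\mathbf{a}_{0}$ that telescope away upon summation, exactly as for the observables. Once this replacement is seen to preserve the bilinear structure $\mathrm{d}_{\text{i}}S=\sum_{k} F_{\text{Z},k}\,dZ_{k}$, the theorem follows without further work.
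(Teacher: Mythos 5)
Your proposal is correct and takes essentially the same route as the paper: generalize the two-box computation, embed the bodies in a medium held at the equilibrium conjugate fields $\mathbf{Y}_{0},\mathbf{A}_{0}$ of the isolated system so that the medium's entropy is unaffected (the balance $dQ_{1}\equiv dQ_{2}$), and note that the per-subsystem generation $d_{\text{i}}s_{k}=\mathbf{F}\left[\mathbf{w}_{k}\right]\cdot d\mathbf{z}_{k}\geq0$ telescopes to the isolated-system expression, which is exactly how Sect.~\ref{marker_comparable_bodies} and the General Discussion establish Theorem~\ref{Theorem_Isolated_Bodies}. The only ingredient you omit is the paper's auxiliary ``without any medium'' argument (the interfacial point E whose temperature is exactly $T_{0}$), which is the paper's way of answering precisely the bookkeeping worry you raise about routing exchanges through a fictitious medium.
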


In particular, the reversible entropy change and the irreversible entropy
generation in the two cases (a) and (b) in Fig.
\ref{Fig_two_boxes_modified_system} are exactly the same, as they both depend
on the equilibrium conjugate variables $\mathbf{Y}_{0},\mathbf{A}_{0}$ of the
system or subsystems. An example of this is already seen in Eqs.
(\ref{Reversible_Entropy_Change}) and (\ref{Irreversible_Entropy_Change}). We
defer the proof of this theorem to Sect. \ref{marker_comparable_bodies}.

\section{Additivity of Entropy and Quasi-independence
\label{marker_entropy_additivity}}

For simplicity of discussion, we consider all systems to be stationary in this
section, so that we only deal with internal energies. It was noted in Sect.
\ref{marker_averages} and recently reviewed in \cite{Gujrati-Symmetry}, the
entropy of an isolated body is given by the Gibbs formulation in Eq.
(\ref{Gibbs entropy}), regardless of whether it is in equilibrium or not.
There is no reason to believe that this formulation also applies to an
\emph{open} body under all conditions, though its applicability in equilibrium
is not in dispute \cite{Landau}. We now prove that this formulation also
applies to an open body under a condition that is always taken for granted. We
will specifically consider our system $\Sigma$ at some instant $t$, but the
conclusion is valid for all bodies.\ Let us consider all allowed microstates
of $\Sigma$ with fixed number of particles $N$; we index these microstates by
$i=1,2,\cdots,W(t)$. Theses microstates correspond to all possible energies
and volumes of the system. We use $\widetilde{\alpha}$ to denote the
microstates of $\widetilde{\Sigma}$ whose number of particles $\widetilde{N}$
is also fixed. A specification of the microstates $i$ and $\widetilde{\alpha}$
gives a unique microstate specification $\alpha$ representing a microstate of
the isolated system $\Sigma_{0}$. Hence, the number of microstates
$W_{0}(t)\ $of the $\Sigma_{0}$\ is the product
\begin{equation}
W_{0}(t)=W(t)\widetilde{W}(t), \label{Microstate_Numbers}%
\end{equation}
where $W(t)$ and $\widetilde{W}(t)$ are respectively the number of all
\emph{allowed} microstates \cite{Gujrati-Symmetry} of $\Sigma$\ and
$\widetilde{\Sigma},$ respectively at that instant $t$.\ 

As the concept of microstates does not depend on the nature of interactions
(they exist even in the absence of interaction), the above equation is valid
for all kinds of interactions. Let $E_{0},E(t)$ and $\widetilde{E}(t)$ denote
the internal energies of $\Sigma_{0},\Sigma$\ and $\widetilde{\Sigma}$,
respectively. Let $E_{0}^{(\text{int})}(t)$ denote the mutual interaction
energy between $\Sigma$\ and $\widetilde{\Sigma}$ at that instant. For
short-ranged interactions, this energy is determined by the surface $\partial
V(t)$ of $\Sigma$. For convenience, we assume that this entire area is exposed
to the surrounding medium, as shown in Fig. \ref{Fig_Systems}. If long-ranged
interactions are also present, or if the system size is very small, this
energy may depend on the entire volume $V(t)$ of $\Sigma$. In all cases, this
energy is defined by the following identity%
\begin{equation}
E_{0}\equiv E(t)+\widetilde{E}(t)+E_{0}^{(\text{int})}(t).
\label{Interaction_Energy}%
\end{equation}
Because of the smallness of $\Sigma$ relative to $\Sigma_{0}$,
$E(t)<<\widetilde{E}(t)$. If it happens that
\begin{equation}
E_{0}^{(\text{int})}(t)<<E(t), \label{Quasi-independence}%
\end{equation}
we call $\Sigma$\ and $\widetilde{\Sigma}$ \emph{quasi-independent}. For
quasi-independence, the linear size of the system must be at least as large
as, but hopefully larger than, the correlation length in the system. In this
case, we can neglect their mutual interactions, which is a common practice in
the discipline \cite{Landau}. The quasi-independence of the system and the
medium holds to a very high degree of accuracy for all short-ranged
interactions \cite{GujratiNETI}, provided the system itself is macroscopically
large so that the ratio of its surface to volume is insignificant. In most
cases, this will also ensure that the correlation length is small compared to
the size of the system. If there are also long-ranged interactions, then we
can still have quasi-independence provided these interactions are relatively
weak and shielding occurs and that Eq. (\ref{Quasi-independence}) and the
condition on the correlation length hold simultaneously.

Let us now assume that $\Sigma$\ and $\widetilde{\Sigma}$ quasi-independent.
In this case, the microstates of the two systems are independent of each other
to a very high degree of accuracy and we have (we suppress all state variables
for simplicity of notation) \
\[
p_{\alpha}(t)=p_{i}(t)p_{\widetilde{\alpha}}(t).
\]
Now, using
\[
\ln p_{i}(t)p_{\widetilde{\alpha}}(t)=\ln p_{i}(t)+\ln p_{\widetilde{\alpha}%
}(t),
\]
and the sum rule%
\[%
{\textstyle\sum\limits_{\widetilde{\alpha}}}
p_{\widetilde{\alpha}}(t)=1,\ \
{\textstyle\sum\limits_{i}}
p_{i}(t)=1,
\]
we find that
\[
S_{0}(t)\equiv-%
{\textstyle\sum\limits_{i}}
p_{i}(t)\ln p_{i}(t)-%
{\textstyle\sum\limits_{\widetilde{\alpha}}}
p_{\widetilde{\alpha}}(t)\ln p_{\widetilde{\alpha}}(t),
\]
where the two terms in the above equations represent the entropies of the
system and the medium%
\begin{equation}
S(t)=-%
{\textstyle\sum\limits_{i}}
p_{i}(t)\ln p_{i}(t),\ \ \ \widetilde{S}(t)=-%
{\textstyle\sum\limits_{\widetilde{\alpha}}}
p_{\widetilde{\alpha}}(t)\ln p_{\widetilde{\alpha}}(t), \label{Entropies}%
\end{equation}
respectively. This demonstration justifies the additivity of entropies%
\begin{equation}
S_{0}(t)=S(t)+\widetilde{S}(t) \label{Entropy Sum}%
\end{equation}
as a consequence of \emph{quasi-independence} so that%
\begin{equation}
E_{0}\equiv E(t)+\widetilde{E}(t) \label{Energy_Sum_Approx}%
\end{equation}
also holds to a very good approximation. Note that we have neither assumed the
medium nor the system to be in internal equilibrium in the above demonstration.

If the system and the medium fail to be quasi-independent because their mutual
interaction cannot be neglected, then Eq. (\ref{Quasi-independence}) is
violated. In this case, the presence of this interaction acts as a constraint
on $\Sigma_{0}.$ Consequently, the entropy now will be strictly less than the
above entropy in Eq. (\ref{Entropy Sum}). We denote this difference by
$S_{0}^{(\text{int})}(t)\leq0$ \cite{Gujrati-Athermal-Entropy}, which is
defined by the following identity%
\begin{equation}
S_{0}(t)\equiv S(t)+\widetilde{S}(t)+S_{0}^{(\text{int})}(t)\text{.}
\label{Entropy_Loss_Sum}%
\end{equation}
This identity reduces to the above additivity in Eq. (\ref{Entropy Sum})
provided
\begin{equation}
\left\vert S_{0}^{(\text{int})}(t)\right\vert <<S(t);
\label{Quasi-independence-Entropy}%
\end{equation}
compare with Eq. (\ref{Quasi-independence}). This inequality will in general
hold only if the interaction energy is also negligible.

If the strong inequality in Eq. (\ref{Quasi-independence-Entropy}) is not
satisfied, we have lost the additivity property of the entropy. Let us assume
that the strong inequality is satisfied for some large size of the system
$\Sigma$. Now, as the size of the system decreases, which is what will happen
on the way to considering physically infinitesimal volume elements used in Eq.
(\ref{Entropy_Additivity_Thermodynamics}), there comes a point where the
strong inequalities in Eqs. (\ref{Quasi-independence}) and
(\ref{Quasi-independence-Entropy}) are violated. This will destroy the
additivity of the entropy as exhibited in Eq. (\ref{Entropy Sum}).

The violation of entropy additivity occurs at intermediate sizes of the
system, somewhere between the macroscopic size where Eqs.
(\ref{Quasi-independence}) and (\ref{Quasi-independence-Entropy}) are valid,
and small local or microscopic size containing a small number $dN$ of
particles. For example, for $dN\approx10^{18},$ the surface to volume ratio
for the volume element $dV$ is about $10^{-6}$, implying an almost
imperceptible error\ in neglecting the interaction entropy $S_{0}%
^{(\text{int})}(t)$, provided the linear size of this region is large compared
to not only to the inter-particle separation \cite[p.1]{Landau-FluidMechanics}
but also the correlation length in the system. Under these conditions, the
integrand in Eq. (\ref{Entropy_Additivity_Thermodynamics}) truly refers to a
"physically" infinitesimal volume element containing a very large number of
particles. In this sense, our starting premise is similar to that adopted in
the conventional non-equilibrium thermodynamics
\cite{Donder,deGroot,Prigogine,Prigogine-old}. In particular, our concept of
internal equilibrium is no different than the concept of local equilibrium in
conventional non-equilibrium thermodynamics, as observed in I, except that we
require quasi-independence, which imposes the strong condition that not only
the interaction energy be small but also that the linear size be larger than
the correlation length. At present, there is some evidence that the
correlation length in a glass forming system appears to increasing as the
system approaches the glass transition \cite{Ottinger-2}.

The discussion above also clarifies that the additivity of entropy is a
consequence of the additivity of energy of various parts of the body and that
the interaction energies between them must be negligible. This additivity of
the energy and entropy was adopted in I. However, as we will be interested in
considering parts of $\Sigma$ as subsystems in this work, the additivity of
their entropy requires that their mutual interaction energies be also
negligibly small compared to their individual internal energies, and that
their linear sizes be large compared to the correlation lengths
\cite{Ottinger-2}. These requirements put a strong condition on the sizes of subsystems.

\section{Thermodynamic Potentials\label{marker_thermodynamic_potentials}}

\subsection{Fixed Number of Particles $N$ of the System $\Sigma$}

Despite similarities between our approach and that adopted in conventional
thermodynamics \cite{Donder,deGroot,Prigogine,Prigogine-old}, there were
important differences noted in I. One of these was the discovery that the
differences of the temperature and pressure of $\Sigma$\ and $\widetilde
{\Sigma},$ which are the same as that of $\Sigma$\ and $\Sigma_{0},$ play the
role similar to internal variables. The second difference was that the Gibbs
free energy in our approach exists even if the system is not in internal
equilibrium, and involves the temperature and pressure of $\Sigma_{0}$\ or
$\widetilde{\Sigma};$ of course,we assume$\ $that the medium is in internal
equilibrium; see I and Sect. \ref{marker_internal_equilibrium}. Under this
very weak assumption for the medium, its field variables such as the
temperature $T_{0}$, pressure $P_{0}$, etc. are well defined, and are
unaffected by whatever processes happen to be going on within the system
$\Sigma$ or whether $\Sigma$ is homogeneous or inhomogeneous. When the number
of particles $N$ in the system $\Sigma$ is held fixed, the appropriate
thermodynamic potential is the Gibbs free energy, which is identified as%
\begin{equation}
G(T_{0},P_{0},t)=E(t)-T_{0}S(t)+P_{0}V(t), \label{Gibbs_Free_Energy}%
\end{equation}
where the observables $E(t),S(t)$ and $V(t)$\ have explicit time-dependence
for fixed $N$. The particular form of the Gibbs free energy is in accordance
with the second law in Eq. (\ref{Second_Law_0}), and remains valid even if the
system is so far out of equilibrium that its temperature and pressure cannot
be defined; see also Landau and Lifshitz \cite[see Sect. 20]{Landau}. It is
not surprising, therefore, that it contains the temperature and pressure of
the medium, which are well defined. Theses quantities do not exists for the
system unless it happens to be at least in internal equilibrium. Even then,
the Gibbs free energy is given by Eq. (\ref{Gibbs_Free_Energy}) and contains
the temperature and pressure of $\widetilde{\Sigma}$. As such, it does not
represent a state function of the system. In engineering context, this
quantity is also known as \emph{exergy} or \emph{availability} \cite{Keenan}.

However, our form in Eq. (\ref{Gibbs_Free_Energy}) differs from the local form
of the Gibbs free energy in Eq. (\ref{Local_Gibbs_Free_Energy}), which
contains the local temperature and pressure. One can argue that the
identification of the Gibbs free energy in I was for the entire system, but
that once we account for the inhomogeneity by considering subsystems, the
Gibbs free energy for each subsystem will somehow become consistent with that
in Eq. (\ref{Local_Gibbs_Free_Energy}). This is a reasonable possibility and
we need to investigate this possibility. This issue is deferred to Sect.
\ref{marker_inhomogeneous_arbitrary}.

As the method to identify the Gibbs free energy or other thermodynamic
potentials that follow from the second law in Eq. (\ref{Second_Law_0}) is
going to be employed here several times, we briefly sketch the derivation for
the sake of continuity. The full details are given in I. We do not assume the
existence of the temperature, pressure, etc. of the system to include the
situation in our discussion when the system is far away form equilibrium so
that they are not defined. For simplicity, we consider a monatomic system of
structureless particles under no external shear as in the previous work
\cite{GujratiNETI}. Accordingly, we only consider the energy $E$, volume $V$
and the number of particles $N$ to describe the macrostate of the system at
any instant $t$. No internal variables will be considered at this moment. The
system and the medium are assumed not to be in equilibrium.

We use the additivity in Eq. (\ref{Entropy Sum}) to write the entropy
$S_{0}(t)$ of $\Sigma_{0}$ as the sum of the entropies $S(t)$ of the system
and $\widetilde{S}(t)$ of the medium. If we also assume that the latter is in
internal equilibrium, then we have%
\begin{equation}
S_{0}(E_{0},V_{0},N_{0},t)=S(E,V,N,t)+\widetilde{S}(\widetilde{E}%
,\widetilde{V},\widetilde{N}); \label{Total_Entropy}%
\end{equation}
there is no explicit $t$-dependence in $\widetilde{S}(\widetilde{E}%
,\widetilde{V},\widetilde{N})$ due to its internal equilibrium. With $N$ and
$\widetilde{N}$ fixed, we expand $S_{0}$ in terms of the small quantities
$E(t)$ and $V(t)$ of the system
\[
\widetilde{S}(\widetilde{E},\widetilde{V},\widetilde{N})\simeq\widetilde
{S}(E_{0},V_{0},\widetilde{N})-\left.  \left(  \frac{\partial\widetilde{S}%
}{\partial\widetilde{E}}\right)  \right\vert _{\mathbf{X}_{0}}E(t)-\left.
\left(  \frac{\partial\widetilde{S}}{\partial\widetilde{V}}\right)
\right\vert _{\mathbf{X}_{0}}V(t),
\]
where the derivatives are evaluated at $E_{0},V_{0},\widetilde{N}$. However,
as $\widetilde{N}$ is very close to $N_{0}$, there is no harm in evaluating
the derivatives at $E_{0},V_{0},N_{0}$. This is the reason that we have used
$\mathbf{X}_{0}$ above for the derivative. This approximation will be made
throughout in this work. The error is inconsequential when the system is a
very small part of the isolated system. It follows from the internal
equilibrium of $\widetilde{\Sigma}$ that%
\begin{equation}
\left.  \left(  \frac{\partial\widetilde{S}}{\partial\widetilde{E}}\right)
\right\vert _{\mathbf{X}_{0}}=\frac{1}{T_{0}},\ \ \left.  \left(
\frac{\partial\widetilde{S}}{\partial\widetilde{V}}\right)  \right\vert
_{\mathbf{X}_{0}}=\frac{P_{0}}{T_{0}}. \label{Medium_Fields}%
\end{equation}
We observe that $\widetilde{S}\equiv\widetilde{S}(E_{0},V_{0},\widetilde{N})$
is a constant, which is independent of the system $\Sigma$. Thus,%
\begin{equation}
S_{0}(t)-\widetilde{S}\simeq S(E,V,N,t)-E(t)/T_{0}-P_{0}V(t)/T_{0}.
\label{Total_Subtracted_Entropy_0}%
\end{equation}
In terms of
\begin{equation}
G(t)\equiv H(t)-T_{0}S(t),\ H(t)\equiv E(t)+P_{0}V(t), \label{Free_Energies}%
\end{equation}
we finally have
\begin{equation}
S_{0}(t)-\widetilde{S}=S(t)-H(t)/T_{0}=-G(t)/T_{0}.
\label{Gibbs_Free_Energy_Entropy_Relation}%
\end{equation}

It immediately follows from Eq. (\ref{Second_Law_0}) that the Gibbs free
energy $G(t)$ of the system in Eq. (\ref{Gibbs_Free_Energy}) continuously
decreases as the system relaxes towards equilibrium, a result quite well known
in classical thermodynamics \cite{Landau}:
\begin{equation}
\frac{dG(t)}{dt}\leq0. \label{Gibbs_Free_Energy_Variation}%
\end{equation}
The function $G(t)$ continues to decrease and finally becomes identical to the
equilibrium Gibbs free energy at the current temperature and pressure
$T_{0},P_{0}$. If we abruptly change the temperature $T^{\prime}$ and pressure
$P^{\prime}$ of the system in some state A$^{\prime}$, where the system was in
equilibrium, to a new state A where the temperature and pressure are
$T_{0},P_{0},$ respectively, at time $t=0$, then the \emph{initial} values of
the energy, volume and entropy at the new temperature and pressure remain
equal to their respective equilibrium values in the previous state A$^{\prime
}$ as the microstate probabilities $p_{\alpha}(t)$ at $t=0\ $have not had any
time to change. Thus, initially
\[
G(0)=E_{\text{A}}^{\prime}-T_{0}S_{\text{A}}^{\prime}+P_{0}V_{\text{A}%
}^{\prime}%
\]
in the state A; the quantities with a prime are the equilibrium values in the
state A$^{\prime}$. The Gibbs free energy decreases in accordance with Eq.
(\ref{Gibbs_Free_Energy_Variation}) and eventually becomes equal to its new
equilibrium value%
\[
G_{\text{A}}=E_{\text{A}}-T_{0}S_{\text{A}}+P_{0}V_{\text{A}},
\]
where the quantities with the subscript A denote the equilibrium values in the
new state A.

It should be noted that the equilibrium Gibbs free energy in the state
A$^{\prime}$ before the abrupt change is%
\[
G_{\text{A}}^{^{\prime}}=E_{\text{A}}^{\prime}-T^{\prime}S_{\text{A}}^{\prime
}+P^{\prime}V_{\text{A}}^{\prime},
\]
so that the Gibbs free energy undergoes a discontinuity at $t=0$ due to the
abrupt change:%
\[
\Delta G_{\text{A}}^{^{\prime}}=(T^{\prime}-T_{0})S_{\text{A}}^{\prime
}-(P^{\prime}-P_{0})V_{\text{A}}^{\prime}.
\]
Its magnitude and sign has nothing to do with the second law as the abrupt
change is not a spontaneous process.

A similar looking quantity $\widehat{G}(t)$, see Eq.
(\ref{Local_Gibbs_Free_Energy}) for its local analog in the local
non-equilibrium thermodynamics \cite{Donder,deGroot,Prigogine,Prigogine-old},
\begin{equation}
\widehat{G}(t)\equiv\widehat{H}(t)-T(t)S(t),\ \widehat{H}(t)\equiv
E(t)+P(t)V(t), \label{Free_Energies_Incorrect}%
\end{equation}
which can be defined \emph{only} when the system is under internal equilibrium
and not otherwise, was shown to increase with time \cite{GujratiNETI} during
relaxation
\begin{equation}
\frac{d\widehat{G}(t)}{dt}\geq0 \label{Gibbs_Free_Energy_Variation_Incorrect}%
\end{equation}
in a cooling process. Since it does not always decrease with time, it cannot
be taken as the Gibbs free energy; the latter is supposed to never increase as
the system equilibrates spontaneously as happens with $G(t)$; see Eq.
(\ref{Gibbs_Free_Energy_Variation}).

\subsection{Fixed Volume $V$ of the System $\Sigma$}

Instead of keeping the number of particles in $\Sigma$ fixed, let us keep its
volume $V$ fixed so that the volume of the medium is also kept fixed. The
number of particles $\widetilde{N}$ of the medium is no longer fixed. The
entropy $\widetilde{S}(\widetilde{E},\widetilde{V},\widetilde{N})$ of the
medium in Eq. (\ref{Total_Entropy}) is expanded in terms of small quantities
$E$ and $N$ of the system. We follow the steps similar to those above and
obtain%
\begin{equation}
\widetilde{S}(\widetilde{E},\widetilde{V},\widetilde{N})\simeq\widetilde
{S}(E_{0},\widetilde{V},N_{0})-\left.  \left(  \frac{\partial\widetilde{S}%
}{\partial\widetilde{E}}\right)  \right\vert _{\mathbf{X}_{0}}E(t)-\left.
\left(  \frac{\partial\widetilde{S}}{\partial\widetilde{N}}\right)
\right\vert _{\mathbf{X}_{0}}N(t), \label{Medium_Expansion_V}%
\end{equation}
where $\mathbf{X}_{0}$ stands for $E_{0},V_{0},N_{0}$ for reasons explained
above in deriving Eq. (\ref{Medium_Fields}). Let us now introduce the chemical
potential $\mu_{0}$\ of the particle in the medium by the standard definition
\[
\ \ \left.  \left(  \frac{\partial\widetilde{S}}{\partial\widetilde{N}%
}\right)  \right\vert _{\mathbf{X}_{0}}=-\frac{\mu_{0}}{T_{0}},
\]
We thus find that in terms of $\widetilde{S}\equiv\widetilde{S}(E_{0}%
,\widetilde{V},N_{0})$
\begin{equation}
S_{0}(t)-\widetilde{S}\simeq S(E,V,N,t)-E(t)/T_{0}+\mu_{0}N(t)/T_{0}%
=-[E(t)-T_{0}S(t)-\mu_{0}N(t)]/T_{0}, \label{Total_Subtracted_Entropy}%
\end{equation}
which identifies a different thermodynamic potential in this case as
\begin{equation}
\Omega(t)\equiv E(t)-T_{0}S(t)-\mu_{0}N(t); \label{Thrmo_Potential}%
\end{equation}
this thermodynamic potential also uses the field variables of the medium. We
should emphasize again that no assumption about the internal equilibrium of
the system has been made. The system may or may not be in internal
equilibrium. The application of the second law in Eq. (\ref{Second_Law_0}) now
gives%
\begin{equation}
\frac{d\Omega(t)}{dt}\leq0. \label{Second_Law_P(t)}%
\end{equation}

\subsection{Fixed $N$ and $V$ of the System $\Sigma$}

If both $N$ and $V$ are kept fixed, it is easy to follow the above derivation
to conclude that the Helmholtz free energy%
\[
F(t)\equiv E(t)-T_{0}S(t)
\]
must continuously decrease as the system reaches equilibrium:%
\[
\frac{dF(t)}{dt}\leq0.
\]
Thus, the second law for an open system is expressed in terms of different
thermodynamic potentials depending on which variables are held fixed.

\section{Internal variables\label{Marker_Internal Variables}}

As said above, a suitable equilibrium macrostate description of the system
requires a set of \emph{independent} macroscopic observables that can be
controlled by an experimentalist and whose values will allow the
experimentalist to differentiate between different macrostates of the same
system. It normally happens that experimentalists have a far less number of
external controls than the possible extensive variables that can be used to
characterize the macrostates. Thus, one does not characterize a macrostate,
especially an equilibrium macrostate, by specifying all of the relevant
extensive system quantities. For example, for a single component system, one
normally uses $E,V$ and $N$ to specify the macrostate if there are no external
shearing forces. Let us for the moment consider a system without external
shear. Usually, one considers a system with fixed $N$; then $E$ and $V$ can be
controlled by the two external variables $T_{0}$ and $P_{0}$ associated with
the medium. However, these external variables need not necessarily control the
local or internal structures in the system at all times during its evolution
towards equilibrium. As Frenkel has observed, the local structures can be
important when considering the structural relaxation in a glass or other
non-equilibrium systems \cite[p. 208]{Frenkel}. For example, one can consider
the average numbers of neighbors and next-neighbors of a given particle to
describe the local structure in the system. These quantities multiplied by the
number $N$ can play the role of internal variables. The corresponding
conjugate variables, normally identified as "chemical potentials" or
"affinity" for these internal variables usually vanish in equilibrium. Frenkel
goes on and calculates viscoelastic effects due to structural changes and
compares them with Maxwell's model of elastic relaxation\emph{ }or an
RC-circuit. This investigation by Frenkel\ \cite{Frenkel}\ shows that internal
variables can play an important role in the temporal evolution towards
equilibrium in some systems such as glasses. As such, they become an integral
part of the description of any non-equilibrium system and determine the
relaxation of the system \cite[Sect. 78]{Landau-FluidMechanics}. The internal
variables are also called hidden variables or internal order parameters.

To introduce the concept of internal variables, let us consider our isolated
system $\Sigma_{0}$ for which one can identify a set of \emph{conserved}
quantities, i.e. integrals of motion. For a mechanical system of $s$ degrees
of freedom, the number of such integrals of motion are $2s-1$
\cite{Landau-Mechanics}. Of these integrals of motion, those that are
\emph{additive} play an important role in thermodynamics and statistical
mechanics. The notable ones are the energy, and linear and angular momenta of
the system, among others such as the polarization, magnetization, etc. For the
moment, let us consider $\Sigma_{0}$ to be stationary. Its macrostate
$\mathcal{M}_{0}$ is characterized by fixed internal energy $E_{0}$, volume
$V_{0}$, particle number $N_{0}$ and other extensive observables, collectively
denoted by $\mathbf{X}_{0}$. All these observables are constant for
$\Sigma_{0}$. Let us consider the energy $E_{0}$, which is an integral of
motion. It usually happens (see below for an example) that there are many
different components $E_{0}^{(k)}$ of the energy whose total sum is the energy
of $\Sigma_{0}$:
\begin{equation}
E_{0}\equiv\sum_{k=1}^{n+1}E_{0}^{(k)}(t), \label{Energy_Sum}%
\end{equation}
where $n+1>1\ $is the number of\ energy components. It is $E_{0}$ that is a
constant of motion, not the individual components $E_{0}^{(k)}(t)$; the latter
will continue to change\ as the system evolves in time while maintaining Eq.
(\ref{Energy_Sum}). Let $W_{0}(\mathbf{X}_{0},t)$\ denote the number of
microstates corresponding to the macrostate $\mathcal{M}_{0}$ at time $t$. At
each instant $t$, the microstates in $W_{0}(\mathbf{X}_{0},t)$\ can be
partitioned into groups according to the possible values of $E_{0}^{(k)}(t)$.
Because of the sum rule in Eq. (\ref{Energy_Sum}), only $n$ of the components
are independent for a given $E_{0}$, which we take to be given by
$k=1,2,\cdots,n$. We will denote this set by an $n$-vector $\mathbf{I}_{0}(t)$
whose elements are$\left\{  I_{0}^{(k)}=E_{0}^{(k)}(t),k=1,2,\cdots,n\right\}
$. Then,
\[
E_{0}^{(n+1)}(t)\equiv E_{0}-\sum_{k=1}^{n}I_{0}^{(k)}(t).
\]
Let $W_{0}(\mathbf{X}_{0},\mathbf{I}_{0}(t),t)$ denote the number of
microstates for a given $\mathbf{Z}_{0}(t)$. These microstates define a new
macrostate, which we denote by $\mathcal{N}_{0}$. Obviously,
\begin{equation}
W_{0}(\mathbf{X}_{0},t)\equiv\sum_{\mathbf{I}_{0}(t)}W_{0}(\mathbf{X}%
_{0},\mathbf{I}_{0}(t),t); \label{NumberMicrostate_Sum}%
\end{equation}
the sum is over all possible $\mathbf{I}_{0}(t)$. As the system evolves,
different components $E_{0}^{(k)}(t)$ of $\mathbf{I}_{0}(t)$\ evolve in time
$t$, but $E_{0}$ and $\mathbf{X}_{0}$\ remain fixed. Thus, a better
understanding of the evolution of the system can be obtained by monitoring how
the various components $E_{0}^{(k)}(t)$ change in time. For this, it is better
to use $\mathbf{Z}_{0}\equiv(\mathbf{X}_{0},\mathbf{I}_{0}(t))$ to identify
the macrostate $\mathcal{N}_{0}$ even though individual $E_{0}^{(k)}(t)$
cannot be controlled by the observer. As $E_{0}$ can be controlled by the
observer, it is still the choice observable to be used for identifying a
macrostate.\ This is even more true for the isolated system for which $E_{0}$
is a constant of motion. The $n$ components of $\mathbf{I}_{0}(t)$\ then play
the role of \emph{internal variables} in developing non-equilibrium
thermodynamics of the isolated system.

It is evident that the same "extended" description should also be useful for
an open system $\Sigma$.\ The only difference between an open system and an
isolated system is that not all elements of $\mathbf{X}$ remain fixed. Some of
the observables, denoted by $\mathbf{X}^{\prime}$ are controlled by external
field parameters $\mathbf{Y}_{0}^{\prime}$ (such as $T_{0}$,$P_{0}$, etc.)\ of
the medium so that they do not remain fixed but continue to fluctuate about
their mean $\mathbf{X}^{\prime}(t)$ that keeps changing in time. However, at
least one of the extensive observables such as $N$ must be kept
\emph{constant} to quantify the size of the system
\cite{Gujrati-review:Fluctuations}. Thus, for an open system, these
observables can be replaced by \ the fields $\mathbf{Y}_{0}^{\prime}$, with
the remaining observables remaining constant. We will denote the latter
observables by $\mathbf{C}$ to remind us that they are constant. The open
system can be either specified by $\mathbf{X}^{\prime}(t),\mathbf{C}$ or
$\mathbf{Y}_{0}^{\prime},\mathbf{C}$. However, for the sake of convenience, we
will continue to use $\mathbf{X}(t)$ rather than $\mathbf{X}^{\prime
}(t),\mathbf{C}$ or $\mathbf{Y}_{0}^{\prime},\mathbf{C}$. Let us now consider
$\Sigma$ which is not in internal equilibrium so that it undergoes internal
deformation due to relative motions between its various parts. If there are
external strains on the system, they can be controlled by us from the outside.
Hence, they will not be considered as internal variables. However, internal
stresses acting on various parts of the system when there are no external
strains on the system are beyond our control and must be treated as internal
variables in describing the system. As said earlier, we can describe the
internal forces acting on each part in terms of translation and rotation of
its various parts; see Sect. \ref{marker_Helmholtz_theorem}. These motions
must be described by the use of suitable internal variables such as the linear
and angular momenta, as was discussed in Sect.
\ref{marker_important_assumptions_I}.

As the internal variables are uncontrollable, their affinity in equilibrium
must vanish as we prove now as a theorem.

\begin{theorem}
\label{marker_chemical_potential}The affinity of an internal variable must
vanish in equilibrium.
\end{theorem}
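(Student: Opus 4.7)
The plan is to exploit the fact that internal variables, by the very definition given earlier in Sect.~\ref{Marker_Internal Variables}, cannot be manipulated externally, combined with the characterization of equilibrium as the state of maximum entropy. Concretely, I would consider the isolated system $\Sigma_{0}$ containing $\Sigma$ and ask what constraints act on the entropy $S(\mathbf{X},\mathbf{I},t)$ as $\Sigma$ approaches equilibrium. The observables $\mathbf{X}$ can be pinned to their equilibrium values $\mathbf{X}_{0}$ by the medium $\widetilde{\Sigma}(\mathbf{Y}_{0},\mathbf{A}_{0})$ (or held fixed after isolation), but the components of $\mathbf{I}$, being uncontrollable, continue to evolve freely. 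This is precisely the observation already made in the paragraph following Eq.~(\ref{Field_Variables_3}), where it is noted that the isolated system, unlike one without internal variables, does not stay frozen after disconnection from the medium because $\mathbf{I}(t)$ keeps relaxing.

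The core of the argument is then a direct application of the second law in the form~(\ref{Second_Law_0}): during this spontaneous relaxation at fixed $\mathbf{X}=\mathbf{X}_{0}$, the entropy increases monotonically in $\mathbf{I}$, and can only stop increasing when it reaches its unconstrained maximum over $\mathbf{I}$. Let $\mathbf{I}_{0}$ denote the equilibrium value. Treating $\mathbf{I}$ as a continuous extensive variable on which $S$ depends smoothly, the necessary condition for this internal maximum is
\begin{equation}
\left.\left(\frac{\partial S}{\partial \mathbf{I}}\right)_{\mathbf{X}_{0}}\right|_{\mathbf{I}=\mathbf{I}_{0}}=0.
\end{equation}
By the definition of affinity in Eq.~(\ref{Field_Variables_2}), this left-hand side is precisely $\mathbf{a}_{0}\equiv\mathbf{A}_{0}/T_{0}$. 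Since $T_{0}>0$, both $\mathbf{a}_{0}=0$ and $\mathbf{A}_{0}=0$ follow, which is the claim.

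A small but important subtlety to flag in the write-up is the distinction between \emph{internal equilibrium} and \emph{equilibrium with the medium}. In internal equilibrium at instant $t$, the affinity $\mathbf{A}(t)$ is well defined by Eq.~(\ref{Field_Variables_2}) but generally nonzero; it serves as the thermodynamic force $F_{\mathbf{I}}(t)=\mathbf{a}(t)-\mathbf{a}_{0}$ of Eq.~(\ref{Force_w}) that drives the further relaxation of $\mathbf{I}$. Only when the relaxation terminates — i.e.\ at true equilibrium with the medium $\widetilde{\Sigma}(\mathbf{Y}_{0},\mathbf{A}_{0})$ — does $\mathbf{a}_{0}$ itself vanish. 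The main obstacle, such as it is, is therefore not computational but conceptual: one must make sure the maximization is over the unconstrained directions in state space, which requires using the fact that $\mathbf{I}$ is independent of $\mathbf{X}$ (otherwise a Lagrange multiplier, rather than a vanishing derivative, would appear). Assuming the internal variables have been chosen independent, as is standard and as is implicitly done throughout the paper, the argument is complete.
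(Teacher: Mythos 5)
Your proposal is correct, but it follows a genuinely different route from the paper. You argue macroscopically: with the observables pinned, the uncontrollable internal variables relax spontaneously, the second law in Eq.~(\ref{Second_Law_0}) drives the entropy to its unconstrained maximum over $\mathbf{I}$, and the stationarity condition $\left(\partial S/\partial\mathbf{I}\right)_{\mathbf{X}}=0$ at that maximum, read through the definition in Eq.~(\ref{Field_Variables_2}), gives $\mathbf{A}_{0}=0$. The paper instead argues statistically: it restricts to the isolated system and to the energy components of Eq.~(\ref{Energy_Sum}), introduces the partition function $Z_{0}(\mathbf{X}_{0},\mathbf{A}_{0})=\sum_{\mathbf{I}_{0}}W_{0}(\mathbf{X}_{0},\mathbf{I}_{0})\exp\{-\sum_{k}a_{0}^{(k)}I_{0}^{(k)}\}$ in which the affinity plays the role of the conjugate field biasing the freely fluctuating internal variables, and observes that this reduces to the equilibrium microstate count $W_{0}(\mathbf{X}_{0})$ of Eq.~(\ref{NumberMicrostate_Sum}) --- the unconstrained, unbiased sum over $\mathbf{I}_{0}$ --- precisely when every $A_{0}^{(k)}\equiv0$. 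Your route is the classical thermodynamic one (de Donder-style vanishing affinity at equilibrium) and buys physical transparency, at the price of smoothness and interior-maximum assumptions on $S(\mathbf{X},\mathbf{I})$, which you appropriately flag along with the independence of $\mathbf{I}$ from $\mathbf{X}$; the paper's route avoids differentiating the entropy altogether and instead rests on identifying the equilibrium macrostate with the unconstrained sum over internal variables. The two are consistent and essentially dual: your stationarity condition is exactly the maximum-term condition that makes $W_{0}(\mathbf{X}_{0},\overline{\mathbf{I}}_{0})$ dominate the sum in Eq.~(\ref{Equality_W0}), so your argument can be viewed as the macroscopic shadow of the paper's counting argument. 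Your closing remark distinguishing internal equilibrium (nonzero $\mathbf{A}(t)$ acting as the thermodynamic force) from equilibrium with the medium (where $\mathbf{A}_{0}=0$) matches the paper's own caveat that the theorem says nothing about affinities out of equilibrium.
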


\begin{proof}
It is sufficient to prove the theorem for an isolated system. Also, we will
prove it for the energy components in Eq. (\ref{Energy_Sum}). The extension to
the general case is a trivial extension and will not be done here. As we are
dealing with equilibrium, we consider equilibrium values of all the
quantities, which are going to be represented by suppressing the argument $t$
as they are stationary. We now construct the following partition function for
the isolated system%
\[
Z_{0}(\mathbf{X}_{0},\mathbf{A}_{0})\equiv\sum_{\mathbf{I}_{0}}W_{0}%
(\mathbf{X}_{0},\mathbf{I}_{0})\exp\left\{  -\sum_{k=1}^{n}a_{0}^{(k)}%
I_{0}^{(k)}\right\}  ,
\]
where $\mathbf{a}_{0}$\ is the the $n$-vector $\left\{  a_{0}^{(k)}\right\}  $
equilibrium affinity. Such a partition function correctly describes the
situation in which the $n$ internal variables are not constant but keep
changing from microstate to microstate. We now observe that this partition
function reduces to the equilibrium value\ $W_{0}(\mathbf{X}_{0})$ in Eq.
(\ref{NumberMicrostate_Sum}) (where we take the limit $t\rightarrow\infty$),
provided $A_{0}^{(k)}\equiv0$:
\[
A_{0}^{(k)}\equiv0,\ k=1,2,\cdots,n,
\]
for each of the internal variable in the set $\mathbf{I}_{0\text{,eq}}$. This
proves the theorem.
\end{proof}

The above theorem deals with equilibrium affinities, and says nothing about
the affinities of the internal variables when the system is out of equilibrium.

Let $\alpha$ denote one of the microstates associated with the macrostate
$\mathcal{M}_{0},$ and $\beta$ one of the microstates associated with the
macrostate $\mathcal{N}_{0}$. Then, using their probabilities $p_{\alpha}(t)$
and $p_{\beta}(t),$\ we can determine the entropy of the two macrostates using
the Gibbs formulation in Eq. (\ref{Gibbs entropy}):
\begin{subequations}
\label{Macrostate_entropy}%
\begin{align}
S_{0}(\mathbf{X}_{0},t)  &  \equiv-\sum_{\alpha}p_{\alpha}(t)\ln p_{\alpha
}(t),\label{M0_entropy}\\
S_{0}(\mathbf{X}_{0},\mathbf{I}_{0}(t),t)  &  \equiv-\sum_{\beta}p_{\beta
}(t)\ln p_{\beta}(t). \label{N0_entropy}%
\end{align}
For a macroscopically large system, the following standard statistical
mechanical arguments can be used to highlight the maximum of the summand in
Eq. (\ref{NumberMicrostate_Sum}). Let the maximum of the summand be denoted by
$M_{0}(t)$, which occurs for some particular value $\overline{\mathbf{I}}%
_{0}(t)$ of $\mathbf{I}_{0}(t):$%
\end{subequations}
\[
M_{0}(t)\equiv W_{0}(\mathbf{X}_{0},\overline{\mathbf{I}}_{0}(t),t)
\]
We separate the maximum contribution from the sum and rewrite Eq.
(\ref{NumberMicrostate_Sum}) as follows:%
\[
W_{0}(\mathbf{X}_{0},t)\equiv M_{0}(t)\left[  1+\sum_{\mathbf{I}_{0}%
(t)\neq\overline{\mathbf{I}}_{0}(t)}\frac{W_{0}(\mathbf{X}_{0},\mathbf{I}%
_{0}(t),t)}{M_{0}(t)}\right]  ,
\]
where the sum is over all remaining $\mathbf{I}_{0}(t)$. It is normally the
case that the ratio in the above sum is vanishingly small for a macroscopic
system and that the sum can be neglected. In this case, we have%
\begin{equation}
W_{0}(\mathbf{X}_{0},t)\approx W_{0}(\mathbf{X}_{0},\overline{\mathbf{I}}%
_{0}(t),t). \label{Equality_W0}%
\end{equation}
For a macroscopically large open system such as $\Sigma$, the above equation
is formally valid, except that we must replace $\mathbf{X}_{0}$ by
$\mathbf{X}(t),$which stands for $\mathbf{X}^{\prime}(t),\mathbf{C}$, and
$\overline{\mathbf{I}}_{0}(t)$ by $\overline{\mathbf{I}}(t)$:%
\begin{equation}
W(\mathbf{X}(t),t)\approx W(\mathbf{X}(t),\overline{\mathbf{I}}(t),t).
\label{Equality_W}%
\end{equation}

We now prove an important theorem about the nature of the entropy.

\begin{theorem}
\label{marker_absence_of_internal_equilibrium}The entropy expressed only in
terms of the observables when (independent) internal variables are present
must explicitly depend on $t.$
\end{theorem}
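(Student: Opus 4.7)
The plan is to exploit the isolation argument the paper has already developed and the independence of the internal variables $\mathbf{I}(t)$ from the observables $\mathbf{X}(t)$. First I would start from the full state-variable expression $S(\mathbf{Z}(t)) = S(\mathbf{X}(t),\mathbf{I}(t))$ which, as consequence (1) of the internal-equilibrium discussion, carries no explicit $t$-dependence because the maximum-entropy postulate $p_i(t) = 1/W(t)$ makes $S$ a function of the instantaneous state variables only. The goal is to show that upon suppressing $\mathbf{I}(t)$ and viewing $S$ as a function of $\mathbf{X}$, an irreducible explicit $t$ must appear.

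The key step is to hold $\mathbf{X}$ fixed and let $t$ advance: pick any instant $t_{0}$, isolate the system from the medium at $t_{0}$, and apply the observation already made in the presence-of-internal-variables paragraph, namely that after isolation $\mathbf{X}_{\text{IS}} = \mathbf{X}(t_{0})$ stays constant whereas $\mathbf{I}(t)$ keeps evolving because the observer cannot manipulate it. By the second law combined with Eq.~(\ref{Gibbs entropy}), $S$ then continues to increase strictly until equilibrium is reached. Hence along this trajectory $\mathbf{X}$ is constant but $S$ is not, so $S$ cannot be expressed as any function of $\mathbf{X}$ alone.

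To convert this observation into the stated claim, I would argue as follows. Suppose for contradiction that $S = \widehat{S}(\mathbf{X}(t))$ with no explicit $t$. Then along the isolation trajectory starting at $t_{0}$ the right-hand side is constant in $t$, while the left-hand side strictly increases (and equals the full $S(\mathbf{X}_{\text{IS}},\mathbf{I}(t))$ by quasi-independence and the Gibbs formulation), a contradiction. The only way to reconcile this is to write $S = S(\mathbf{X}(t),t)$, where the explicit $t$-slot carries the information that $\mathbf{I}(t)$ continues to evolve independently of $\mathbf{X}(t)$. The independence assumption in the theorem's hypothesis is precisely what prevents $\mathbf{I}(t)$ from being expressible as a (time-independent) function of $\mathbf{X}(t)$, so the $t$ in $S(\mathbf{X}(t),t)$ cannot be absorbed.

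The main obstacle, and the place where care is needed, is the precise meaning of independent: one must rule out the possibility that $\overline{\mathbf{I}}(t)$ in Eq.~(\ref{Equality_W}) is itself a function of $\mathbf{X}(t)$ alone, since in that case the internal variable would be slaved to the observables and no explicit $t$ would remain. The isolation thought-experiment sidesteps this because it freezes $\mathbf{X}$ while $\mathbf{I}(t)$ still changes, thereby exhibiting two distinct entropy values at the same $\mathbf{X}$ and establishing that no single-valued function $\widehat{S}(\mathbf{X})$ can reproduce $S(t)$. This gives the claimed explicit $t$-dependence and, as a corollary, the statement in the surrounding discussion that internal equilibrium with respect to $\mathbf{Z}$ is lost once $\mathbf{I}$ is dropped from the description.
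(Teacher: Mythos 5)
Your proposal is correct and follows essentially the paper's own route: both fix $\mathbf{X}_{\text{IS}}=\mathbf{X}(t)$ by isolating the system, note that the independent internal variables $\mathbf{I}(t)$ keep evolving, and conclude that the entropy takes different values at the same $\mathbf{X}$, which can only be bookkept as an explicit $t$-dependence of $S(\mathbf{X},t)$, with the slaved case $\overline{\mathbf{I}}=\overline{\mathbf{I}}(\mathbf{X})$ excluded by the independence hypothesis exactly as in the paper. The only cosmetic difference is that you exhibit the changing entropy via the second-law strict increase after isolation (a fact the paper itself asserts in its internal-variables discussion), whereas the paper argues through the microstate counts $W(\mathbf{X},\overline{\mathbf{I}}(t))$ versus $W(\mathbf{X},t)$ in Eq.~(\ref{Equality_W}).
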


\begin{proof}
We first consider the isolated system $\Sigma_{0}$ and prove the theorem for
it. For $\overline{\mathbf{I}}_{0}(t)$ to be independent of (fixed)
$\mathbf{X}_{0}$, it must surely have an explicit dependence on time. In other
words, $\overline{\mathbf{I}}_{0}(t)$ must be a function of $\mathbf{X}_{0}$
and $t$. \ Let us assume that there is an explicit $t$-dependence in both
$W_{0}$-functions in Eq. (\ref{Equality_W0}). As the entropy of the macrostate
$\mathcal{M}_{0}$ is given by the sum over all microstates $W_{0}%
(\mathbf{X}_{0},t)~$in Eq. (\ref{System_Entropy}), it must explicitly depend
on $t$. Thus, the theorem is satisfied. If, however, neither of the $W_{0}%
$-functions in Eq. (\ref{Equality_W0}) have any explicit $t$-dependence, then
this is possible only if $\overline{\mathbf{I}}_{0}(t)$ becomes a function of
$\mathbf{X}_{0}$ as the left side is only a function of $\mathbf{X}_{0}$.
Since $\mathbf{X}_{0}$ is constant, $\overline{\mathbf{I}}_{0}(t)$ itself must
be constant. The latter is the situation in equilibrium:
\begin{equation}
\overline{\mathbf{I}}_{0\text{,eq}}=\overline{\mathbf{I}}_{0}(\mathbf{X}%
_{0})\text{ a constant}. \label{Equilibrium_I}%
\end{equation}

It follows that $\overline{\mathbf{I}}_{0}\equiv\overline{\mathbf{I}%
}_{0\text{,eq}}$ is no longer an independent variable when the system is in
equilibrium. Obviously, this case is not covered by the theorem since
$\overline{\mathbf{I}}_{0}$ is not independent. The entropy in this case is
given by the Boltzmann formulation, cf. Eq. (\ref{Boltzmann entropy}), and we
have from Eq. (\ref{Equality_W0})%
\[
S_{0}(\mathbf{X}_{0})\approx S_{0}(\mathbf{X}_{0},\overline{\mathbf{I}}%
_{0})\ \text{a constant}.
\]

Let us now consider the special case when the macrostate $\mathcal{N}_{0}$
satisfies the condition of internal equilibrium. In this case, $W_{0}%
(\mathbf{X}_{0},\overline{\mathbf{I}}_{0}(t),t)$ does not explicitly depend on
$t$ and should be written as $W_{0}(\mathbf{X}_{0},\overline{\mathbf{I}}%
_{0}(t))$ with $\overline{\mathbf{I}}_{0}(t)~$having an explicit time
dependence. This entropy is again given by\ Eq. (\ref{Boltzmann entropy}):%
\begin{equation}
S_{0}(\mathbf{X}_{0},\overline{\mathbf{I}}_{0}(t))=\ln W_{0}(\mathbf{X}%
_{0},\overline{\mathbf{I}}_{0}(t)). \label{N0_entropy_Internal_Equilibrium}%
\end{equation}
It now follows from Eq. (\ref{Equality_W0}) that $W_{0}(\mathbf{X}_{0},t)$
must have an explicit time-dependence due to the explicit $t$-dependence of
$\overline{\mathbf{I}}_{0}(t)~$in $W_{0}(\mathbf{X}_{0},\overline{\mathbf{I}%
}_{0}(t))$. This is because different values of $\overline{\mathbf{I}}_{0}(t)$
will result in different values of $W_{0}(\mathbf{X}_{0},\overline{\mathbf{I}%
}_{0}(t))$, which can be treated as $W_{0}(\mathbf{X}_{0},t)$ associated with
the macrostate $\mathcal{M}_{0}$ at different times. This is the first case
considered above. Thus, $S_{0}(\mathbf{X}_{0},t)$ will have an explicit
$t$-dependence even though $S_{0}(\mathbf{X}_{0},\overline{\mathbf{I}}%
_{0}(t))$ does not.

This proves the theorem for an isolated system.

Let us consider an open system such as $\Sigma$. Again, $\overline{\mathbf{I}%
}(t)$ must be a function of $\mathbf{X}(t)$ and $t$ to remain independent of
$\mathbf{X}(t)$. Let us assume that there is an explicit $t$-dependence in the
$W$-functions in Eq. (\ref{Equality_W}). As the entropy of the macrostate
$\mathcal{M}$ of $\Sigma$\ is given by the sum over all microstates
$W(\mathbf{X}(t),t)~$in Eq.(\ref{M0_entropy}), it must explicitly depend on
$t$. If, however, neither of the $W$-functions have any explicit
$t$-dependence, then $\overline{\mathbf{I}}(t)$ becomes a function of
$\mathbf{X}(t)$. In this case, it is not independent of $\mathbf{X}(t)$. This
situation is then not relevant for the theorem.

The possibility in which $\overline{\mathbf{I}}(t)$ is independent of
$\mathbf{X}(t)$, but $W(\mathbf{X}(t),\overline{\mathbf{I}}(t))$ has no
explicit $t$-dependence, when the system is under internal equilibrium, is
very important. Fixing $\mathbf{X}_{\text{IS}}\equiv\mathbf{X}(t)$ allows us
to think of the system as an isolated system. Now, we can use the argument
given above for the isolated system to conclude that different values of
$\overline{\mathbf{I}}(t)$ will result in different values of $W(\mathbf{X}%
_{\text{IS}},\overline{\mathbf{I}}(t))$, which can be treated as
$W(\mathbf{X}_{\text{IS}},t)$ associated with the macrostate $\mathcal{M}$ at
different times. In other words, the macrostate $\mathcal{M}$ does not
represent an internal equilibrium state. Thus, we conclude that \ a macrostate
$\mathcal{N}$ under internal equilibrium results in a macrostate $\mathcal{M}%
$; the latter is, however, not in internal equilibrium.

This proves the theorem.
\end{proof}

It follows from the above discussion that a general thermodynamic state can be
taken to be a function of internal variables along with other observables and
time $t$ when we deal with non-equilibrium states. For an open system in which
many of the observables are controlled by external field parameters
$\mathbf{Y}_{0}$ (such as $T_{0}$,$P_{0}$, etc.)\ of the medium, we can
express $\overline{\mathbf{I}}_{\text{eq}}$ either as
\[
\overline{\mathbf{I}}_{\text{eq}}=\overline{\mathbf{I}}(\mathbf{X}_{\text{eq}%
}^{\prime},\mathbf{C}),
\]
or as%
\[
\overline{\mathbf{I}}_{\text{eq}}\equiv\overline{\mathbf{I}}(\mathbf{Y}%
_{0}^{\prime},\mathbf{C}).
\]
Away from equilibrium, the internal variable $\overline{\mathbf{I}}%
(\mathbf{X}^{\prime}(t),\mathbf{C})$ differs from its equilibrium values
$\overline{\mathbf{I}}_{\text{eq}}$, and is normally treated as an independent
variable and plays an important role in the dynamics of the system as the
latter strives to reach equilibrium. Thus, it is not surprising that internal
variables are employed to specify the macrostate of a glass. In
non-equilibrium thermodynamics, this fact has been recognized for quite some
time \cite{Donder,deGroot,Prigogine,Prigogine-old}.

Internal variables can also be related to the presence of internal degrees of
freedom in the particles of interest. The internal degrees are more common in
polymers but can also occur in small molecules in the form of rotation about
some internal axes. An example will clarify the point much better. Consider a
polymerization process resulting in a system of polydisperse linear polymer
chains of average molecular weight $\overline{M}$ in a solution
\cite{GujratiRC}. The model is defined on a lattice of $N$ sites and volume
$V=Nv_{0}$, with $v_{0}$ a constant representing the volume occupied by a
lattice site. One normally uses $E$, $V$, $\overline{M}$ defined below in Eq.
(\ref{Molecular_weight}), and the number of chains $p$ as the standard
observables that can be used to identify the macrostate (equilibrium or not)
of the polymer solution. In turn, these quantities are controlled by the
temperature, pressure and the initiation-termination and propagation rates;
the last two can be related to the initiation-termination activity controlling
the number of endgroups, two for each polymer, and the middlegroup activity.
These activities determine the corresponding affinity or "chemical
potentials." Let $N_{\text{m}}\equiv N-N_{\text{v}}$ denote the number of
monomers, each monomer occupying a lattice site, in terms of the number of
voids or sites not covered by monomers $N_{\text{v}}$ so that
\begin{equation}
\overline{M}\equiv\frac{N_{\text{m}}}{p}. \label{Molecular_weight}%
\end{equation}
In terms of the number of middle groups $N_{\text{M}}\equiv N_{\text{m}}%
-2p,$\ or $N_{\text{m}}$,\ the number of chemical bonds in the $p$ polymers is
given by%
\[
N_{\text{B}}\equiv N_{\text{M}}+p=N_{\text{m}}-p.
\]

There are two kinds of energy in the model \cite{GujratiRC}. One kind of
energy is due to mutual interactions of voids (v) with the end\ (E) and middle
(M) groups, and the mutual interactions between chemically unbonded M and E.
Let $N_{ij},i,j=$v,M or E, denote the number of nearest-neighbor contacts
$ij,i\neq j$, and $\varepsilon_{ij}$ the corresponding interaction energies,
respectively. The other kind of energy is due to intrachain gauche bonds (g),
\ and hairpin turns (hp). Their energies are $E_{\text{g}}\equiv
\varepsilon_{\text{g}}N_{\text{g}}$ for gauche bonds and $E_{\text{hp}}%
\equiv\varepsilon_{\text{hp}}N_{\text{hp}}$ for hairpin turns; here
$N_{\text{g}},$ and $N_{\text{hp}}$ denote the number of gauche bonds and
hairpin turns and parallel bonds and $\varepsilon_{\text{g}},$ and
$\varepsilon_{\text{hp}}$ are their energies. In addition, there is a mutual
interaction energy between two parallel (chemical) bonds, which may belong to
the same or different polymers. Let $N_{\text{P}}\,$denote the number parallel
bonds, each of energy $\varepsilon_{\text{P}}$. Thus,%
\begin{equation}
E\equiv%
{\displaystyle\sum\limits_{i\neq j\text{:v,M,E}}}
\varepsilon_{ij}N_{ij}+\varepsilon_{\text{g}}N_{\text{g}}+\varepsilon
_{\text{hp}}N_{\text{hp}}+\varepsilon_{\text{P}}N_{\text{P}}\equiv%
{\displaystyle\sum\limits_{i\neq j\text{:v,M,E}}}
E_{ij}+E_{\text{g}}+E_{\text{hp}}+E_{\text{P}}, \label{Energy_Partition}%
\end{equation}
where we have introduced $E_{ij},E_{\text{g}},E_{\text{hp}}$ and $E_{\text{P}%
}$ with obvious definitions. We thus observe that the energy can be
partitioned into six extensive energies, five of which can be taken as
internal variables.

To summarize, we conclude that the quantities that \emph{cannot} be controlled
by the observer can be identified as the \emph{internal variables}. This
statement should not be taken literary as what is considered uncontrollable
today may not remain so in the future. Thus, to some degree, the decision to
identify the internal variables is left to the observer. For us, any variable
that cannot be controlled to have a fixed value when the system is out of
equilibrium will be taken as an internal variable \cite{Maugin}. It should
also be noted that the number of internal variables is not a unique number for
a given system. For example, to describe local structures in a monatomic
system \cite{Frenkel}, one can consider any number of neighboring particles
(neighbors, next-neighbors, next-to-next neighbors, and so on). Thus, a choice
will have to be made to see how many of them are useful in a given experiment
or investigation. This certainly gives rise to an additional complication in
the study of non-equilibrium system.

Our approach allows us to associate affinity in a formal sense with all
internal variables. This is how the classical non-equilibrium thermodynamics
has been developed \cite{Donder,deGroot,Prigogine,Prigogine-old}. As observed
by Landau and Lifshitz \cite{Landau-FluidMechanics}, the use of internal
variables in a modern way can be traced to Mandelstam and Leontovich
\cite{Mandelstam}; see also Pokrovski \cite{Pokrovski}. Under the internal
equilibrium assumption, Prigogine addresses the issue of internal variables
(orientation of a molecule, deformation due to flow, elastic deformation,
etc.) in Sect. 11, Chapter III of his classic book \cite{Prigogine-old}, or in
Sect. 10.4 in the modern version \cite{Prigogine}, and couples them to their
"chemical potentials" or affinities. Indeed, Prigogine and Mazur were the
first one to do this in their classic paper \cite{Prigogine-Mazur}; see also
Coleman and Gurtin \cite{Coleman}. The issue of the internal variables is also
discussed in Sect. 6, Ch. 10 in \cite{deGroot}. Pokrovski \cite{Pokrovski}
provides a very illuminating discussion of internal variables and their role
in determining the internal energy. Thus, we will treat internal variables as
additional thermodynamic extensive quantities or "observables" similar to the
number of chemical species in chemical reactions that can be controlled by
affinities or chemical potentials. More recently, the idea has also been
visited by Bouchbinder and Langer \cite{Langer}.

\section{Thermodynamics of a simple rotating
body\label{marker_Rotating_Systems}}

\subsection{General Case}

We will find it convenient for later use to consider observing a body in
different frames of reference; see also Appendices \ref{Appd_Frames} and
\ref{Appd_Rotating_System_0}. For concreteness, we consider the system
$\Sigma$ and assume that no internal variables and no other observables
besides the energy, volume and number of particles are present; the latter can
be added easily as we will discuss later. We will consider three special
frames: the lab frame denoted by $\mathcal{L}$, an intermediate frame
$\mathcal{I}$, with its axes parallel to those of \ and moving with respect to
$\mathcal{L}$ with a velocity $\mathbf{V}(t)$, and a frame $\mathcal{C}$ with
its origin common with $\mathcal{I}$ and rotating with respect to it with an
angular velocity $\boldsymbol{\Omega}(t)$. Let $\mathbf{R}(t)$ denote the
location of the origins of $\mathcal{I}$ and $\mathcal{C}$ in the lab frame
$\mathcal{L}$ at time $t$ with $\mathbf{R}(t=0)=0.$ Let $\mathbf{r}%
_{\mathcal{C}}(t)$ denote the coordinate of a particle of $\Sigma$\ in the
$\mathcal{C}$ frame, and $\mathbf{v}_{\mathcal{C}}(t)$ its velocity in this
frame at time $t$. Its coordinate $\mathbf{r}_{\mathcal{L}}$ in the lab frame
$\mathcal{L}$ is given by%
\begin{equation}
\mathbf{r}_{\mathcal{L}}\equiv\mathbf{R}+\mathbf{r}_{\mathcal{C}};
\label{L_C_r_Relation}%
\end{equation}
its velocity is given by Eq. (\ref{velocity_Relation}). As shown in the
Appendix \ref{Appd_Frames}, the energy of the particle in the two frames are
related as shown in Eq. (\ref{Energy_Transformation_particle}). Let us
consider $\mathcal{I}$\ to be the frame in which the center of mass of the
body is at the origin. we will call it the center of mass frame for the body.
Then, applying the above two relations to all the particles in the system and
averaging over all allowed microstates \cite{Gujrati-Symmetry}, which is
carried out later in Sect. \ref{marker_Statistical_averaging}, we obtain that
the energy of the system in the three frames are related as shown in Eqs.
(\ref{Internal_Energy}) and (\ref{Internal_Energy_Iframe}):
\begin{subequations}
\label{Internal_Energy_LICFrame}%
\begin{align}
E_{\mathcal{C}}  &  =E_{\mathcal{L}}-\frac{\mathbf{P}^{2}}{2M}-\mathbf{M}%
\cdot\mathbf{\Omega=}E_{\mathcal{I}}\mathbf{-\mathbf{M}\cdot\mathbf{\Omega}%
,}\label{Internal_Energy_LCFrame}\\
E_{\mathcal{I}}  &  =E_{\mathcal{L}}-\frac{\mathbf{P}^{2}}{2M},
\label{Internal_Energy_ICFrame}%
\end{align}
where $\mathbf{P}$ and $\mathbf{M}$ are introduced in Eq.
(\ref{System_Definition}). We have not used the overbar to express the
statistical averages as explained in Sect. \ref{marker_Statistical_averaging}
but is implied.

We first prove the following theorem:
\end{subequations}
\begin{theorem}
\label{Theorem_Equal_Entropy}The entropy of a system is the same in all three
frames $\mathcal{L}$,$\mathcal{I}$, and $\mathcal{C}$.
\end{theorem}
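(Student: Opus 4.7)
The plan is to lean on the Gibbs formulation of entropy, Eq.~(\ref{System_Entropy}), which expresses $S$ purely in terms of microstate probabilities $\{p_i(t)\}$, and then argue that these probabilities are frame-independent, even though the microstate labels themselves (coordinates, momenta, and single-particle energies) transform nontrivially between $\mathcal{L}$, $\mathcal{I}$, and $\mathcal{C}$.

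First I would set up a bijection between microstates in the three frames. A microstate $i$ of $\Sigma$ is specified, at the mechanical level, by the positions and velocities of all $N$ particles. The Galilean boost implicit in Eq.~(\ref{L_C_r_Relation}) together with the rotational transformation between $\mathcal{I}$ and $\mathcal{C}$ (with angular velocity $\boldsymbol{\Omega}(t)$) are one-to-one maps on the set of all such kinematic configurations. Hence each microstate $i_{\mathcal{L}}$ corresponds to a unique microstate $i_{\mathcal{I}}$ and a unique $i_{\mathcal{C}}$; I will call this common label simply $i$. What changes between frames is the energy $E_i$ attached to the microstate, per Eqs.~(\ref{Internal_Energy_LCFrame})--(\ref{Internal_Energy_ICFrame}), not the microstate itself.

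Next I would argue that the probability $p_i(t)$ is a physical attribute of the microstate, not of the observer: it measures how often the body visits that configuration under its stochastic dynamics, a statement about the body that cannot depend on who watches it. Equivalently, for a body in internal equilibrium, equiprobability (Eq.~(\ref{Internal_equilibrium})) holds in each frame because the \emph{set} of allowed microstates is the same set in all three frames, just relabeled; so $W(t)$ is the same in $\mathcal{L}$, $\mathcal{I}$, $\mathcal{C}$, and the Boltzmann entropy $\ln W(t)$ is manifestly frame-invariant. For the non-equiprobable (but still open-system) case the same conclusion follows because
\begin{equation*}
S_{\mathcal{L}}(t)=-\sum_{i_{\mathcal{L}}}p_{i_{\mathcal{L}}}(t)\ln p_{i_{\mathcal{L}}}(t)=-\sum_{i}p_{i}(t)\ln p_{i}(t)=S_{\mathcal{I}}(t)=S_{\mathcal{C}}(t),
\end{equation*}
using the bijection of microstates and frame-invariance of $p_i(t)$.

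The only genuine obstacle is making precise why $p_i(t)$ is frame-invariant, since different frames weight microstates by different Boltzmann-like factors if one were to try to construct them from energies alone. I would dispose of this by appealing to the definition of $p_i(t)$ as a statistical property of the body's stochastic evolution (as emphasized in Sect.~\ref{marker_averages} and~\cite{Gujrati-Symmetry}), independent of any maximum-entropy construction based on a particular frame's energy; equivalently, by noting that the observables $\mathbf{X}(t)$ that fix the macrostate transform as a block together with the microstate labels, so the constraint structure defining $\{p_i(t)\}$ is intrinsic. With that observation the theorem reduces to the trivial relabeling identity displayed above, and all three entropies coincide.
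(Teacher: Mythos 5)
Your proposal is correct and follows essentially the same route as the paper's own proof: both establish a one-to-one mapping of microstates among $\mathcal{L}$, $\mathcal{I}$, and $\mathcal{C}$ (the paper via phase-space points and volume elements $(2\pi\hbar)^{3N}$), conclude that the microstate probabilities are therefore identical in all three frames with only the energies differing, and then read off the equality of entropies from the Gibbs formulation in Eq.~(\ref{Gibbs entropy}). Your added remarks on why $p_i(t)$ is observer-independent merely elaborate the step the paper states as Eq.~(\ref{Equal_Microstate_Probabilities}), so no substantive difference remains.
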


\begin{proof}
To prove the theorem, we proceed as follows. Only for simplicity of the
argument and presentation, we focus on a system with \emph{fixed} $V=V(t)$ and
$N$ at some instant $t$. The extension to considering other extensive
variables is trivial. Consider observing the system simultaneously at $t$ in
these frames. It is evident that corresponding to each pair $\mathbf{r}%
_{\mathcal{L}}\mathbf{,p}_{\mathcal{L}}$ of the coordinates and momenta of a
given particle in the lab frame $\mathcal{L}$ at this moment, there is a
unique pair $\mathbf{r}_{\mathcal{I}}\mathbf{,p}_{\mathcal{I}}$ and
$\mathbf{r}_{\mathcal{C}}\mathbf{,p}_{\mathcal{C}}$ in the other two frames.
This is true of all the particles. The collection of positions and momenta of
all the particles defines a point in the phase space. In classical statistical
mechanics, a microstate of the system is identified by a small volume element
of size $(2\pi\hbar)^{3N}$ about a point in the phase space. Thus,
corresponding to each microstate $i$ ($=i_{\mathcal{L}},i_{\mathcal{I}},$ or
$i_{\mathcal{C}}$) in one frame, there exists a \emph{unique} microstate in
the other two frames. The uniqueness of microstate-mapping ensures that their
probabilities in the three frames are also equal:%
\begin{equation}
p_{i_{\mathcal{L}}}=p_{i_{\mathcal{I}}}=p_{i_{\mathcal{C}}.}
\label{Equal_Microstate_Probabilities}%
\end{equation}

Let us consider all the microstates of the same energy $E_{\mathcal{C}}$ in
the $\mathcal{C}$ frame at time $t$, and let $W(t)\equiv$ $W(E_{\mathcal{C}%
},t)$ denote their number and $\boldsymbol{p}_{\mathcal{C}}(t)$ the set of
their probabilities (not to be confused with momenta $\mathbf{p}_{\mathcal{C}%
}$, etc.). Because of the uniqueness of the mapping of these microstates noted
above, not only the number of microstates in the three frames are the same at
that instant%
\[
W(t)\equiv W(E_{\mathcal{C}},t)\equiv W(E_{\mathcal{L}},t)\equiv
W(E_{\mathcal{I}},t),
\]
but also the set of their probabilities
\[
\boldsymbol{p}_{\mathcal{C}}(t)\equiv\boldsymbol{p}_{\mathcal{L}}%
(t)\equiv\boldsymbol{p}_{\mathcal{I}}(t);
\]
however, their energies are different as given by Eqs.
(\ref{Internal_Energy_LICFrame})(\ref{Internal_Energy_LCFrame}) and
(\ref{Internal_Energy_ICFrame}). This immediately shows that the entropies
using the general Gibbs formulation in Eq. (\ref{Gibbs entropy}) are equal in
the three frames:
\begin{equation}
S_{\mathcal{C}}(E_{\mathcal{C}},t)=S_{\mathcal{I}}(E_{\mathcal{I}%
},t)=S_{\mathcal{L}}(E_{\mathcal{L}},t), \label{Equal_Entropy}%
\end{equation}
whether the system is in internal equilibrium or not. This proves the theorem.
\end{proof}

It should be noted that the center-of-mass kinetic energy $\mathbf{P}^{2}/2M$
is the same for all microstates in $W(t)$. Similarly, it follows from Eq.
(\ref{Ec_Mc_Relation}) that even $\mathbf{\mathbf{M}\cdot\mathbf{\Omega}}$ is
the same for all microstates in $W(t);$ see also the discussion leading to Eq.
(\ref{Av_M}). Thus, the three energies only differ by some constants at each
instant $t$.

\subsection{System under Internal Equilibrium}

We now specialize and assume the existence of the internal equilibrium, so
that all microstates are equally probable%
\begin{equation}
p_{i}(t)=1/W(t),i=1,2,\cdots,W(t). \label{Instantaneous probabilities}%
\end{equation}
Hence, the three entropies are each equal to
\begin{equation}
S(t)=\ln W(t). \label{Entropy_Frames}%
\end{equation}

It follows from Eq. (\ref{Equal_Entropy}) that there is no reason to use
different subscripts to distinguish the entropies. Accordingly, we will use
$S$ to represent the entropies in different frames; their energy arguments
will of course depend on the frame of reference. The arguments $\mathbf{V}$
and $\mathbf{\mathbf{\Omega}}$ above are actually external parameters that are
not extensive. We will show below that the entropies in the $\mathcal{I}$ and
$\mathcal{L}$ frames are actually functions of extensive quantities
$\mathbf{P}$ and $\mathbf{\mathbf{M}}$ that are conjugate to $\mathbf{V}$ and
$\mathbf{\mathbf{\Omega}}$, respectively; cf. Eq. (\ref{Field_Variables}).

\subsection{Statistical Averaging over Allowed
Microstates\label{marker_Statistical_averaging}}

We now investigate the consequences of statistical averaging over microstates
with non-zero probabilities \cite{Gujrati-Symmetry}\ and show that its
consequences are the same as expressed in Eqs. (\ref{Internal_Energy_LICFrame}%
) and (\ref{Equal_Entropy}). We first note that $\mathbf{M}$ in Eq.
(\ref{System_Definition}) depends on the coordinates and momenta of the
particles, but this is not the case with $\mathbf{P}$, even though both are
extensive quantities. As $E_{\mathcal{C}}$ in Eq. (\ref{Internal_Energy}) or
(\ref{Internal_Energy_LCFrame}) is for a microstate determined by the
coordinates and momenta of the particles, we need to average it using
microstate probabilities in Eq. (\ref{Instantaneous probabilities}). Averaging
over various microstates relates the average energies in the two frame. We use
an overbar, see Eqs. (\ref{Average 0}) and (\ref{Average}), to denote the
average. We find that the same form also describes the desired relation
between the average energies:%
\begin{equation}
\overline{E}_{\mathcal{C}}(t)=\overline{E}_{\mathcal{L}}(t)-\frac
{\mathbf{P}(t)^{2}}{2M}-\overline{\mathbf{M}}(t)\cdot\mathbf{\Omega
}(t)=\overline{E}_{\mathcal{I}}(t)\text{ }\mathbf{-\overline{\mathbf{M}}%
}(t)\mathbf{\cdot\mathbf{\Omega}}(t)\mathbf{,} \label{Internal_Energy_0}%
\end{equation}
where
\begin{equation}
\overline{E}_{\mathcal{I}}(t)=\overline{E}_{\mathcal{L}}(t)-\frac
{\mathbf{P}^{2}}{2M}. \label{Internal_Energy_1}%
\end{equation}
The momentum $\mathbf{P}$, of course, does not require any averaging as noted
above. Eq. (\ref{Internal_Energy_0}) is valid at each instance $t$. We can
also take the statistical average of Eq. (\ref{Angular_momentum_C}) to obtain%
\begin{equation}
\overline{E}_{\mathcal{C}}(t)=\overline{E}_{\mathcal{L}}(t)-\frac
{\mathbf{P}(t)^{2}}{2M}-%
{\displaystyle\sum_{j}}
m_{j}\overline{\mathbf{r}_{j}\cdot\left(  \mathbf{v}_{j}\times\mathbf{\Omega
}\right)  }-%
\frac12
{\displaystyle\sum_{j}}
m_{j}\overline{(\mathbf{\Omega\times r}_{j})^{2}}, \label{Internal_Energy_2}%
\end{equation}
where the two sums are over all the particles in the system. Here
$\mathbf{r}_{j}$ and $\mathbf{v}_{j}$ are the instantaneous position and
velocity of the $j$th particle in a microstate with respect to the
$\mathcal{C}$ frame; we have suppressed the subscript $\mathcal{C}$ from
$\mathbf{r}_{j}$ and $\mathbf{v}_{j}$ for the sake of notational simplicity.
In the last equation, the third contribution is due to the relative motion of
the particles with respect to the $\mathcal{C}$ frame. Indeed, the average of
Eq. (\ref{Angular_Momentum_Work}) immediately yields%
\begin{equation}
\mathbf{\overline{\mathbf{M}}}(t)\mathbf{\cdot\mathbf{\Omega}}(t)=%
{\displaystyle\sum_{j}}
m_{j}\overline{\mathbf{r}_{j}\cdot\left(  \mathbf{v}_{j}\times\mathbf{\Omega
}\right)  }+%
{\displaystyle\sum_{j}}
m_{j}\overline{(\mathbf{\Omega\times r}_{j})^{2}}.
\label{Angular_Momentum_Work_0}%
\end{equation}
The third contribution in Eq. (\ref{Internal_Energy_2}) and the first
contribution in Eq. (\ref{Angular_Momentum_Work_0}) vanish when the system is
in internal equilibrium because of the absence of any relative motion in that
case; see Theorem \ref{marker_Uniform_Motion}.

Since $\overline{E}_{\mathcal{I}}$ also does not depend on the velocity
$\mathbf{V}$, a similar averaging of Eqs. (\ref{Ec_V_Relation}) and
(\ref{Ec_Mc_Relation}) gives us
\begin{subequations}
\label{Av_Derivatives}%
\begin{align}
\left(  \frac{\partial\overline{E}_{\mathcal{C}}(t)}{\partial\mathbf{V}%
(t)}\right)  _{\overline{E}_{\mathcal{I}}\mathbf{,}V,N,\mathbf{\Omega}}  &
=0\mathbf{,}\label{Av_P}\\
\left(  \frac{\partial\overline{E}_{\mathcal{C}}(t)}{\partial\mathbf{\Omega
}(t)}\right)  _{\overline{E}_{\mathcal{I}}\mathbf{,}V,N}  &  =-\overline
{\mathbf{M}}(t). \label{Av_M}%
\end{align}
Comparing the above equations with the equations in the Appendix
\ref{Appd_Rotating_System_0}, we see that there is no reason to make a
distinction between $\mathbf{\mathbf{M}}(t)$, used in the proof above, and
$\mathbf{\overline{\mathbf{M}}}(t)$ or the average energies and the energy
used above in the proof. This justifies not using overbars to indicate
statistical averages in Eq. (\ref{Internal_Energy_LICFrame}).

Since the entropy $S(t)$ in Eq. (\ref{Entropy_Frames}) is fixed for fixed
$\overline{E}_{\mathcal{I}},V,N,$ and $\mathbf{\mathbf{\Omega,}}$\ we can
express the above two derivatives at fixed $S$\textbf{ }instead of\textbf{
}fixed $\overline{E}_{\mathcal{I}}$:
\end{subequations}
\begin{equation}
\left(  \frac{\partial\overline{E}_{\mathcal{C}}(t)}{\partial\mathbf{V}%
(t)}\right)  _{S\mathbf{,}V,N,\mathbf{\Omega}}=0\mathbf{,}\left(
\frac{\partial\overline{E}_{\mathcal{C}}(t)}{\partial\mathbf{\Omega}%
(t)}\right)  _{S\mathbf{,}V,N,\mathbf{V}}=-\overline{\mathbf{M}}(t).
\label{Av_EC_Derivatives}%
\end{equation}
The above equation is similar to the well known result \cite[Sect. 11]{Landau}
in equilibrium statistical mechanics that the statistical average of the
derivatives of the the energy with respect to external parameters
($\mathbf{V}$ and $\boldsymbol{\Omega}$) should be taken at \emph{constant
entropy} and other extensive quantities. We have extended this result to
internal equilibrium now. Introducing the following standard derivatives%

\begin{equation}
\left(  \frac{\partial\overline{E}_{\mathcal{C}}(t)}{\partial S(t)}\right)
_{V,N,\mathbf{V},\mathbf{\Omega}}=T(t),\left(  \frac{\partial\overline
{E}_{\mathcal{C}}(t)}{\partial V(t)}\right)  _{S\mathbf{,}N,\mathbf{V}%
,\mathbf{\Omega}}=-P(t) \label{Av_EC_Derivatives_1}%
\end{equation}
defining the temperature and pressure of the system, we can write down the
following differential identity%
\begin{equation}
d\overline{E}_{\mathcal{C}}=T(t)dS(t)-P(t)dV(t)\ \mathbf{-\ }\overline
{\mathbf{M}}\ (t)\cdot d\boldsymbol{\Omega}(t)\boldsymbol{.}
\label{EC_differential}%
\end{equation}
It should be noted that because of Eq. (\ref{Av_P}), the average energy
$\overline{E}_{\mathcal{C}}(t)$ does not depend on the velocity of the frames
$\mathcal{I~}$and $\mathcal{C}$. Thus, there is no reason to keep $\mathbf{V}$
fixed in the various derivatives in Eqs. (\ref{Av_Derivatives}%
-\ref{Av_EC_Derivatives_1}).

For
\[
\overline{E}_{\mathcal{I}}=\overline{E}_{\mathcal{C}}+\mathbf{\overline
{\mathbf{M}}\cdot\mathbf{\Omega,}}%
\]
we find that
\begin{equation}
d\overline{E}_{\mathcal{I}}=T(t)dS(t)-P(t)dV(t)\ \mathbf{+\ }%
\boldsymbol{\Omega}(t)\cdot d\overline{\mathbf{M}}(t), \label{EI_differential}%
\end{equation}
which is an extension of the result given in Landau and Lifshitz \cite[Sect.
26]{Landau} to the internal equilibrium. The point to note is that the entropy
$S(t)$\ in the $\mathcal{I}$ frame is a function of the conjugate variable
$\overline{\mathbf{M}}(t)$ instead of $\boldsymbol{\Omega}(t)$. However, for
\[
\overline{E}_{\mathcal{L}}(t)=\overline{E}_{\mathcal{C}}(t)+\frac
{\mathbf{P}(t)^{2}}{2M}+\overline{\mathbf{M}}(t)\cdot\mathbf{\Omega}(t),
\]
we also find an additional contribution due to $\mathbf{V}$:%
\begin{equation}
d\overline{E}_{\mathcal{L}}(t)=T(t)dS(t)-P(t)dV(t)+\mathbf{V}(t)\cdot
d\mathbf{P}(t)\mathbf{+}\boldsymbol{\Omega}(t)\cdot d\overline{\mathbf{M}}(t)
\label{EL_differential}%
\end{equation}
given in terms of all extensive quantities. The additional contribution due to
the momentum differential $d\mathbf{P}(t)$\ is due to the velocity of the
system as a whole and is important to include in the lab frame. For example,
such a contribution is needed to describe the flow of a superfluid in which
the normal and superfluid components have different velocities so that the
superfluid cannot be considered at rest in any frame\ \cite[see Eq.
(130.9)]{Landau-FluidMechanics}. We will need to allow for this possibility
when we extend our approach of nonequilibrium thermodynamics to inhomogeneous
systems where different subsystems will undergo relative motion. It follows
form Eq. (\ref{EL_differential}) that the \emph{drift velocity} of the center
of mass of the system is given by%
\begin{equation}
\left(  \frac{\partial\overline{E}_{\mathcal{L}}(t)}{\partial\mathbf{P}%
(t)}\right)  _{S\mathbf{,}V,N,\overline{\mathbf{M}}}=\mathbf{V}(t).
\label{Drift_Velocity}%
\end{equation}
Similarly, the angular velocity is given by%
\begin{equation}
\left(  \frac{\partial\overline{E}_{\mathcal{L}}(t)}{\partial\overline
{\mathbf{M}}(t)}\right)  _{S\mathbf{,}V,N,\mathbf{P}}=\boldsymbol{\Omega}(t).
\label{Angular_Velocity}%
\end{equation}

We again observe that the entropy in the lab frame $\mathcal{L}$ is a function
of the extensive conjugate quantities $\mathbf{P}(t)$ and $\overline
{\mathbf{M}}(t)$ rather than the external parameters $\mathbf{V}$ and
$\mathbf{\mathbf{\Omega}}$.

From now on, we will not use the overbar to show statistical averages for the
sake of notational simplicity.

It is clear from Eq. (\ref{EC_differential}) that we must treat
$E_{\mathcal{C}}(t)$ as a function of $S(t),V(t)$ and $\boldsymbol{\Omega}(t)$
for constant $N$. Alternatively, we must treat $S(t)$ as a function of
$E_{\mathcal{C}}(t),V(t)$ and $\boldsymbol{\Omega}(t):$%
\begin{equation}
S_{\mathcal{C}}(t)\equiv S_{\mathcal{C}}(E_{\mathcal{C}}%
(t),V(t),\boldsymbol{\Omega}(t),N), \label{Functional_dependence}%
\end{equation}
which is identical to the functional dependence shown in Eq.
(\ref{Equal_Entropy}), except that we no longer have an explicit
$t$-dependence because of internal equilibrium. The important point to observe
is that the entropy is a function of not only the energy in the $\mathcal{C}$
frame, but is also a function of the angular velocity of the reference frame
when rotation is involved.

\subsection{Same Temperature and Pressure in Different Frames}

We now make an important observation. It follows from Eqs.
(\ref{EI_differential}) and (\ref{EL_differential}) that
\begin{equation}
\left(  \frac{\partial E_{\mathcal{I}}(t)}{\partial S(t)}\right)
_{V,N,\mathbf{M}}=\left(  \frac{\partial E_{\mathcal{L}}(t)}{\partial
S(t)}\right)  _{V,N,\mathbf{P},\mathbf{M}}=T(t), \label{EI_EL_Derivatives}%
\end{equation}
obtained by differentiating with respect to $S(t).$\ Similar equations are
obtained when we differentiate with respect to $V(t).$\ \
\begin{equation}
\left(  \frac{\partial E_{\mathcal{I}}(t)}{\partial V(t)}\right)
_{S\mathbf{,}N,\mathbf{M}}=\left(  \frac{\partial E_{\mathcal{L}}(t)}{\partial
V(t)}\right)  _{S\mathbf{,}N,\mathbf{V},\mathbf{M}}=-P(t).
\label{EI_EL_Derivatives_1}%
\end{equation}
These equations are identical to the derivatives in Eq.
(\ref{Av_EC_Derivatives_1}) and show that the internal temperature $T(t)$ and
the internal pressure $P(t)$ are the same in the three frames. Moreover, it is
the same entropy function $S(t)$ that appears in Eq. (\ref{EC_differential})
also appears in Eqs. (\ref{EI_differential}) and (\ref{EL_differential}). In
other words, the entropy is the same in all frames, except that the arguments
are different.

\section{Thermodynamics Potentials and Gibbs Fundamental Relation for a
Homogeneous System with Translational Motion\label{marker_homogeneous}}

\subsection{Thermodynamic Potentials for a System under Arbitrary
Conditions\label{marker_homogeneous_arbitrary}}

\subsubsection{Fixed number of particles $N$ of the system $\Sigma$}

Before we discuss the inhomogeneous case, let us consider the homogeneous
situation considered in I and revisited briefly in Sect.
\ref{marker_thermodynamic_potentials}, and extend it to the case when the
system $\Sigma$ moves as a whole with a linear momentum $\mathbf{P}$. We still
assume that $\Sigma_{0}$ is at rest. Because of the linear momentum
conservation, the linear momentum of the center of mass of $\widetilde{\Sigma
}$ is $-\mathbf{P}$. Thus, the centers of mass $\Sigma$ and $\widetilde
{\Sigma}$ are moving towards each other. For simplicity, we will assume the
absence of overall intrinsic rotation for $\Sigma$ and $\widetilde{\Sigma}$
individually. This can easily be incorporated as we do in the next section.
Then we only need to consider the orbital angular momentum $\mathbf{L}_{0}$ of
$\Sigma_{0}$\ in terms of the locations $\mathbf{R}$ and $\widetilde
{\mathbf{R}}$ of the centers of mass of $\Sigma$ and $\widetilde{\Sigma}$,
respectively. It is clear that $\mathbf{L}_{0}$ always vanishes since the
centers of mass\ of $\Sigma$ and $\widetilde{\Sigma}$ are moving towards or
away from each other so that $\mathbf{P}$\ and $\mathbf{R-}\widetilde
{\mathbf{R}}$\ are colinear :%
\[
\mathbf{R}\times\mathbf{P-}\widetilde{\mathbf{R}}\times\mathbf{P}=0.
\]

The (average) internal energies of $\Sigma$ and $\widetilde{\Sigma}$ in their
center of mass frames (the $\mathcal{C}$ frame) are
\begin{align*}
E^{\text{i}}  &  =E-\mathbf{P}^{2}/2M,\\
\widetilde{E}^{\text{i}}  &  =\widetilde{E}-\mathbf{P}^{2}/2\widetilde{M},
\end{align*}
while $E$ and $\widetilde{E}$ denote their total energies in the lab frame
$\mathcal{L}$, respectively; see Eq. (\ref{Internal_Energy_LICFrame}).
However, because of the extreme large size of $\widetilde{\Sigma}$, its mass
$\widetilde{M}$ satisfies the inequality $\widetilde{M}>>M$, so that we can
replace $\widetilde{E}^{\text{i}}$ by$\ \widetilde{E}~$without any appreciable
error. The entropy $S$ of $\Sigma$ is a function of the internal energy
$\widetilde{E}^{\text{i}};$\ however, this is not relevant for our argument
here if we are only interested in identifying the appropriate thermodynamic
potential for the system. The energy
\[
E_{0}=E+\widetilde{E}%
\]
of $\Sigma_{0}$ remains constant in time. As discussed above, the additivity
of energy is valid under the assumption that the interaction energy
$E_{0}^{(\text{int})}(t)$ between the system and the medium is negligible.
This ensures that the entropies are also additive. In the lab frame
$\mathcal{L}$, the entropies of the $\Sigma$ and $\widetilde{\Sigma}$ are
obtained by considering their entropies\ in respective rest frames
$\mathcal{C}$\ and $\widetilde{\mathcal{C}};$ they are $S(E^{\text{i}},V,N,t)$
and $\widetilde{S}(\widetilde{E}^{\text{i}},\widetilde{V},\widetilde{N},t)$,
respectively; recall that we have set $\mathbf{\Omega}=0$ for each of them$.$

Using the fact that the medium is under internal equilibrium, we modify Eq.
(\ref{Total_Entropy}) to reflect the dependence on internal energies to obtain%
\begin{equation}
S_{0}(E_{0},V_{0},N_{0},t)=S(E^{\text{i}},V,N,t)+\widetilde{S}(\widetilde
{E}^{\text{i}},\widetilde{V},\widetilde{N})\simeq S(E^{\text{i}}%
,V,N,t)+\widetilde{S}(\widetilde{E},\widetilde{V},\widetilde{N}).
\label{Total_Entropy_0}%
\end{equation}
We now expand and follow the steps in arriving at Eq.
(\ref{Total_Subtracted_Entropy_0}); the steps are unaffected by the motion of
$\Sigma$. We thus obtain%
\[
S_{0}(t)-\widetilde{S}\simeq S(E^{\text{i}},V,N,t)-E(t)/T_{0}-P_{0}V(t)/T_{0}%
\]
in terms of the energy and volume of the system. We can now identify the Gibbs
free energy and enthalpy in the lab frame $\mathcal{L}$ in terms of the energy
of the system:%
\begin{equation}
G(t)=E(t)-T_{0}S(t)+P_{0}V(t),\ H(t)=E(t)+P_{0}V(t); \label{G-H_lab}%
\end{equation}
compare with the Gibbs free energy in Eq. (\ref{Gibbs_Free_Energy}). Thus, the
second law in terms of the Gibbs free energy remains unchanged and is given by
Eq. (\ref{Gibbs_Free_Energy_Variation}).

In the center of mass frame $\mathcal{C}$ of the system, the Gibbs free energy
and the enthalpy of the system are given by%
\begin{equation}
G^{\text{i}}(t)=E^{\text{i}}(t)-T_{0}S(t)+P_{0}V(t),\ H^{\text{i}%
}(t)=E^{\text{i}}(t)+P_{0}V(t). \label{G-H_CM}%
\end{equation}
Note that the above functions depend on the internal energy and not the energy
of the system $\Sigma$. But they are not useful in the lab frame in which the
system is being observed. Thus, we conclude that the overall motion of the
system does not change the enthalpy and the Gibbs free energy; we must use the
appropriate energy in the frame of observation; the temperature and the
pressure of the medium are not affected by the choice of the frame as noted
near the end of Sect. \ref{marker_Rotating_Systems}. Similarly, the entropy of
the system is unaffected by the choice of the frame as shown by Theorem
\ref{Theorem_Equal_Entropy}.

\subsubsection{Fixed volume $V$ of the system $\Sigma$}

Let us assume that instead of keeping $N$ fixed, we keep the volume of the
system fixed. Then following the procedure given in Sect.
\ref{marker_thermodynamic_potentials}, we find find that the correct
thermodynamic potential now in the two frames $\mathcal{L}$ and $\mathcal{C}$
are\ given by%
\begin{align*}
\Omega(t)  &  \equiv E(t)-T_{0}S(t)-\mu_{0}N(t),\\
\Omega^{\text{i}}(t)  &  \equiv E^{\text{i}}(t)-T_{0}S(t)-\mu_{0}N(t),
\end{align*}
respectively.

\subsubsection{Fixed $N$ and $V$ of the system $\Sigma$}

Let us assume that we keep $N$ and $V$ fixed. Then following the procedure
given in Sect. \ref{marker_thermodynamic_potentials}, we find find that the
correct thermodynamic potential now in the two frames $\mathcal{L}$ and
$\mathcal{C}$ are\ given by the Helmholtz free energy%
\begin{align*}
F(t)  &  \equiv E(t)-T_{0}S(t),\\
F^{\text{i}}(t)  &  \equiv E^{\text{i}}(t)-T_{0}S(t),
\end{align*}
respectively.

\subsubsection{Extension to many state variables}

From now on, we will list energy, volume and particle number for any body
separately and use $\mathbf{X}$ and $\mathbf{Z}$\ to denote the rest of the
observables and state variables, respectively. We will only fix the number of
particles $N,\widetilde{N}$ but allow all other state variables to fluctuate.
In this case, $\widetilde{S}(\widetilde{E}^{\text{i}},\widetilde{V}%
,\widetilde{N})$ in Eq. (\ref{Total_Entropy_0}) must be written as
$\widetilde{S}(\widetilde{E}^{\text{i}},\widetilde{V},\widetilde{N}%
,\widetilde{\mathbf{Z}})$ and its expansion in terms of small quantities gives%
\[
\widetilde{S}(\widetilde{E}^{\text{i}},\widetilde{V},\widetilde{N}%
,\widetilde{\mathbf{Z}})\simeq\widetilde{S}(E_{0},V_{0},\widetilde
{N},\mathbf{Z}_{0})-\left.  \left(  \frac{\partial\widetilde{S}}%
{\partial\widetilde{E}}\right)  \right\vert _{0}E(t)-\left.  \left(
\frac{\partial\widetilde{S}}{\partial\widetilde{V}}\right)  \right\vert
_{0}V(t)-\left.  \left(  \frac{\partial\widetilde{S}}{\partial\widetilde
{\mathbf{Z}}}\right)  \right\vert _{0}\cdot\mathbf{Z}(t).
\]
Here, $\left.  {}\right\vert _{0}$ corresponds evaluating the derivative at
$E_{0},V_{0},N_{0},\mathbf{Z}_{0}$ ($\mathbf{X}_{0}$ and $\mathbf{I}_{0})$ so
that these derivatives are \emph{constant}, independent of the properties of
the system; see the discussion in deriving Eq. (\ref{Medium_Fields}).
Introducing the corresponding "chemical potential vector" $\boldsymbol{\mu
}_{0}$\ for $\mathbf{X}_{0}$ and the "affinity vector" $\mathbf{a}_{0}%
\equiv\mathbf{A}_{0}/T_{0}=0$ (see Sect. \ref{Marker_Internal Variables})\ for
$\mathbf{I}_{0}$ because of the internal equilibrium of $\widetilde{\Sigma}$
\begin{equation}
\left.  \left(  \frac{\partial\widetilde{S}}{\partial\widetilde{\mathbf{X}}%
}\right)  \right\vert _{0}=-\frac{\boldsymbol{\mu}_{0}}{T_{0}},\left.  \left(
\frac{\partial\widetilde{S}}{\partial\widetilde{\mathbf{I}}}\right)
\right\vert _{0}=\frac{\mathbf{A}_{0}}{T_{0}}=0,
\label{Chemical_Potential_Internal_Variables}%
\end{equation}
we can identify a new thermodynamic potential $G^{\mathbf{X}}(t)\equiv
-T_{0}[S_{0}(E_{0},V_{0},N_{0},\mathbf{Z}_{0},t)-\widetilde{S}(E_{0}%
,V_{0},\widetilde{N},\mathbf{Z}_{0})]:$%
\begin{equation}
G^{\mathbf{X}}(t)=E(t)-T_{0}S(t)+P_{0}V(t)+\boldsymbol{\mu}_{0}\cdot
\mathbf{X}(t)=G(t)+\boldsymbol{\mu}_{0}\cdot\mathbf{X}(t)
\label{General_Thermodynamic_Potential}%
\end{equation}
in the lab frame $\mathcal{L}$. As $E_{0},V_{0},N_{0},\mathbf{X}_{0}$ and
$\widetilde{N}$ are constant, we have%
\[
\frac{d}{dt}\widetilde{S}(E_{0},V_{0},\widetilde{N},\mathbf{Z}_{0})=\left.
\left(  \frac{\partial\widetilde{S}}{\partial\mathbf{I}_{0}}\right)
\right\vert _{0}\cdot\frac{d\mathbf{I}_{0}(t)}{dt}=0
\]
because $\mathbf{a}_{0}=0$. Thus, $\widetilde{S}(E_{0},V_{0},\widetilde
{N},\mathbf{Z}_{0})$ is a constant, and the second law tells us that%
\begin{equation}
\frac{dS_{0}}{dt}=-\frac{1}{T_{0}}\frac{dG^{\mathbf{X}}}{dt}\geq0,
\label{General_Thermodynamic_Potential_Variation0}%
\end{equation}
as expected in any spontaneous process. In the $\mathcal{C}$ frame, we will
instead have%
\begin{equation}
G^{\text{i}\mathbf{X}}(t)=E^{\text{i}}(t)-T_{0}S(t)+P_{0}V(t)+\boldsymbol{\mu
}_{0}\cdot\mathbf{X}(t)=G^{\text{i}}(t)+\boldsymbol{\mu}_{0}\cdot
\mathbf{X}(t). \label{General_Thermodynamic_Potential_i}%
\end{equation}
The important point to note is that chemical potential vector $\boldsymbol{\mu
}_{0}$ and the affinity vector $\mathbf{A}_{0}=0$\ are associated with the
medium, just as $T_{0},P_{0}$ are. The analogue of the thermodynamic potential
$\Omega(t)$ and $F(t)$ are%
\begin{align*}
\Omega^{\mathbf{X}}(t)  &  =\Omega(t)+\boldsymbol{\mu}_{0}\cdot\mathbf{X}%
(t),\Omega^{\text{i}\mathbf{X}}(t)=\Omega^{\text{i}}(t)+\boldsymbol{\mu}%
_{0}\cdot\mathbf{X}(t),\\
F^{\mathbf{X}}(t)  &  =F(t)+\boldsymbol{\mu}_{0}\cdot\mathbf{X}(t),F^{\text{i}%
\mathbf{X}}(t)=F^{\text{i}}(t)+\boldsymbol{\mu}_{0}\cdot\mathbf{X}(t),
\end{align*}
respectively. Again, it follows from the second law that%
\begin{equation}
\frac{d\Omega^{\mathbf{X}}}{dt}\leq0,\frac{dF^{\mathbf{X}}}{dt}\leq0,
\label{General_Thermodynamic_Potential_Variation}%
\end{equation}
as expected in any spontaneous process.

\subsection{Gibbs Fundamental Relation for a System under Internal
Equilibrium\label{marker_homogeneous_internal}}

\subsubsection{No extra state variables $\mathbf{Z}(t)$}

We will first assume that there are no internal variables, but the system
$\Sigma$ satisfies the condition of internal equilibrium so that
$S(E^{\text{i}},V,N)$ no longer depends explicitly on time. Then, we can
identify the temperature, pressure, and the chemical potential of the system
by
\begin{subequations}
\label{CM_Derivatives}%
\begin{align}
\left(  \frac{\partial S}{\partial E^{\text{i}}}\right)   &  =\frac{1}%
{T(t)},\ \label{CM_Derivatives-1}\\
\left(  \frac{\partial S}{\partial V}\right)   &  =\frac{P(t)}{T(t)}%
,\ \label{CM_Derivatives-2}\\
\left(  \frac{\partial S}{\partial N}\right)   &  =-\frac{\mu(t)}{T(t)}.
\label{CM_Derivatives-3}%
\end{align}
This allows us to write down the Gibbs fundamental relation for constant $N$
as%
\end{subequations}
\[
T(t)dS=dE^{\text{i}}(t)+P(t)dV(t)-+\mu(t)dN(t),
\]
which can be rearranged to write down the first law of thermodynamics%
\begin{equation}
dE^{\text{i}}(t)=T(t)dS-P(t)dV(t)+\mu(t)dN(t). \label{First_Law_int}%
\end{equation}
We now turn to Eq. (\ref{Internal_Energy_0}). In terms of the momentum of the
center of mass frame $\mathcal{C}$ of the system, we have%
\begin{equation}
dE^{\text{i}}=dE-\mathbf{V}\cdot d\mathbf{P}; \label{Diff_Energy_Relation}%
\end{equation}
recall that according to our assumption, the system has no intrinsic angular
momentum. This allows us to use Eq. (\ref{First_Law_int}) to write down the
differential form%
\begin{equation}
dE=T(t)dS+\mathbf{V}\cdot d\mathbf{P}-P(t)dV(t)+\mu(t)dN(t)
\label{First-Law_Total}%
\end{equation}
for the first law of thermodynamics in terms of the energy $E(t)$ rather than
the internal energy $E^{\text{i}}(t)$. This has some important consequences
and will be extremely useful in the following. The first consequence is that
it allows us to think of $S(E^{\text{i}},V,N)$ as a function of four variables
$S(E,\mathbf{P},V,N)$:%
\[
T(t)dS=dE-\mathbf{V}\cdot d\mathbf{P}+P(t)dV(t)-\mu(t)dN(t).
\]
This equation is the Gibbs fundamental relation relating the entropy with
$E(t)$ rather than $E^{\text{i}}(t)$. The second consequence, which follows
from the Gibbs fundamental relation is that
\begin{subequations}
\label{Lab_Derivatives}%
\begin{align}
\left(  \frac{\partial S}{\partial E}\right)   &  =\frac{1}{T(t)}%
,\label{Lab_Derivatives-1}\\
\left(  \frac{\partial S}{\partial\mathbf{P}}\right)   &  =-\frac
{\mathbf{V}(t)}{T(t)},\ \label{Lab_Derivatives-2}\\
\left(  \frac{\partial S}{\partial V}\right)   &  =\frac{P(t)}{T(t)}%
,\ \label{Lab_Derivatives-3}\\
\left(  \frac{\partial S}{\partial N}\right)   &  =-\frac{\mu(t)}{T(t)}.
\label{Lab_Derivatives-4}%
\end{align}
The \emph{drift velocity} $\mathbf{V}$\ (of the center of mass) of the system
is given a thermodynamic interpretation in terms of the derivative in Eq.
(\ref{Lab_Derivatives-2}) at fixed $E,V$ and $N$. For the case when the number
of particles is held fixed, $dN(t)\equiv0$ and the last term in $dE^{\text{i}%
}(t),dE(t)$ and $dS(t)$ will be absent. For fixed $V$, the third term in
$dE^{\text{i}}(t),dE(t)$ and $dS(t)$ will be absent, and so on.

\subsubsection{Inclusion of state variables $\mathbf{Z}(t)$}

In the presence of internal variables $\mathbf{X}$ and $\mathbf{I}$, the
extension of the above relations is quite obvious. Introducing the
instantaneous chemical potential vector $\boldsymbol{\mu}(t)$ associated with
$\mathbf{X}$ and the affinity vector $\mathbf{A}$\ associated with
$\mathbf{I}$ using
\end{subequations}
\begin{equation}
\left(  \frac{\partial S}{\partial\mathbf{X}}\right)  =-\frac{\boldsymbol{\mu
}(t)}{T(t)},\left(  \frac{\partial S}{\partial\mathbf{I}}\right)
=\frac{\mathbf{A}(t)}{T(t)}, \label{Chemical_Potential_Affinity}%
\end{equation}
\ we generalize the entropy differential to
\begin{subequations}
\label{Gibbs_Fundamental_Relation_System}%
\begin{align}
T(t)dS  &  =dE^{\text{i}}(t)+P(t)dV(t)-\mu(t)dN(t)-\boldsymbol{\mu}(t)\cdot
d\mathbf{X}(t)+\mathbf{A}(t)\cdot d\mathbf{I}%
(t)\label{Gibbs_Fundamental_Relation_System_Internal}\\
&  =dE(t)-\mathbf{V}(t)\cdot d\mathbf{P}(t)+P(t)dV(t)-\mu
(t)dN(t)-\boldsymbol{\mu}(t)\cdot\mathbf{X}(t)+\mathbf{A}(t)\cdot
d\mathbf{I}(t). \label{Gibbs_Fundamental_Relation_System_General}%
\end{align}

\section{Inhomogeneous System with Relative Motion}

\subsection{Subsystems undergoing Relative Motion}

We now consider the isolated system\ to be stationary so that it has
no\ linear and angular momenta. However, $\Sigma$ and $\widetilde{\Sigma}%
$\ may have relative motion so that they each may possess linear and angular
momenta that individually must cancel out:
\end{subequations}
\begin{equation}
\mathbf{P}+\widetilde{\mathbf{P}}=0,\ \mathbf{M}+\widetilde{\mathbf{M}}=0.
\label{Net_Momenta}%
\end{equation}
\ Moreover, we will now treat $\Sigma$\ as inhomogeneous and assume that it
can be decomposed into a \emph{collection} of a large number $N_{\text{S}}$ of
\emph{subsystems} $\sigma_{k}$, $k=1,2,\cdots,N_{\text{S}},$ which may be in
different macrostates to allow for inhomogeneity and for relative motion
between different subsystems and within each subsystem. Each subsystem is
still \emph{macroscopically} large so that we can not only introduce a
legitimate entropy function $s_{k}$ for the macrostate $\mathcal{M}_{k}$ via
Gibbs's formulation%
\[
s_{k}(e_{k}^{\text{i}},\boldsymbol{\Omega}_{k},n_{k},v_{k},\mathbf{z}%
_{k},t)\equiv-%
{\textstyle\sum\limits_{\alpha_{k}}}
p_{\alpha_{k}}(t)\ln p_{\alpha_{k}}%
\]
where $\alpha_{k}$ denotes one of the allowed microstates of the subsystem
$\sigma_{k}$ corresponding to the macrostate $\mathcal{M}_{k}$ characterized
by observables $e_{k}^{\text{i}},n_{k},v_{k}$ and $\mathbf{z}_{k}$, but we
also have these entropies satisfy the \emph{additive property}%
\begin{equation}
S(E^{\text{i}},\boldsymbol{\Omega},N,V,\mathbf{Z},t)=%
{\textstyle\sum\limits_{k=1}^{N_{\text{S}}}}
s_{k}(e_{k}^{\text{i}},\boldsymbol{\Omega}_{k},n_{k},v_{k},\mathbf{z}_{k},t),
\label{Entropy_Sum_0}%
\end{equation}
which requires their \emph{quasi-independence} as discussed in Sect.
\ref{marker_entropy_additivity} at each instant $t$. Using the entropy
$s_{k},$ we can introduce the appropriate thermodynamic functions, but care
must be exercised in identifying these functions in the lab frame
$\mathcal{L}$, where the experiments are done. This is because the energies
depend on the frame of reference, which will result in different values of the
energies and thermodynamic potentials in different frames, such as the lab
frame $\mathcal{L}$ and the rotating frame frame $\mathcal{C}_{k}$ attached to
the center of mass of $\sigma_{k}$, which is translating with a linear
velocity $\mathbf{v}_{k}(t)$ and rotating with an angular velocity $\Omega
_{k}(t)$.

To make further progress, we will make another assumption later as we did in I
that each subsystem is in \emph{internal equilibrium.} This occurs when all
microstates contributing to the entropy are \emph{equiprobable}%
\[
p_{\alpha_{k}}(t)=1/W_{k},\ \ \alpha=1,2,,\cdots W_{k}(t);
\]
here $W_{k}(t)$ represents the number of microstates of the subsystem
$\sigma_{k}$ at time $t$. Under the equiprobability assumption or the internal
equilibrium assumption,
\[
s_{k}(t)=\ln W_{k}(t),
\]
which is what one would obtain by applying the Boltzmann formulation of the
entropy \cite{Gujrati-Symmetry}. It also follows from Theorem
\ref{marker_Uniform_Motion} that the entire subsystem is uniformly translating
with a linear velocity $\mathbf{v}_{k}(t)$ and rotating with an angular
velocity $\Omega_{k}(t)$ so that the system in internal equilibrium is
\emph{stationary} in the frame $\mathcal{C}_{k}$.

\subsection{System under Arbitrary
Conditions\label{marker_inhomogeneous_arbitrary}}

The internal energy for each subsystem $\sigma_{k}$ is related to its energy
$e_{k}$ in the lab frame $\mathcal{L}$
\begin{equation}
e_{k}^{\text{i}}\equiv e_{k}-\mathbf{p}_{k}^{2}/2m_{k}-\mathbf{m}_{k}%
\cdot\boldsymbol{\Omega}_{k}; \label{subsystem_energy_relation0}%
\end{equation}
see Eq. (\ref{Internal_Energy_0}). Alternatively, we can use Eq.
(\ref{Internal_Energy_2}) to express this relation as
\begin{equation}
e_{k}^{\text{i}}=e_{k}-\mathbf{p}_{k}^{2}/2m_{k}-%
{\displaystyle\sum_{j_{k}}}
m_{j_{k}}\mathbf{r}_{j_{k}}\cdot\left(  \mathbf{v}_{j_{k}}\times
\mathbf{\Omega}_{k}\right)  -%
\frac12
{\displaystyle\sum_{j_{k}}}
m_{j_{k}}(\mathbf{\Omega}_{k}\mathbf{\times r}_{j_{k}})^{2},
\label{subsystem_energy_relation1}%
\end{equation}
where each sum is over $n_{k}$ particles in the subsystem $k$. The third term
in the last equation vanishes when the subsystem is in internal equilibrium as
commented earlier. We also have the additivity laws
\begin{subequations}
\label{Conservation_Laws}%
\begin{align}
V  &  =%
{\displaystyle\sum}
v_{k},N=%
{\displaystyle\sum}
n_{k},\mathbf{X}=%
{\displaystyle\sum}
\mathbf{x}_{k},\mathbf{I}=%
{\displaystyle\sum}
\mathbf{i}_{k},\label{Conservation_V}\\
E  &  =%
{\displaystyle\sum}
e_{k}=%
{\displaystyle\sum}
(e_{k}^{\text{i}}+\mathbf{p}_{k}^{2}/2m_{k}+\mathbf{m}_{k}\cdot\Omega
_{k}),\label{Conservation_E}\\
\mathbf{P}  &  =%
{\displaystyle\sum}
\mathbf{p}_{k},\mathbf{M}=%
{\displaystyle\sum}
(\mathbf{m}_{k}+\mathbf{r}_{k}\times\mathbf{p}_{k}), \label{Conservation_P}%
\end{align}
at each instant $t$. The angular momentum $\mathbf{l}_{k}\equiv$
$\mathbf{r}_{k}\times\mathbf{p}_{k}$\ is the orbital angular momentum of
$\sigma_{k}$ with $\mathbf{r}_{k},$ $\mathbf{p}_{k}$ representing the location
and momentum of the center of mass of $\sigma_{k}$, respectively, and should
not be confused with its intrinsic angular momentum $\mathbf{m}_{k}$
introduced in Eq. (\ref{System_Definition}). We should emphasize that the
additivity of the energy requires that the interaction energy between
subsystems be negligible. This condition is necessary for the entropy to be
additive as discussed in Sect. \ref{marker_entropy_additivity}. As a
consequence, we do not have the contribution analogous to $\psi(\mathbf{r})dV$
in Eq. (\ref{Energy_Additivity_Thermodynamics}). This distinguishes our
approach with that taken in local non-equilibrium thermodynamics
\cite{Donder,deGroot,Prigogine,Prigogine-old}.

We keep $n_{k}$ fixed for simplicity so that $N$ is also fixed, and allow
$e_{k}^{\text{i}},v_{k}$ and $\mathbf{z}_{k}$\ to vary in time. We assume as
above that $\widetilde{\Sigma}$ has well-defined field variables ($T_{0}%
,P_{0}$, etc.) which do not change with time. This is ensured by assuming
internal equilibrium for the medium. This is the only assumption we make
regarding the isolated system $\Sigma_{0}$ for which we have%
\end{subequations}
\[
E_{0}=\widetilde{E}+E,V_{0}=\widetilde{V}+V,N_{0}=\widetilde{N}+N,\mathbf{X}%
_{0}=\widetilde{\mathbf{X}}+\mathbf{X,I}_{0}=\widetilde{\mathbf{I}}%
+\mathbf{I}.
\]
Because the medium is taken to be in internal equilibrium, its energy is
related to its internal energy according to Eq. (\ref{Internal_Energy_0})%
\begin{equation}
\widetilde{E}^{\text{i}}\equiv\widetilde{E}-\mathbf{P}^{2}/2\widetilde
{M}-\widetilde{\mathbf{M}}\cdot\widetilde{\boldsymbol{\Omega}}=E-\mathbf{P}%
^{2}/2M-\mathbf{M}^{2}/2\widetilde{I}, \label{General_Ei_E_Relation_Medium}%
\end{equation}
where according to Eq. (\ref{Net_Momenta}), we have taken $\widetilde
{\mathbf{M}}=-\mathbf{M}$, and where $\widetilde{I}$ is the moment of inertia
of the medium about its axis of rotation. The axis of rotation must be one of
its principal axis of rotation; see the comment after the proof of Theorem
\ref{marker_Uniform_Motion} in Sect. \ref{marker_internal_equilibrium}. The
contribution coming from the internal motion, which is similar to the third
contribution in Eq. (\ref{subsystem_energy_relation1}) applied to the medium
vanishes because of its uniform translation and rotation following Theorem
\ref{marker_Uniform_Motion} applied to the medium. The contribution similar to
the last term in Eq. (\ref{subsystem_energy_relation1}) is the standard
rotational kinetic energy of the medium treated as a rigid body. The angular
momentum $\widetilde{\mathbf{M}}$ is given by%
\[
\widetilde{M}_{ij}=\widetilde{I}_{ij}\widetilde{\Omega}_{j}=-M_{ij}.
\]
Assuming that the motions are finite, we conclude that $\widetilde{M}_{ij}$
must be finite. Therefore, for an extremely large medium, $\widetilde
{\mathbf{\Omega}}$ and $\widetilde{\mathbf{V}}$\ must be extremely small,
which ensure that the last two terms in Eq.
(\ref{General_Ei_E_Relation_Medium}) are extremely small. This allows us to
approximate%
\begin{equation}
\widetilde{E}\simeq\widetilde{E}^{\text{i}}
\label{Medium_Energy_Internal_Energy}%
\end{equation}
without any appreciable error. As far as the system is concerned, the
relationship between its internal energy and the energy is still given by Eq.
(\ref{Internal_Energy})
\begin{equation}
E^{\text{i}}\equiv E-\mathbf{P}^{2}/2M-\mathbf{M}\cdot\boldsymbol{\Omega,}
\label{General_Ei_E_Relation}%
\end{equation}
except that the system may not be uniformly translating and rotating about any
of its principal axis of inertia. We take $\widetilde{N}$ and $N$ as
constants, but allow for $E,V,$ and $\mathbf{Z}$ ($\mathbf{X}$ and
$\mathbf{I}$) to change with time. Now, we use the entropy additivity
\[
S_{0}(E_{0}^{\text{i}},N_{0},V_{0},\mathbf{Z}_{0},t)=S(E^{\text{i}%
},\boldsymbol{\Omega,}N,V,\mathbf{Z},t)+\widetilde{S}(\widetilde{E}^{\text{i}%
},\widetilde{\boldsymbol{\Omega}}\boldsymbol{,}\widetilde{N},\widetilde
{V},\widetilde{\mathbf{Z}})
\]
and expand
\begin{align*}
\widetilde{S}(\widetilde{E}^{\text{i}},\widetilde{\boldsymbol{\Omega}%
}\boldsymbol{,}\widetilde{N},\widetilde{V},\widetilde{\mathbf{Z}})  &
=\widetilde{S}(E_{0},\widetilde{\boldsymbol{\Omega}}\boldsymbol{,}%
\widetilde{N},V_{0},\mathbf{Z}_{0})-E\left.  \frac{\partial\widetilde{S}%
}{\partial\widetilde{E}}\right\vert _{0}-V\left.  \frac{\partial\widetilde{S}%
}{\partial\widetilde{V}}\right\vert _{0}-\mathbf{Z\cdot}\left.  \frac
{\partial\widetilde{S}}{\partial\widetilde{\mathbf{Z}}}\right\vert _{0}\\
&  =\widetilde{S}(E_{0},\widetilde{\boldsymbol{\Omega}}\boldsymbol{,}%
\widetilde{N},V_{0},\mathbf{Z}_{0})-\frac{E}{T_{0}}-V\frac{P_{0}}{T_{0}%
}-\mathbf{X\cdot}\frac{\boldsymbol{\mu}_{0}}{T_{0}},
\end{align*}
by treating $E,V,\mathbf{X}$ and $\mathbf{I}$ as small parameters; here,
$\left.  {}\right\vert _{0}$ corresponds to evaluating the derivatives at
$E_{0},\widetilde{\boldsymbol{\Omega}}\boldsymbol{,}V_{0},N_{0},\mathbf{Z}%
_{0}$ ($\mathbf{X}_{0}$ and $\mathbf{I}_{0}$). We have also used the
definitions of the field variables of the medium in terms of the derivatives
of the entropy $\widetilde{S}$ in Eqs. (\ref{Medium_Fields}) and
(\ref{Chemical_Potential_Internal_Variables}), and set $\mathbf{A}_{0}=0$ as
established in Sect. \ref{Marker_Internal Variables}. We thus finally obtain%
\begin{align*}
S_{0}(E_{0},N_{0},V_{0},\mathbf{Z}_{0},t)-\widetilde{S}(E_{0}\widetilde
{\boldsymbol{\Omega}}\boldsymbol{,}\widetilde{N},V_{0},\mathbf{Z}_{0})  &
=S-\frac{E}{T_{0}}-V\frac{P_{0}}{T_{0}}=-\frac{G}{T_{0}}-\mathbf{X\cdot}%
\frac{\boldsymbol{\mu}_{0}}{T_{0}}\\
&  =-\frac{G^{\text{i}}}{T_{0}}-\frac{\mathbf{P}^{2}}{2MT_{0}}-\frac
{\mathbf{M}\cdot\mathbf{\Omega}}{T_{0}}-\frac{\mathbf{X\cdot}\boldsymbol{\mu
}_{0}}{T_{0}},
\end{align*}
where
\[
G\equiv E-T_{0}S+P_{0}V,\ \ G^{\text{i}}\equiv E^{\text{i}}-T_{0}S+P_{0}V.
\]
Observe that the expansion of $\widetilde{S}(\widetilde{E}^{\text{i}%
},\widetilde{\boldsymbol{\Omega}}\boldsymbol{,}\widetilde{N},\widetilde
{V},\widetilde{\mathbf{Z}})$ does not require any knowledge of what is
happening with the system $\Sigma$. For example, we have \emph{not} assumed
internal equilibrium in the system or any of its subsystems. Since
$\widetilde{S}(E_{0},\widetilde{N},V_{0},\mathbf{Z}_{0})$ is a constant as
established earlier, the second law tells us that
\begin{equation}
\frac{dS_{0}}{dt}=-\frac{1}{T_{0}}\frac{dG^{\mathbf{X}}}{dt}\geq0,
\label{Gibbs_Free_Energy_Variation_0}%
\end{equation}
the desired result for the system. The inhomogeneity of the system has no
relevance in the above conclusion.

Let us now assume that $\mathbf{X}=0$ and introduce%
\begin{equation}
g_{k}\equiv e_{k}-T_{0}s_{k}+P_{0}v_{k},\ \ g_{k}^{\text{i}}\equiv
e_{k}^{\text{i}}-T_{0}s_{k}+P_{0}v_{k} \label{Inhomogeneous_Gibbs_Free_Energy}%
\end{equation}
for the subsystem $\sigma_{k}$. It is clear that%
\[
G=%
{\displaystyle\sum\limits_{k}}
g_{k},G^{\text{i}}=%
{\displaystyle\sum\limits_{k}}
g_{k}^{\text{i}}.
\]
Because of the quasi-independence of various subsystems, Eq.
(\ref{Gibbs_Free_Energy_Variation_0}) immediately leads to
\begin{equation}
\frac{dg_{k}}{dt}\leq0. \label{subsystem_Gibbs_Free_Energy_Variation}%
\end{equation}
Thus, $g_{k}(t)$ can be identified as the Gibbs free energy of the subsystem
$\sigma_{k}$ in the lab frame $\mathcal{L}$. Comparing this definition with
the definition in Eq, (\ref{Local_Gibbs_Free_Energy}) used in the conventional
non-equilibrium thermodynamics, we see that the discrepancy in the two Gibbs
free energy has not disappeared by taking into account the inhomogeneity
inherent in the system.

It may be argued that the above identification of $g_{k}$ in Eq.
(\ref{Inhomogeneous_Gibbs_Free_Energy}) is based on considering the entire
system in contact with the medium. One can alternatively consider a particular
subsystem $\sigma$ of the system in contact with the medium and the remaining
subsystems. However, a little bit of reflection shows that this will not
affect the behavior of $g_{k}$\ for the simple reason that the remaining
subsystems still form a very small part of the isolated system so that they
\emph{cannot} affect the internal equilibrium of the medium. To see this more
clearly, let us introduce a new medium $\widetilde{\Sigma}^{\prime}$
consisting of $\widetilde{\Sigma}$ and the remaining subsystems. Only for the
sake of simplicity, we do not consider any additional state variables. The
argument can be easily extended to include them. Let $\widetilde{S}^{\prime}$
denote the entropy and $\widetilde{E}^{\text{i}\prime}\simeq\widetilde
{E}^{\prime},$ $\widetilde{V}^{\prime}$\ and $\widetilde{N}^{\prime}$\ the
internal energy, volume and the number of particles of $\widetilde{\Sigma
}^{\prime}$, the latter of which is kept fixed. Expanding this entropy in
terms of the small quantities $e^{\text{i}}$ and $v$ of the chosen subsystem
requires calculating the derivatives%
\[
\left.  \frac{\partial\widetilde{S}^{\prime}}{\partial\widetilde{E}^{\prime}%
}\right\vert _{E_{0},V_{0}}\text{ and }\left.  \frac{\partial\widetilde
{S}^{\prime}}{\partial\widetilde{V}^{\prime}}\right\vert _{E_{0},V_{0}}.
\]
Because of the small size of the system relative to the original medium
$\widetilde{\Sigma}$, these derivatives are not different from%
\[
\left.  \frac{\partial\widetilde{S}}{\partial\widetilde{E}}\right\vert
_{E_{0},V_{0}}\text{ and }\left.  \frac{\partial\widetilde{S}}{\partial
\widetilde{V}}\right\vert _{E_{0},V_{0}},
\]
respectively. Thus, using $e=E_{0}-\widetilde{E}$ for the energy of the
subsystem $\sigma_{k}$, we find that
\[
\widetilde{S}^{\prime}(\widetilde{E}^{\text{i}\prime},\widetilde
{\boldsymbol{\Omega}}\boldsymbol{,}\widetilde{N}^{\prime},\widetilde
{V}^{\prime})=\widetilde{S}(E_{0},\widetilde{\boldsymbol{\Omega}%
}\boldsymbol{,}\widetilde{N},V_{0})-e\left.  \frac{\partial\widetilde
{S}^{\prime}}{\partial\widetilde{E}^{\prime}}\right\vert _{E_{0},V_{0}%
}-v\left.  \frac{\partial\widetilde{S}^{\prime}}{\partial\widetilde{V}%
^{\prime}}\right\vert _{E_{0},V_{0}}=\widetilde{S}(E_{0},\widetilde{N}%
,V_{0})-\frac{e}{T_{0}}-v\frac{P_{0}}{T_{0}}.
\]
Using this in $S_{0}(E_{0}^{\text{i}},N_{0},V_{0},t)=s(e^{\text{i}%
},\boldsymbol{\Omega,}n,v,t)+\widetilde{S}^{\prime}(\widetilde{E}%
^{\text{i}\prime},\widetilde{\boldsymbol{\Omega}}\boldsymbol{,}\widetilde
{N}^{\prime},\widetilde{V}^{\prime})$, where $\boldsymbol{\Omega}$\ is the
angular velocity of the $\mathcal{C}$ frame of the selected subsystem, allows
us to identify
\begin{equation}
g\equiv e-T_{0}s+P_{0}v \label{Subsystem_Gibbs_free_energy}%
\end{equation}
as the Gibbs free energy of the particular subsystem $\sigma$. It now follows
from Eq. (\ref{Second_Law_0}) that%
\[
\frac{dg}{dt}\leq0,
\]
a property we expect from the Gibbs free energy of a system. Incidentally,
this also provides an independent justification of the inequality in Eq.
(\ref{subsystem_Gibbs_Free_Energy_Variation}).

In terms of $g_{k}^{\text{i}}$, we immediately have%
\[
G^{\text{i}}=%
{\displaystyle\sum\limits_{k}}
(g_{k}^{\text{i}}+\frac{\mathbf{p}_{k}^{2}}{2m_{k}}+\mathbf{m}_{k}\cdot
\Omega_{k}),
\]
which is expected in view of the sum rule in Eqs. (\ref{Conservation_E}) and
(\ref{General_Ei_E_Relation}).

\subsection{System under Internal
Equilibrium\label{marker_inhomogeneous_internal}}

The above derivation only uses the second law, and the assumption that the
medium satisfies the condition of internal equilibrium. The situation with the
first law is very different. In general, the differential $ds_{k}$ of the
entropy of the subsystem $\sigma_{k}$ is given by
\[
ds_{k}=\frac{\partial s_{k}}{\partial e_{k}}de_{k}^{\text{i}}+\frac{\partial
s_{k}}{\partial v_{k}}dv_{k}+\frac{\partial s_{k}}{\partial n_{k}}dn_{k}%
+\frac{\partial s_{k}}{\partial\boldsymbol{\Omega}_{k}}\cdot
d\boldsymbol{\Omega}_{k}+\frac{\partial s_{k}}{\partial\mathbf{z}_{k}}\cdot
d\mathbf{z}_{k}(t)+\frac{\partial s_{k}}{\partial t}dt;
\]
cf. Eqs. (\ref{Equal_Entropy}) and (\ref{Functional_dependence}). If we now
assume internal equilibrium so that $s_{k}$ does not have an explicit
$t$-dependence, we can get rid of the last term above. In this case only, the
remaining derivatives identify the field variables $1/T_{k},P_{k}/T_{k},$
$-\mu_{k}/T_{k}$ etc. of the subsystem $\sigma_{k}:$%
\begin{equation}
\frac{\partial s_{k}}{\partial e_{k}}=\frac{1}{T_{k}(t)},\frac{\partial s_{k}%
}{\partial v_{k}}=\frac{P_{k}(t)}{T_{k}(t)},\frac{\partial s_{k}}{\partial
n_{k}}=-\frac{\mu_{k}(t)}{T_{k}(t)}dn_{k},\frac{\partial s_{k}}{\partial
\boldsymbol{\Omega}_{k}}=\frac{\mathbf{m}_{k}(t)}{T_{k}(t)},\frac{\partial
s_{k}}{\partial\mathbf{x}_{k}}=-\frac{\boldsymbol{\mu}^{(k)}(t)}{T_{k}%
(t)},\frac{\partial s_{k}}{\partial\mathbf{i}_{k}}=\frac{\mathbf{A}^{(k)}%
(t)}{T_{k}(t)}. \label{subsystem_parameters0}%
\end{equation}
These derivatives then allows us to obtain the Gibbs fundamental relation%
\begin{equation}
de_{k}^{\text{i}}(t)=T_{k}(t)ds_{k}-P_{k}(t)dv_{k}+\mu_{k}(t)dn_{k}%
-\mathbf{m}_{k}(t)\cdot d\boldsymbol{\Omega}_{k}+\boldsymbol{\mu}%
^{(k)}(t)\cdot d\mathbf{x}_{k}(t)-\mathbf{A}^{(k)}(t)\cdot d\mathbf{i}_{k}(t).
\label{subsystem_First_Law_internal}%
\end{equation}
Using%
\[
de_{k}(t)=de_{k}^{\text{i}}(t)+\mathbf{v}_{k}(t)\cdot d\mathbf{p}%
_{k}(t)+d[\mathbf{\Omega}_{k}(t)\cdot\mathbf{m}_{k}(t)],
\]
we find the first law of thermodynamics can be expressed in terms of the
energy as%
\begin{align}
de_{k}  &  =T_{k}(t)ds_{k}(t)+\mathbf{v}_{k}(t)\cdot d\mathbf{p}%
_{k}(t)+\mathbf{\Omega}_{k}(t)\cdot d\mathbf{m}_{k}(t)-P_{k}(t)dv_{k}%
(t)\label{First_Law_General}\\
&  +\mu_{k}(t)dn_{k}(t)+\boldsymbol{\mu}^{(k)}(t)\cdot d\mathbf{x}%
_{k}(t)-\mathbf{A}^{(k)}(t)\cdot d\mathbf{i}_{k}(t);\nonumber
\end{align}
compare with Eq. (\ref{EL_differential}). This allows us to think of the
entropy $s_{k}((e_{k}^{\text{i}},\boldsymbol{\Omega}_{k},n_{k},v_{k}%
,\mathbf{z}_{k},t)$ as a function%
\[
s_{k}(e_{k},\mathbf{p}_{k},\mathbf{m}_{k},n_{k},v_{k},\mathbf{z}_{k},t)
\]
so that the drift and the angular velocities in internal equilibrium are given
by%
\begin{equation}
\frac{\mathbf{v}_{k}(t)}{T_{k}(t)}=-\frac{\partial s_{k}(t)}{\partial
\mathbf{p}_{k}(t)},\ \ \frac{\mathbf{\Omega}_{k}(t)}{T_{k}(t)}=-\frac{\partial
s_{k}(t)}{\partial\mathbf{m}_{k}(t)}. \label{subsystem_parameters1}%
\end{equation}
However, different subsystems will undergo relative motions with respect to
each other as $\mathbf{v}_{k}(t)$ and $\mathbf{\Omega}_{k}(t)$ are different
for them. In addition, their temperatures $T_{k}(t)$ and pressures $P_{k}(t)$
are also different for each other, and so are $\boldsymbol{\mu}^{(k)}(t)\ $and
$\mathbf{A}^{(k)}(t)$. Thus, there would be viscous dissipation and,
consequently, entropy production as they come to equilibrium with each other
and with the medium. We now turn to this issue.

\section{Reversible and Irreversible Contributions}

\subsection{General Considerations: Bodies in a Medium}

It is a well-known fact \cite[Sect. 10]{Landau} that in equilibrium, the
system $\Sigma$ has a uniform translational motion as a whole with a constant
velocity $\mathbf{V}_{0}$ and a uniform rotation of the whole system with a
constant angular velocity $\boldsymbol{\Omega}_{0}$. Thus, there cannot be
relative motions between different subsystems in equilibrium. Thus, no
internal macroscopic motion is possible in equilibrium. In equilibrium, the
coefficients of the differential quantities in Eq. (\ref{First_Law_General})
take their constant values of the medium. Writing
\begin{equation}
ds_{k}(t)\equiv d_{\text{e}}s_{k}(t)+d_{\text{i}}s_{k}(t)
\label{entropy_generation_0}%
\end{equation}
as a sum of the change in the entropy $d_{\text{e}}s_{k}(t)$ due to reversible
exchange with the medium and the production of the entropy $d_{\text{i}}%
s_{k}(t)\geq0$ due to irreversible processes within the system, we have
\[
d_{\text{e}}s_{k}=\frac{1}{T_{0}}[de_{k}(t)-\mathbf{V}_{0}(t)\cdot
d\mathbf{p}_{k}(t)-\mathbf{\Omega}_{0}\cdot d\mathbf{m}_{k}(t)+P_{0}%
dv_{k}(t)-\mu_{0}dn_{k}(t)-\boldsymbol{\mu}_{0}\cdot d\mathbf{x}_{k}(t)]
\]
and
\begin{align}
d_{\text{i}}s_{k}  &  =F\left[  \frac{1}{T_{k}(t)}\right]  ds_{k}(t)+F\left[
-\frac{\mathbf{v}_{k}(t)}{T_{k}(t)}\right]  \cdot d\mathbf{p}_{k}(t)+F\left[
-\frac{\mathbf{\Omega}_{k}(t)}{T_{k}(t)}\right]  \cdot d\mathbf{m}%
_{k}(t)+F\left[  \frac{P_{k}(t)}{T_{k}(t)}\right]  dv_{k}(t)\nonumber\\
&  +F\left[  -\frac{\mu_{k}(t)}{T_{k}(t)}\right]  dn_{k}(t)+F\left[
-\frac{\boldsymbol{\mu}^{(k)}}{T_{k}(t)}\right]  \cdot d\mathbf{x}%
_{k}(t)+F\left[  \frac{\mathbf{A}^{(k)}}{T_{k}(t)}\right]  \cdot
d\mathbf{i}_{k}(t)\label{entropy_generation}\\
&  \geq0, \label{entropy_generation_limit}%
\end{align}
where
\begin{equation}
F_{z}\equiv F\left[  w\right]  =w(t)-w_{\text{eq}} \label{Irreversible_Change}%
\end{equation}
is the difference of the conjugate field $w(t)$ at time $t$ and its value in
equilibrium, i.e. as $t\rightarrow\infty$, and represents the thermodynamic
force associated with the conjugate extensive quantity $z(t)$; see also Eq.
(\ref{Force_y}). For the internal variables, the equilibrium value of
$\mathbf{A}_{0}$ is zero. According to the second law of thermodynamics, not
only $d_{\text{i}}s_{k}\geq0,$ but each term in Eq. (\ref{entropy_generation})
is non-negative:
\[
F\left[  w\right]  dz\geq0,
\]
where $z$ and $w$ form a conjugate pair. In terms of these pairs, we can
express the two parts of the entropy as follows:%
\begin{equation}
ds_{k}=\mathbf{w}_{k}\cdot d\mathbf{z}_{k}\mathbf{,\ \ }d_{\text{e}}%
s_{k}=\mathbf{w}_{0k}\cdot d\mathbf{z}_{k}\mathbf{,\ \ }d_{\text{i}}%
s_{k}=\mathbf{F}\left[  \mathbf{w}_{k}\right]  \cdot d\mathbf{z}_{k}\mathbf{,}
\label{Generalized_Entropy_Change}%
\end{equation}
which is the general form of the entropy differenmtial and its two components.

Let us now turn back to the current case under investigation. We can express
the generalized Gibbs fundamental relation as%
\[
de_{k}\equiv d_{\text{e}}e_{k}(t)+d_{\text{i}}e_{k}(t),
\]
where%
\[
d_{\text{e}}e_{k}=T_{0}ds_{k}(t)+\mathbf{V}_{0}(t)\cdot d\mathbf{p}%
_{k}(t)+\mathbf{\Omega}_{0}\cdot d\mathbf{m}_{k}(t)-P_{0}dv_{k}(t)+\mu
_{0}dn_{k}(t)+\boldsymbol{\mu}_{0}\cdot d\mathbf{x}_{k}(t),
\]
and
\begin{align*}
d_{\text{i}}e_{k}  &  =F\left[  T_{k}(t)\right]  ds_{k}(t)+F\left[
\mathbf{v}_{k}(t)\right]  \cdot d\mathbf{p}_{k}(t)+F\left[  \mathbf{\Omega
}_{k}(t)\right]  \cdot d\mathbf{m}_{k}(t)-F\left[  P_{k}(t)\right]
dv_{k}(t)\\
&  +F\left[  \mu_{k}(t)\right]  dn_{k}(t)+F\left[  \boldsymbol{\mu}%
^{(k)}\right]  \cdot d\mathbf{x}_{k}(t)-F\left[  \mathbf{A}^{(k)}\right]
\cdot d\mathbf{i}_{k}(t).
\end{align*}
The generalized form for the Gibbs fundamental relation is evidently
\[
de_{k}=T_{k}ds_{k}-\mathbf{W}_{k}\cdot d\mathbf{z}^{\prime}\mathbf{,\ \ }%
d_{\text{e}}e_{k}=T_{0}ds_{k}-\mathbf{W}_{0k}\cdot d\mathbf{z}^{\prime
}\mathbf{,\ \ }d_{\text{i}}e_{k}=F\left[  T_{k}\right]  ds_{k}-\mathbf{F}%
\left[  \mathbf{W}_{k}\right]  \cdot d\mathbf{z}^{\prime}\mathbf{,}%
\]
where $\mathbf{z}^{\prime}$ represents all state variables except the energy
$e_{k}$, and where $\mathbf{W}_{k}=T\mathbf{w}$ and $\mathbf{W}_{0}%
\mathbf{=T}_{0}\mathbf{w}_{0}$.

It is easy to see that performing the Legendre transform to obtain the Gibbs
free energy in Eq. (\ref{Inhomogeneous_Gibbs_Free_Energy}) only affects the
form of $d_{\text{e}}s_{k},$ but leaves $d_{\text{i}}s_{k}$ unaffected. Thus,
it is easy to see that
\[
dg_{k}(t)\equiv d_{\text{e}}g_{k}(t)+d_{\text{i}}g_{k}(t),
\]
where
\[
d_{\text{e}}g_{k}=-s_{k}(t)dT_{0}+\mathbf{V}_{0}(t)\cdot d\mathbf{p}%
_{k}(t)+\mathbf{\Omega}_{0}\cdot d\mathbf{m}_{k}(t)+v_{k}(t)dP_{0}+\mu
_{0}dn_{k}(t)+\boldsymbol{\mu}_{0}\cdot d\mathbf{x}_{k}(t),
\]
and
\[
d_{\text{i}}g_{k}\equiv d_{\text{i}}e_{k}.
\]
The general form of $d_{\text{e}}g_{k}$ is
\[
d_{\text{e}}g_{k}=-s_{k}(t)dT_{0}+v_{k}(t)dP_{0}-\mathbf{W}_{k}\cdot
d\mathbf{z}^{^{\prime\prime}},
\]
where $\mathbf{z}^{^{\prime\prime}}$ represents all state variables except the
energy $e_{k}$ and the volume $v_{k}$.

\subsection{Bodies forming a Finite Isolated System without a
Medium\label{marker_comparable_bodies}}

So far, we have considered a system or a collection of subsystems in a very
large medium $\widetilde{\Sigma}$ so that its field variables are held fixed
at $\mathbf{Y}_{0}$. We now consider a collection of bodies of comparable
sizes forming an isolated system $\Sigma_{0}$. In this case, we cannot treat
any of the bodies as a (macroscopically extensively large) medium with a fixed
field at $\mathbf{Y}_{0}$. As the collection strives towards equilibrium,
their field variables continue to change in time. This causes a problem in
identifying reversible contributions to various quantities. To solve this
problem, we discuss below a simple case; the generalization to more complex
situation will be obvious.

\subsubsection{Two Bodies}

Let us consider the simplest possible case of two comparable bodies $1$ and
$2$ in internal equilibrium at temperature $T_{1}$ and $T_{2}>T_{1}$,
respectively, that are brought in thermal contact at time $t=0.$ We will
simplify the discussion by assuming that all other extensive observables
besides the energy are held fixed. The case of two bodies in the shape of
rectangular cuboid along the $x$-axis are shown schematically in Fig.
\ref{Fig_two_boxes_isolated}(a), with the rectangular interface of area $A$
lying in the $yz$-plane.
\begin{figure}
[ptb]
\begin{center}
\includegraphics[
trim=0.000000in 0.000000in 0.000000in -0.199447in,
height=4.3716in,
width=4.8862in
]%
{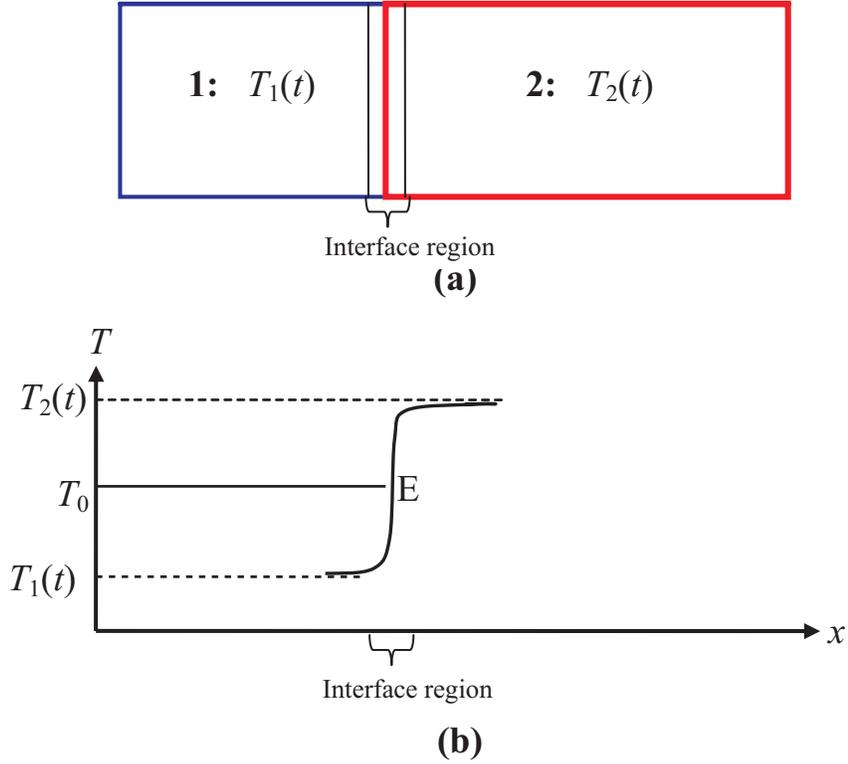}%
\caption{We show in Fig. (a) a simplified situation in which two bodies $1$
and $2$ in thermal contact are alligned along the $x$-axis. Initially, the two
bodies are in equilibrium at temperatures $T_{1}$ and $T_{2}>T_{1}$. Their
contact interface is taken to be a plane orthogonal to the $x$-axis as shown
in the figure, but the discussion is valid for any shape of the interface. In
reality, however, any interface is an interface region of some very small
width $\Delta x$\ over which the temperature continuously changes between
$T_{1}(t)$ and $T_{2}(t)$ so that there is a very narrow region of width
$dx<<\Delta x$ around a point E on the $x$-axis, where the temperature is
exactly $T_{0}$, the equilibrium temperature of the two bodies, as shown in
Fig. (b) for the simple case shown in Fig. (a). }%
\label{Fig_two_boxes_isolated}%
\end{center}
\end{figure}
Let $dQ=dE$ be the infinitesimal heat given to the body $1$ by the body $2$.
As the amount of heat is infinitesimal, it does not alter the temperatures of
the bodies in any significant way so that the entropy of the isolated system
$\Sigma_{0}$ changes by%
\[
dS_{0}=d_{\text{i}}S_{0}=dQ\left(  \frac{1}{T_{1}}-\frac{1}{T_{2}}\right)  >0
\]
at $t=0$.; compare with Eq. (\ref{Isolated_S_Generation}). This expression
will not be correct as time goes on as the temperature continues to change.
The general expression for the irreversible entropy generation is
\begin{equation}
dS_{0}(t)=d_{\text{i}}S_{0}(t)=dQ(t)\left(  \frac{1}{T_{1}(t)}-\frac{1}%
{T_{2}(t)}\right)  >0, \label{Irreversible_Entropy_Change_Total}%
\end{equation}
where $dQ(t)$ is the infinitesimal heat given to the body $1$ by the body $2$
between $t$ and $t+dt$, and $T_{1}(t),T_{2}(t)$ are the instantaneous internal
temperatures of the two bodies. It is clear from the discussion following Eq.
(\ref{Isolated_S_Generation}) that the irreversible entropy generation in Eq.
(\ref{Irreversible_Entropy_Change_Total}) is the sum of entropy changes in the
two bodies. Thus, Eq. (\ref{Irreversible_Entropy_Change_Total}) is exact for
the isolated system consisting of the two bodies. However, the point to
remember is that it is the irreversible entropy generation in the two bodies,
and says nothing about the irreversible entropy generation within each body.

If the system above were not isolated, then Eq.
(\ref{Irreversible_Entropy_Change_Total}) would still give the entropy change
due to the direct flow of heat between the two bodies, but it not represent
the irreversible entropy generation due to this heat flow. Moreover, there
would also be changes in the entropy of the system and each of the bodies due
to heat exchange with the medium or other bodies. These entropy changes will
also have their own irreversible entropy generations.\ The presence of the
medium at the \emph{equilibrium} temperature $T_{0}$ of the isolated system is
discussed below. The situation when $T_{0}$ is not the equilibrium temperature
of the isolated system is very different, as discussed later; see the
discussion after Eq. (\ref{Entropy_Generation_Each_Body_0}).

What can we say about the irreversible entropy generation within each body?
This question is, to the best of our knowledge, is not answered within the
local thermodynamics approach. To answer this question in our approach, we
proceed as follows. We know that the entropy generation in each body must
vanish when the bodies come to equilibrium. To obtain the desired result, we
introduce the common temperature $T_{0}$, when the two bodies come to
equilibrium. We now discuss two alternative approaches to determine the
individual entropy generation on the way to prove Theorem
\ref{Theorem_Isolated_Bodies}.

\paragraph{Introduction of a Medium at constant $T_{0}$}

Let us imagine that we insert the two bodies in an extensively large medium
$\widetilde{\Sigma}$ kept at a fixed temperature $T_{0}$, with all three
bodies now forming a new isolated system $\Sigma_{0}^{\prime}$. The situation
is schematically shown in Fig. \ref{Fig_two_boxex_in_medium}. There is no
direct contact between the two bodies, as shown. Despite this, the heat loss
$dQ(t)\geq0$ by the body $2$ to $\widetilde{\Sigma}$ is completely transferred
to body $1$. An alternative is to insert the medium between the two bodies
along the $x$-axis but not surrounding them from all sides with the same
effect of transferring the entire heat $dQ(t)$ from body $2$ to body $1$. The
width of the medium along the $x$-axis may be infinitesimally small of order
$dx$, but must have a macroscopically large cross-sectional area in the
$yz$-plane to ensure its constant temperature $T_{0}$ at all times. In either
case, the entropy of the medium does not change so that $d\widetilde
{S}=0,d_{\text{e}}\widetilde{S}=0$, and $d_{\text{i}}\widetilde{S}=0.$
Therefore, the irreversible entropy generation in the new isolated system
$\Sigma_{0}^{^{\prime}}$ is identical to the irreversible entropy generation
in $\Sigma_{0}$. This artificial introduction of $\widetilde{\Sigma}$ can now
be exploited to obtain the irreversible entropy generation in each smaller
body. The method of calculation above can be easily applied to this case, see
Eq. (\ref{entropy_generation}), to yield
\begin{equation}
d_{\text{i}}S_{1}(t)=dQ(t)\left(  \frac{1}{T_{1}(t)}-\frac{1}{T_{0}}\right)
,\ \ d_{\text{i}}S_{2}(t)=dQ(t)\left(  \frac{1}{T_{0}}-\frac{1}{T_{2}%
(t)}\right)  , \label{Entropy_Generation_Each_Body_0}%
\end{equation}
where $dQ(t)\geq0$\ is the heat addede to body $1$ or given out by body $2$.
By comparing with Fig. \ref{Fig_two_boxes_modified_system}, we note that
$dQ(t)=dQ_{1}(t)+dQ^{\prime}(t)=$ $dQ_{2}(t)+dQ^{\prime}(t)$, giving
\[
dQ_{1}(t)=dQ_{2}(t).
\]
It is evident that these entropy productions vanish in equilibrium, as
required by the notion of equilibrium between the two bodies. If we had
introduced a medium held at a constant temperature $T_{0}^{^{\prime}}\neq
T_{0}$, then the equilibrium will occur at $T_{0}^{^{\prime}}$, and not at
$T_{0}$, and we would end up with a different final state of $\Sigma
_{0}^{^{\prime}}$ than that of $\Sigma_{0}$. Thus, the medium must have the
constant temperature $T_{0}$.
\begin{figure}
[ptb]
\begin{center}
\includegraphics[
height=2.674in,
width=5.9802in
]%
{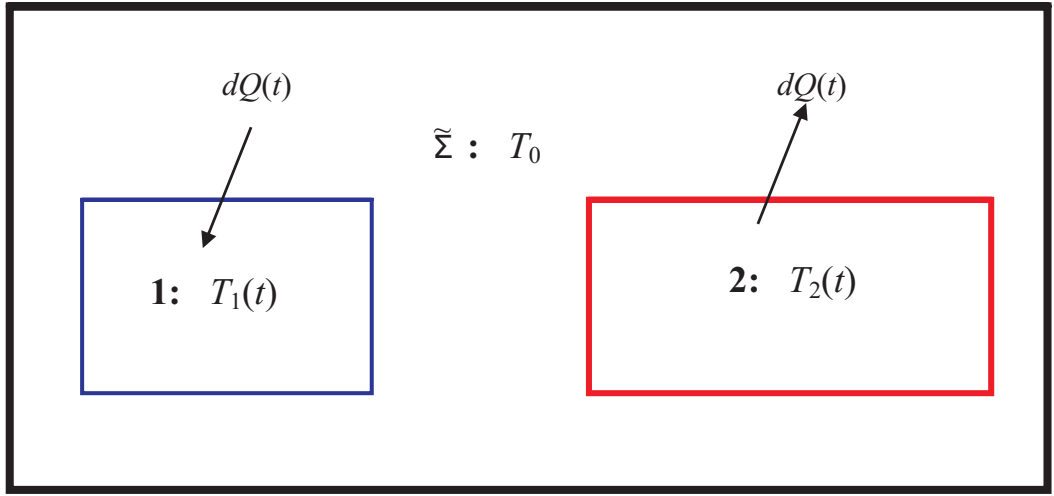}%
\caption{A schematic representation of a new isolated system $\Sigma
_{0}^{^{\prime}}$ consisting of the two bodies $1$ and $2$, not in thermal
conatct, surrounded by an extensively large medium $\widetilde{\Sigma}$, which
is so large that the presence of $1$ and $2$ does not alter its internal
temperature, which therefore remains constant.\ We choose the medium, as
exlained in the text, to be at the temperature $T_{0}$. }%
\label{Fig_two_boxex_in_medium}%
\end{center}
\end{figure}

The situation will be very different if the medium is taken to be at a
themperature $T_{0}^{\prime}\neq T_{0}$. In this case, the heat given to body
$1$ is different from the heat given out by body $2$. Comparing with Fig.
\ref{Fig_two_boxes_modified_system}, we note for the total amount of heat
$dQ_{2,\text{tot}}(t)$ given out by $2$ and the total amount of heat
$dQ_{1,\text{tot}}(t)$ given to $1$ that $dQ_{1,\text{tot}}(t)\equiv
dQ_{1}(t)+dQ^{\prime}(t)\neq$ $dQ_{2,\text{tot}}(t)\equiv dQ_{2}%
(t)+dQ^{\prime}(t)$, giving
\[
dQ_{1}(t)\neq dQ_{2}(t).
\]
The irreversible entropy generation in the two bodies are now givem by%
\[
d_{\text{i}}S_{1}(t)=dQ_{1,\text{tot}}(t)\left(  \frac{1}{T_{1}(t)}-\frac
{1}{T_{0}^{\prime}}\right)  ,\ \ d_{\text{i}}S_{2}(t)=dQ_{2,\text{tot}%
}(t)\left(  \frac{1}{T_{0}^{\prime}}-\frac{1}{T_{2}(t)}\right)  ,
\]
so that $d_{\text{i}}S(t)\equiv d_{\text{i}}S_{1}(t)+d_{\text{i}}S_{1}(t)$ is
given by
\begin{equation}
d_{\text{i}}S(t)=dQ^{\prime}(t)\left(  \frac{1}{T_{1}(t)}-\frac{1}{T_{2}%
(t)}\right)  +dQ_{1}(t)\left(  \frac{1}{T_{1}(t)}-\frac{1}{T_{0}^{\prime}%
}\right)  +dQ_{2}(t)\left(  \frac{1}{T_{0}^{\prime}}-\frac{1}{T_{2}%
(t)}\right)  . \label{entropy_generation_Diff_T0_Medium}%
\end{equation}
Notice that for $T_{0}^{\prime}>T_{2}(t)>T_{1}(t),$ $dQ_{1}(t)>0,$ and
$dQ_{2}(t)<0$. Similarly, for $T_{0}^{\prime}<T_{1}(t),$ $dQ_{1}(t)<0,$ and
$dQ_{2}(t)<0$. Thus, each of the last two irreversible entropy generation
contributions above is non-negative, as expected. We will see below that the
last two contributions are absent in the local theory.

The above approach can now be extended to many bodies at different initial
temperatures $T_{k}$. We assume that the medium $\widetilde{\Sigma}$ surrounds
all the bodies so that there is no direct contact between the bodies. If
$T_{0}$ still denotes the final temperature of all the bodies, treated as an
isolated system $\Sigma_{0}$, then the medium $\widetilde{\Sigma}$ must be
chosen to be also at the same constant temperature $T_{0}$. It is obvious
that
\begin{equation}
d_{\text{i}}S_{k}(t)=dQ_{k}(t)\left(  \frac{1}{T_{k}(t)}-\frac{1}%
{T_{0}^{\prime}}\right)  , \label{Entropy_Generation_Each_Body_1}%
\end{equation}
where $dQ_{k}(t)$ denotes the infinitesimal heat added to the $k$-th body
between $t$ and $t+dt,$ and $T_{k}(t)$ is its instantaneous internal
temperature; compare with Eq. (\ref{entropy_generation}). This is a
generalization of Eq. (\ref{Irreversible_Entropy_Change}) to many bodies, and
exploits the absence of causality inherent in $d_{\text{e}}S_{k}(t)$, as
exemplified by Eq. (\ref{Reversible_Entropy_Change}). The irreversible entropy
generation in the isolated system $\Sigma_{0}^{^{\prime}}$ consisting only of
the bodies (without the medium) is the sum of all these contributions:%
\[
d_{\text{i}}S_{0}(t)\equiv%
{\textstyle\sum\limits_{k}}
d_{\text{i}}S_{k}(t)=%
{\textstyle\sum\limits_{k}}
\frac{dQ_{k}(t)}{T_{k}(t)}=%
{\textstyle\sum\limits_{k}}
dS_{k}(t)=dS(t),
\]
as it must be for an isolated system; Obviusly, $d_{\text{i}}S_{0}(t)$ does
not depend on the final temperature $T_{0}$. Here, we have used the fact that
\[%
{\textstyle\sum\limits_{k}}
dQ_{k}(t)\equiv0
\]
because of the isolation of $\Sigma_{0}$ and $\Sigma_{0}^{^{\prime}}$.\ It is
quite clear that the discussion is easily extended to include other extensive
variables which results in the expression for $d_{\text{i}}S_{k}(t)$, which is
identical in form to the expression $d_{\text{i}}s_{k}(t)$in Eq.
(\ref{entropy_generation}).

One may feel uneasy that the introduction of the fictitious medium
$\widetilde{\Sigma}$ has changed the problem. To see that this is not the
case, we reemphasize that its introduction does not affect the heat
$dQ_{k}(t)$. Since, it is the heat $dQ_{k}(t)$ that determines the entropy
generation, and since the equilibrium state of each body is the correct
equilibrium state, the result must be correct. To offer an even stronger
argument, we now provide an alternative proof without the introduction of the
medium so that we can feel comfortable in associating a medium in the case
when finite-size bodies form an isolated system. The importance of this trick
is that it allows us to use all the results we have found by the use of a medium.

\paragraph{Without any Medium}

Let us revert to the simple case of two bodies in thermal contact. In reality,
the interface or the contact region between the two bodies is not a sharp
boundary with a discontinuity in temperature; rather, it is a narrow region of
width $\Delta x$\ over which the temperature rapidly changes from $T_{1}(t)$
to $T_{2}(t)$, as shown in Fig. \ref{Fig_two_boxes_isolated}(b). Thus, there
exist a point E over this region where the temperature is precisely $T_{0}$.
It is possible that the location of the point E along the $x$-axis changes
with time, but this point is irrelevant. The relevant point is that the
temperature in a very narrow width $dx<<\Delta x$ around and including this
point will remain constant in time. If we take the narrow neighborhood of a
point whose temperature is different from $T_{0}$, its temperature will
eventually change to $T_{0}$, as equilibrium is achieved. Thus, the
temperature of the narrow region around this point will not be constant in
time. We take the region where the temperature is greater than $T_{0}$ along
with the half-width $dx/2$ around E on the side of $2$ to be the part of the
body $2$, while the region with the temperature less than $T_{0}$ along with
the half-width $dx/2$ around E on the side of $1$\ to be the part of the body
$1$. From the quasi-independence of the two bodies, it is clear that the
inclusion of these regions will not significantly affect the internal
temperatures $T_{1}(t)$ to $T_{2}(t).$ The heat lost $dQ(t)$ by the body $2$
is transferred to the body $1$ at a \emph{constant} temperature $T_{0}$. Thus,
while
\[
dS_{1}(t)=\frac{dQ(t)}{T_{1}(t)},
\]
where $T_{1}(t)$ is the internal temperature of the body $1$,%
\[
d_{\text{e}}S_{1}(t)=\frac{dQ(t)}{T_{0}},
\]
so that
\[
d_{\text{i}}S_{1}(t)=dQ(t)\left(  \frac{1}{T_{1}(t)}-\frac{1}{T_{0}}\right)
,
\]
as discovered above; see Eq. (\ref{Entropy_Generation_Each_Body_0}). In a
similar fashion, we obtain
\[
d_{\text{i}}S_{2}(t)=dQ(t)\left(  \frac{1}{T_{0}}-\frac{1}{T_{2}(t)}\right)
\]
in accordance with Eq. (\ref{Entropy_Generation_Each_Body_0}).

The above discussion justifies the use of $T_{0}$ as a temperature in the
interfacial region. However, the important point is that the determination of
$d_{\text{e}}S(t)$ for any body requires the use of the equilibrium
temperature of the body in accordance with Eq.
(\ref{Reversible_Entropy_Change}). Thus, the discussion is equally valid for
any number of subsystems in the system.

\paragraph{Comaprison with Local Thermodynamics}

The same result as in Eq. (\ref{Irreversible_Entropy_Change_Total}) is also
obtained in the local thermodynamics, as can be easily seen; see for example
\cite[Eq. (3.14)]{Prigogine-old}. In the limit in which the interfacial region
between the two subsystems is infinitesimal in width along the $x$-direction,
we can obtain the continuum analog of the irreversible entropy generation
\emph{between} the two neighboring regions. Denoting $T_{1}(t)$ by $T(x,t)$
and $T_{2}(t)$ by $T(x+dx,t)\simeq T(x,t)+\left(  \partial T/\partial
x\right)  dx,$ we have
\[
dS_{0}(t)=d_{\text{i}}S_{0}(t)\simeq dQ(t)\left[  \partial(1/T)/\partial
x\right]  dx
\]
\ for conduction. Dividing and multiplying by the cross-sectional area $A$ of
the interface, and using $Adx$ as the volume of the interfacial region, we
have for the rate $\sigma$ of entropy production per unit interfacial volume%
\begin{equation}
\sigma(x,t)=\overset{.}{q}(t)\left[  \partial(1/T)/\partial x\right]  ,
\label{Rate_of_Entropy_Production}%
\end{equation}
where $\overset{.}{q}(t)=$ $\overset{.}{Q}(t)/A$ denotes the heat flux. This
expression (in three dimensions, the result will contain $\boldsymbol{\partial
}(1/T)$) in this limit is a standard result for the entropy production due to
heat conduction in local thermodynamics.

It should be stressed that our derivation of $\sigma(x,t)$ above is
independent of the how long the two subsystems are along the $x$-axis, but
assumes implicitly that $T(x,t)$ has a Taylor series expansion. It is also
obvious that $\sigma(x,t)$ must be zero outside the interfacial region. Thus,
$\sigma(x,t)$ should be correctly identified as the rate of entropy per unit
volume in the \emph{interfacial region}. Therefore, as it follows from the
discussion immediately following Eq. (\ref{Irreversible_Entropy_Change_Total}%
), the derivation says nothing about how much of the irreversible entropy is
generated within each body. The issue is avoided in local thermodynamics by
assuming that the entire volume of the system is composed only of such
interfacial regions. This is contrary to the basic postulate of local
equilibrium according to which each local region has a well-defined
temperature $T(t)$, while the interfacial regions have non-zero gradients of
the temperature.

We also observe that the form of $\sigma(x,t)$ in Eq.
(\ref{Rate_of_Entropy_Production}) \ is only valid for the case when Eq.
(\ref{Irreversible_Entropy_Change_Total}) is valid. It is not valid for the
case covered in Eq. (\ref{entropy_generation_Diff_T0_Medium}). This is the
case when the equilibrium temperature $T_{0}^{\prime}$ of the system is
different from the equilibrium temperature $T_{0}$ of the two bodies under
consideration if treated as isolated bodies. Thus, the above expression
$\sigma(x,t)$ will not be valid if our system consists of a large number of
bodies so that the irreversible entropy generation between any two bodies in
contact will be given by Eq. (\ref{entropy_generation_Diff_T0_Medium}). In
this case, the expresasion explicitly contains the equilibrium temperature of
the system in the last two terms, which is not the case with $\sigma(x,t)$ in
Eq. (\ref{Rate_of_Entropy_Production}), thus verifying the earlier statment
made after Eq. (\ref{entropy_generation_Diff_T0_Medium}).

\subsubsection{Several Bodies}

Let us now consider a simple extension of the case shown in Fig.
\ref{Fig_two_boxes_isolated}(a) in which there are several bodies
$1,2,3,\cdots,$B in thermal contact along the $x$-axis forming an isolated
body $\Sigma_{0}$, with their temperatures in an increasing order:
$T_{1}<T_{2}<\cdots<T_{k}<T_{k+1}<\cdots<T_{\text{B}}$. Let $T_{k-1}%
<T_{0}<T_{k}$ again denote the equilibrium temperature $T_{0}$ of the isolated
body $\Sigma_{0}$, so that there exist a point E over their interface region
where the temperature is precisely $T_{0}$. Let us consider the infinitesimal
heat $dQ_{1}(t)$ received by $1$ to be the part of the heat $dQ_{1}(t)$ that
was given out by the body $k$ and was transferred unaltered via
$k-1,k-2,\cdots3,2$ to $1.$ Any heat transfer through the interface region at
constant $T_{0}$ is isothermal. Therefore, the reversible entropy change due
to $dQ_{1}(t)$ \ is precisely $d_{\text{e}}S_{1}=$ $-dQ_{1}(t)/T_{0}$. The
entropy change of $1$ is $dS_{1}=dQ_{1}(t)/T_{1}(t)$ so that it immediately
follows that%
\[
d_{\text{i}}S_{1}(t)=dQ_{1}(t)\left(  \frac{1}{T_{1}(t)}-\frac{1}{T_{0}%
}\right)  ,
\]
as above. Similarly, we find%
\[
d_{\text{i}}S_{\text{B}}(t)=-dQ_{\text{B}}(t)\left(  \frac{1}{T_{\text{B}}%
(t)}-\frac{1}{T_{0}}\right)
\]
for the body B, where $dQ_{\text{B}}(t)$ is the heat rejected by B, which was
transferred isothermally to the body $k$ unaltetred. For the body $2$, which
receives an infinitesimal heat $dQ_{2}(t)$, we similarly find%
\[
d_{\text{i}}S_{2}(t)=dQ_{2}(t)\left(  \frac{1}{T_{2}(t)}-\frac{1}{T_{0}%
}\right)  ,
\]
and so on. All these results can be easily obtained by inserting a medium held
at fixed temperature $T_{0}$ at the interface between each connsecutive pair
of bodies. Thus, we conclude that we can consider an isolated inhomogeneous
body as a body embedded in a medium without affecting any of the consequences.
This then proves Theorem \ref{Theorem_Isolated_Bodies}.

\subsection{General Discussion}

We are now in a position to provide a proof of Theorem
\ref{Theorem_Isolated_Bodies}. We consider the reversible entropy change
$d_{\text{e}}s^{p}$ in a given subsystem due to the $p$-th conjugate field
$W^{p}(t)$ ($Y(t)$ or $A(t))$ due to the change $dz^{p}$. It is given by
\[
d_{\text{e}}s^{p}=\frac{W_{0}}{T_{0}}dz^{p},
\]
while
\[
ds^{p}=\frac{W(t)}{T(t)}dz^{p},
\]
so that
\[
d_{\text{i}}s^{p}=F\left[  \frac{W(t)}{T(t)}\right]  dz^{p}\geq0
\]
in terms of the quantity $F\left[  w\right]  $ in Eq.
(\ref{Irreversible_Change}). This thus proves the theorem.

\section{Discussion and Conclusions}

We have proposed a scheme to extend our previous work in I to an inhomogeneous
system in which subsystems may undergo relative translational and rotational
motions with respect to each other to contribute to irreversibility. Another
source of irreversibility is the temperature variation, which was considered
in I. Each subsystem or the medium is identified by a set of extensive state
variables $\mathbf{Z}(t)$, and some constant parameters $\mathbf{C}$, some of
which may be external parameters and need not be extensive. Examples of
$\mathbf{C}$ may be the number of particles in the system that characterize
the system, or the translational and the angular velocities of the frame of
reference in which the observations are made. Examples of the state variables
are mechanical quantities such as energy, volume, etc. and the internal
variables are the translational and angular momenta of the various subsystems, etc.

\subsection{Quasi-independence and Additivity of Entropy and Energy}

As our approach starts from the second law, the law of increase of entropy,
the entropy as a state function plays the most important role in our approach.
Accordingly, we need to ensure that this quantity can always be determined by
or at least formally defined in terms of some basic quantities pertaining to
the macrostate of the system of interest. We use the Gibbs formulation of the
entropy such as that in Eq. (\ref{Gibbs entropy}), which is applicable in all
cases as discussed in a recent review \cite{Gujrati-Symmetry}. This
formulation is well-established for an isolated system \cite{Gujrati-Symmetry}%
, but we show in Sect. \ref{marker_entropy_additivity} that it is also
applicable to an open system even when it is\ not in equilibrium with its
surrounding medium. This formulation of entropy for open system is a well
known result in equilibrium thermodynamics \cite{Landau} and our demonstration
generalizes this result to non-equilibrium systems so that this entropy can be
used as the central quantity in developing a non-equilibrium thermodynamics
with the second law as the starting point. It is required that the open system
be quasi-independent of the medium. We have shown in Sect.
\ref{marker_entropy_additivity} that this quasi-independence is a very
important property, which is required for the \emph{additivity} of the entropy
and of energy. For example, quasi-independence of the system and the medium
ensures that their energy of interaction is negligible so that dropping it
makes their energies additive, as seen from Eqs. (\ref{Interaction_Energy})
and (\ref{Energy_Sum_Approx}). Indeed, without the additivity of the energy,
the entropy cannot be additive as discussed in Sect.
\ref{marker_entropy_additivity}. The basic idea is very simple but profound.
The entropy being a state function must only depend on state variables. By
definition, all state variables of a body must be solely determined by what
happens within the body. If there are other bodies, their influence on the
body must not destroy the additivity of all extensive state variables. Let us
consider the isolated system consisting of the system $\Sigma$ and the medium
$\widetilde{\Sigma}$ it is in contact with, which is considered in Sect.
\ref{marker_entropy_additivity}; in particular, consider the energies in Eq.
(\ref{Interaction_Energy}). We will assume no relative motion between $\Sigma$
and the medium $\widetilde{\Sigma}$ to make the discussion simple. The
energies $E_{0},E(t)$ and $\widetilde{E}(t)\ $then represent the energies of
the isolated system, the system and the medium, respectively. However, the
presence of the interaction energy $E_{0}^{(\text{int})}(t)$ in this equation
destroys the required property that the right side of Eq.
(\ref{Interaction_Energy}) is the sum of internal energies of the system and
the medium; the interaction energy is a property of both bodies. Accordingly,
the entropy $S_{0}(E_{0})$ of $\Sigma_{0}$ cannot depend only on the energies
of the two bodies separately. The difference represented by $S_{0}%
^{(\text{int})}(t)$ in Eq. (\ref{Entropy_Loss_Sum}) depends on both bodies.
Thus, the entropy of the isolated system cannot be a sum of entropies, each
representing the entropy of one of the bodies alone. In other words,
generalizing this result to a collection of bodies b$_{j}$ forming a body B,
we conclude that the entropy $S($B$)$ of B \emph{cannot} be expressed as a sum
of entropies, each term $S($b$_{j})$ representing the entropy of the body
b$_{j}$%
\[
S(\text{B})\neq\sum_{j}S(\text{b}_{j}).
\]
However, the additivity requires not only that the sum of the energies of
various parts b$_{j}$ of a body must yield the energy of the body B itself%
\[
E(\text{B})=\sum_{j}E(\text{b}_{j}),
\]
which is satisfied if the linear sizes of the bodies are large compared with
the range of interaction, but also requires that various bodies are
quasi-independent, which is satisfied if the linear sizes of the bodies are
large compared with thecorrelation length. Unless thess properties hold, the
entropy cannot be expressed as a sum of the entropies of its parts, with each
entropy depending only on the properties of the part alone. It is the latter
property that makes entropy a state function.

\subsection{Concept of Internal Equilibrium}

For a body (or its parts, such as subsystems) out of equilibrium, the entropy
usually has an \emph{explicit} dependence on time in addition to the implicit
time-dependence%
\[
S(\text{B})=S(\mathbf{X},\mathbf{I},t);
\]
the latter arises from the dependence of the entropy on state variables
$\mathbf{X}(t)$ and $\mathbf{I}(t)$ that explicitly depend on time; we will
suppress this time-dependence in the following for notational simplicity. (We
will assume, as discussed in \cite{Gujrati-review:Fluctuations}, that at least
one observable, which we take to be the number of particles $N$ is held fixed
and is not contained in $\mathbf{X}$.) Thus, such an entropy will continue to
change (increase if the body is isolated) for \emph{fixed} state variables.
For a homogeneous body, this will happen if the state variables in
$\mathbf{X}$ and $\mathbf{I}$ do not uniquely specify its state. In other
words, there may be other internal variables not contained in $\mathbf{I}$.
This is a connsequence of a simple generalization of Theorem
\ref{marker_absence_of_internal_equilibrium}. The other possibility is that
the body is not homogeneous. If the isolated body is not in equilibrium, its
entropy continues to increase according to the second law in Eq.
(\ref{Second_Law_0}). However, if it happens that the entropy of the body does
not explicitly depend on time%
\[
S_{\text{IE}}(\text{B})=S(\mathbf{X},\mathbf{I}),
\]
then its entropy must be at its \emph{maximum} for fixed $\mathbf{X}$ and
$\mathbf{I}$. This is how we had introduced the concept of internal
equilibrium in I: in this state, the entropy of a system, which is not in
equilibrium with the medium $\widetilde{\Sigma}(\mathbf{Y}_{0},0)$, has no
explicit time-dependence. It can be brought in contact with another medium
$\widetilde{\Sigma}^{\prime}$, representing the medium $\widetilde{\Sigma
}(\mathbf{Y}_{\text{IE}},\mathbf{A}_{\text{IE}})$, so that it remains in
equilibrium with the new medium. In other words, there is no difference
between a body in internal equilibrium with the medium $\widetilde{\Sigma
}(\mathbf{Y}_{0},0)$, and the body in equilibrium with the medium
$\widetilde{\Sigma}(\mathbf{Y}_{\text{IE}},\mathbf{A}_{\text{IE}})$; in the
latter case, the medium ensures to keep the averages $\mathbf{X}_{\text{IE}}$
and $\mathbf{I}_{\text{IE}}$ fixed. In both cases, the entropy is maximum for
the fixed values of their state variables $\mathbf{X}_{\text{IE}}$ and
$\mathbf{I}_{\text{IE}}$. Accordingly, all properties of a body in equilibrium
also hold for a body in internal equilibrium at each instant. For example,
Theorem \ref{marker_Uniform_Motion} established in Sect.
\ref{marker_internal_equilibrium_0} shows that there \emph{cannot} be any
relative motion within such a body. The only possible motion is a
\emph{uniform translation }and a\emph{ rigid-body rotation}. As a consequence,
as discussed in Sect. \ref{marker_internal_equilibrium_0}, there is no viscous
dissipation \emph{within} the body in internal equilibrium. Thus, while there
is no relative motion between the system and $\widetilde{\Sigma}%
(\mathbf{Y}_{\text{IE}},\mathbf{A}_{\text{IE}})$, there will in general be
relative motions between the system and the medium $\widetilde{\Sigma
}(\mathbf{Y}_{0},0)$. This result can be generalized to the following
statement: The only source of viscous dissipation in a collection of bodies is
then due to relative motions between different bodies, when each of which is
postulated to be in internal equilibrium.%

\begin{figure}
[ptb]
\begin{center}
\includegraphics[
trim=1.227302in 0.158563in 0.081295in 0.238845in,
height=2.6844in,
width=5.175in
]%
{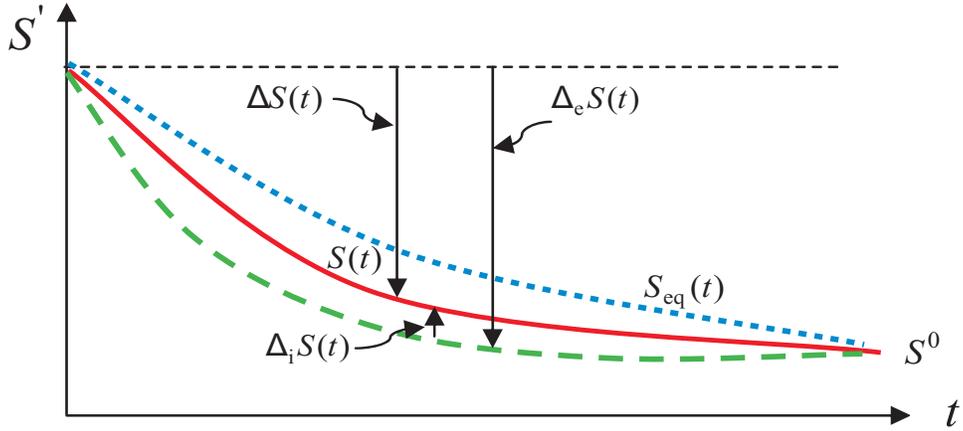}%
\caption{Schematic form of the entropy $S(t)$ (solid curve) and $S_{\text{eq}%
}(t)$ (dotted upper curve) in a cooling experiment. Both curves start from
$S^{\prime}$ at $t=0$ and end at $S^{0}$ as $t\rightarrow\infty$. The entropy
difference $\Delta S(t)$ is shown by the leftmost downward pointing arrow and
$\Delta_{\text{e}}S(t)$ by the rightmost pointing arrow. The irreversible
entropy generation $\Delta_{\text{i}}S(t)>0$ is shown by the small upward
pointing arrow. }%
\label{Fig_entropy}%
\end{center}
\end{figure}

It was assumed in I that the system $\Sigma$ at each instant remains in
internal equilibrium as it goes from one equilibrium macrostate $\mathcal{M}%
^{\prime}$\ in contact with a medium $\widetilde{\Sigma}(\mathbf{Y}%
_{0}^{^{\prime}},0)$ at time $t=0$ to another equilibrium macrostate
$\mathcal{M}^{0}$ in contact with a medium $\widetilde{\Sigma}(\mathbf{Y}%
_{0},0)$ as $t\rightarrow\infty$. Let us try to understand the behavior of its
entropy $S(t)$ during this process, which can then be extended to a subsystem
in internal equilibrium. The field variables for the two macrostates are
$\mathbf{W}^{\prime}\equiv(\mathbf{Y}^{^{\prime}}=\mathbf{Y}_{0}^{^{\prime}%
},\mathbf{A}^{^{\prime}}=0)$ and $\mathbf{W}^{0}\equiv(\mathbf{Y}%
=\mathbf{Y}_{0},\mathbf{A}=0)$, respectively.\ As discussed several times, the
intermediate macrostate $\mathcal{N}(t)$ at some instant $t$ can be thought of
as an equilibrium state of the system $\Sigma$ in contact with a medium
$\widetilde{\Sigma}(\mathbf{Y}_{\text{IE}}=\mathbf{Y}(t),\mathbf{A}%
_{\text{IE}}=\mathbf{A}(t))$ after being disconnected from the medium
$\widetilde{\Sigma}(\mathbf{Y}_{0},0)$. In this equilibrium state, the system
has a well-defined field $\mathbf{W}(t)=(\mathbf{Y}(t),\mathbf{A}(t))$ and the
entropy $S(t)$, which is shown by the solid curve in Fig. \ref{Fig_entropy}.
However, this equilibrium state of the system does not represent the
equilibrium state of the system in contact with either the medium
$\widetilde{\Sigma}(\mathbf{Y}_{0},0)$ or the medium $\widetilde{\Sigma
}(\mathbf{Y}_{\text{IE}},0)$. If we bring the system in contact with the
medium $\widetilde{\Sigma}(\mathbf{Y}_{\text{IE}},0),$ its entropy will
increase as the internal variables relax towards the new equilibrium, to be
denoted by $\mathcal{M}(t)$ with fixed $\mathbf{X}_{\text{IE}}$. The
equilibrium entropy of the system in $\mathcal{M}(t)$ will be denoted by
$S_{\text{eq}}(t)$, which is shown by the dotted curve in Fig.
\ref{Fig_entropy}.

Because of our assumption of internal equilibrium, $S(t)$ represents the
maximum possible entropy for the macrostate $\mathcal{N}(t)$ at $t$. The
change in the entropy $\Delta S(t)$ is given by
\[
\Delta S(t)=S(t)-S^{\prime},
\]
where $S^{\prime}$ is the equilibrium entropy in the macrostate $\mathcal{M}%
^{\prime}$ at $t=0$, and is shown by the left-most arrow in Fig.
\ref{Fig_entropy}. The total entropy change in the entire process is
\[
\Delta S_{\infty}=S^{0}-S^{\prime},
\]
where $S^{0}$ as the equilibrium entropy in the macrostate $\mathcal{M}^{0}$
as $t\rightarrow\infty$.

The entropy difference $\Delta S(t)$ can always be broken into two parts in
analogy with the two terms in eq. (\ref{entropy_generation_0}):%
\[
\Delta S(t)=\Delta_{\text{e}}S(t)+\Delta_{\text{i}}S(t),
\]
with $\Delta_{\text{e}}S(t),$ see the right-most arrow in Fig.
\ref{Fig_entropy}, representing the entropy exchange with the medium
$\widetilde{\Sigma}(\mathbf{Y}_{0},0)$ and $\Delta_{\text{i}}S^{0}\geq0$ the
irreversible entropy production within the system, see the middle small arrow
in Fig. \ref{Fig_entropy}. It does not matter whether the process is
reversible or irreversible, the entropy change is always $\Delta S(t)$. In the
former process, $\Delta_{\text{e}}S(t)=\Delta S(t)$, $\Delta_{\text{i}}S(t)=0$
and $S(t)=S_{\text{eq}}(t)$. In the latter process,
\[
\Delta_{\text{e}}S(t)<\Delta S(t)
\]
and there is irreversible entropy production during the passage from
$\mathcal{M}^{\prime}$\ to $\mathcal{N}(t)$,
\[
\Delta_{\text{i}}S(t)>0
\]
in accordance with Eq. (\ref{entropy_generation_limit}). We thus see that
irreversible generation of entropy has been accounted for in our approach; see
also I and Eq. (\ref{entropy_generation_limit}). We can easily extend the
above discussion to the subsystems, where we now also have the possibility of
irreversible entropy generation through relative motions between them, which
is also accounted for in our approach.

We have found (Theorem \ref{marker_absence_of_internal_equilibrium}) that
while a body is in internal equilibrium in the presence of internal variables,
it is not in internal equilibrium when internal variables are not considered
in its description. In other words, the entropy of a body, see the solid curve
for $S(t)$ in Fig. \ref{Fig_entropy},%
\begin{equation}
S(\text{B})=S_{\text{IE}}(\text{B})=S(\mathbf{X},\mathbf{I})
\label{Body_Internal_Equilibrium}%
\end{equation}
has no explicit time-dependence; however, it becomes an explicit function of
time if expressed only in terms of observables:%
\begin{equation}
S(\text{B})=S(\mathbf{X},t). \label{Body_Not_Internal_Equilibrium}%
\end{equation}
This theorem shows how important the internal variables are in describing
glasses. Because of the presence of $t$, this entropy will continue to
increase and approach $S_{\text{eq}}(t)$, shown in Fig. \ref{Fig_entropy}, if
we keep $\mathbf{X}$ fixed.

As experimentalists can only control the observables in $\mathbf{X}$, it is
the expression of the entropy in Eq. (\ref{Body_Not_Internal_Equilibrium})
that is experimentally relevant. Because of the explicit $t$-dependence, the
body will continue to relax even if the observables are held fixed by
isolating the body from its surroundings and bringing it in contact with the
medium $\widetilde{\Sigma}(\mathbf{Y}_{\text{IE}},0)$. This relaxation occurs
because the internal variable $\mathbf{I}$ is no longer held fixed. It will
continue to change with time until finally the entropy reaches its maximum
possible value for fixed $\mathbf{X}$. We can apply Theorem
\ref{marker_chemical_potential} for an isolated system to our body. The
conclusion is that the maximum of the entropy occurs when $\mathbf{A}$ for the
internal variable $\mathbf{I}$ is identically zero, which is why we had
selected this particular medium. The final value of $\mathbf{I}$ is some value
$\mathbf{I}_{\text{eq}}(\mathbf{X})$. The body in this equilibrium state is
the fully relaxed body with its entropy given by%
\[
S_{\text{eq}}(\text{B})=S_{\text{eq}}(\mathbf{X});
\]
this entropy is shown as $S_{\text{eq}}(t)$ in Fig. \ref{Fig_entropy}. The
affinity $\mathbf{A}$ of the internal variable is identically zero, and the
fields in this state are given by $\mathbf{Y\equiv Y}_{\text{IE}}$. This
equilibrium state of the isolated body should not be confused with the state
of the body in internal equilibrium in the presence of the internal variable
$\mathbf{I}$, as shown in Eq.(\ref{Body_Internal_Equilibrium}). This body has
$\mathbf{I}$ different from $\mathbf{I}_{\text{eq}}(6\mathbf{X})$, and the
corresponding affinity $\mathbf{A}$ different from zero.

\subsection{Thermodynamic Potentials}

The additivity of the entropy and energy allows us to treat our system as a
collection of quasi-independent subsystems so that we can develop the
thermodynamics of an inhomogeneous system. Under the mildest possible
assumption that the medium is in internal equilibrium so that its field
variables $\widetilde{w}_{k}$ are defined via Eq.
(\ref{Chemical_Potential_Internal_Variables}), but the system itself is (or
the subsystems are) not necessarily in internal equilibrium, we are able to
identify the thermodynamic potential for the system. The field variables of
the system in internal equilibrium vary with time and, therefore, they differ
from the medium's field variables. They become identical only when the system
is in equilibrium with the medium. It is obvious from Eq.
(\ref{Chemical_Potential_Internal_Variables}) that the product
\[
\widetilde{w}_{k}\widetilde{Z}_{k},\text{ no summation implied}%
\]
is dimensionless because the entropy is defined as dimensionless in this work.
Accordingly, the (dimensionless) entropy $\widetilde{S}(\widetilde{\mathbf{Z}%
})$ can be expressed in terms of these dimensionless products. It follows then
that resulting thermodynamic potentials will contain these products, as shown
in Sect. \ref{marker_inhomogeneous_arbitrary}. The actual form of the
thermodynamic potential function depends on the choice of the set $\mathbf{C}%
$, and requires field variables to maintain the dimensions of each term
appearing in it. The fields variables in the thermodynamic potentials are the
field variables of the medium, and not of the system as the latter are not
even defined when the system in not in internal equilibrium; see for example,
Eq. (\ref{Free_Energies}) or (\ref{Thrmo_Potential}). This is true whether we
consider the system or any of its many subsystems; the latter are considered
in Sect. \ref{marker_inhomogeneous_arbitrary}. This is surprising since one
knows that, in equilibrium thermodynamics, the thermodynamic potentials are
state quantities. Therefore, they must contain all quantities related to the
system. It just happens that in equilibrium the field variables are identical
to the medium's field variables. Therefore, our thermodynamic potentials
reduce to the standard thermodynamic potentials in equilibrium. In this case,
they become state functions. But this is not true when the system is not in
equilibrium. What is surprising is that this result remains true even when the
system or subsystem is in internal variable. The thermodynamic potentials
always contain fields of the medium and not of the system or subsystem. These
functions have the required thermodynamic property that they can never
increase in any spontaneous process, as seen from Eq.
(\ref{General_Thermodynamic_Potential_Variation0}) or
(\ref{General_Thermodynamic_Potential_Variation}). One can easily see that
this property is a consequence of the \emph{convexity} property of the
thermodynamic potentials. In contrast, a function obtained from thermodynamic
potentials by replacing medium's fields by the body's field, when the latter
is in internal equilibrium, does not have this required thermodynamic
property. An example of this is the function $\widehat{g}(t)$ or $\widehat
{G}(t)$ from $g(t)$ or $G(t)$ given in Eq. (\ref{Local_Gibbs_Free_Energy}) or
(\ref{Free_Energies_Incorrect}).

While each subsystem is in internal equilibrium, so that there can be no
irreversible processes inside them, the irreversibility occurs due to relative
motions going on among them. Because of the internal equilibrium, each
subsystem has its own internal temperature $T_{k}(t)$, pressure $P_{k}(t)$ or
other conjugate field variables$.$ How are these temperatures and pressures
related to the internal temperature $T(t)$ and pressure $P(t)$ introduced in
I? To answer these questions, we consider their definitions in the
$\mathcal{L}$ frame:%
\begin{align*}
\frac{1}{T(t)}  &  =\frac{\partial S}{\partial E},\frac{P(t)}{T(t)}%
=\frac{\partial S}{\partial V}\\
\frac{1}{T_{k}(t)}  &  =\frac{\partial s_{k}}{\partial e_{k}},\frac{P_{k}%
(t)}{T_{k}(t)}=\frac{\partial s_{k}}{\partial v_{k}};
\end{align*}
see Eq. (\ref{subsystem_parameters0}) for the definition of $T_{k}(t)$ and
$P_{k}(t)$. Let us first consider the temperature. Introducing%
\[
r_{k}(t)\equiv\frac{\partial E}{\partial e_{k}},%
{\textstyle\sum}
r_{k}(t)=1,
\]
we have%
\[
\frac{\partial S}{\partial E}=%
{\textstyle\sum\nolimits_{k}}
\frac{\partial s_{k}}{\partial e_{k}}/\frac{\partial E}{\partial e_{k}}=%
{\textstyle\sum\nolimits_{k}}
\frac{r_{k}(t)}{T_{k}(t)}.
\]
Thus, we have
\[
\frac{1}{T(t)}=\sum\nolimits_{k}\frac{r_{k}(t)}{T_{k}(t)}.
\]
Introducing
\[
r_{k}^{\text{v}}(t)\equiv\frac{\partial V}{\partial v_{k}},%
{\textstyle\sum_{k}}
r_{k}^{\text{v}}(t)=1,
\]
we find that%
\[
\frac{P(t)}{T(t)}=\sum\nolimits_{k}\frac{r_{k}^{\text{v}}(t)P_{k}(t)}%
{T_{k}(t)}.
\]
The same exercise can be carried out for other state variables. Introducing%
\[
r_{k}^{l}(t)\equiv\frac{\partial Z_{l}}{\partial z_{lk}},%
{\textstyle\sum_{k}}
r_{k}^{l}(t)=1,
\]
for the $l$-th state variable $Z_{l}$, we find that
\[
w_{l}(t)\equiv\frac{W_{l}(t)}{T(t)}=\sum\nolimits_{k}\frac{r_{k}^{l}%
(t)y_{k}(t)}{T_{k}(t)}.
\]

\subsection{Contrast with Local Non-equilibrium Thermodynamics}

Our approach differs from the traditional local non-equilibrium thermodynamics
due to de Groot \cite{Donder,deGroot,Prigogine,Prigogine-old}, briefly
discussed in Sect. \ref{marker_averages}, in three important ways.

\begin{enumerate}
\item The first one relates to the principle of additivity of energy. The
energy in the local non-equilibrium thermodynamics is not the sum of the
energies of its various part due to the presence of the mutual interaction
energies expressed by $\psi(\mathbf{r})dV$ as seen from Eq.
(\ref{Energy_Additivity_Thermodynamics}). Indeed, usually one applies local
equilibrium to a volume element $dV$, what is traditionally called a
physically infinitesimal volume in that, while it contains a large number of
atoms, the corresponding volume is still infinitesimally small. Such a volume
is conventionally called a "particle" (not to be confused with our usage which
refers to an atom or molecule). For all practical purposes, it is indeed
considered as a limit $dV\rightarrow0$. It is evident that one must then
consider the interaction energy $\psi(\mathbf{r})dV$ to account for the
interaction of this volume with the rest of the system. Thus, the local
non-equilibrium thermodynamics takes the additivity of entropy is taken as
postulate even when the energy is not additive.

\item Because the volume element $dV$ is treated infinitesimal in the local
theory, all thermodynamic quantities are also treated as continuous in space,
while this continuity is not a prerequisite in our approach.

\item Our approach also differs from the local non-equilibrium theory in the
form of the thermodynamic potentials. The volume element is considered to be
in internal equilibrium from the start so that it has a well-defined
temperature, pressure, etc. These fields are used in identifying the
thermodynamic potentials. For example, the Gibbs free energy of the "particle"
is taken to be $\widehat{g}(t)dV$even if the system consisting of such
"particles" is not in equilibrium with the medium.

\item Another imporatnt difference between the two approaches is that the
reversible entropy change and the irreversible entropy generation in each
subsystem also depends on the equilibrium state of the system. The
irreversible entropy generation in the local theory depends only on the
current local properties.
\end{enumerate}

These differences make our approach very different from the local
non-equilibrium thermodynamics due to de Groot
\cite{Donder,deGroot,Prigogine,Prigogine-old}.

\textbf{Acknowledgements }I would like to thank Arkady Leonov for a discussion
on the local non-equilibrium thermodynamics and him and Peter Wolynes for
their suggestion to apply the approach of I to inhomogeneous systems. I would
also like to thank an anonymous referee of I who wanted the current approach
to include internal dissipation.%

\appendix{}%

\section{Relation Between Lab and Body Frames \label{Appd_Frames}}

Let us consider a particle of mass $m$ in the lab frame $\mathcal{L}$, where
it has a velocity $\mathbf{v}_{\mathcal{L}}$ and the potential energy $U$. The
Lagrangian $L_{\mathcal{L}}$ in this frame is given by
\[
L_{\mathcal{L}}=\frac{1}{2}m\mathbf{v}_{\mathcal{L}}^{2}-U.
\]
Let us transform to a rotating frame $\mathcal{C}$ which is moving with a
velocity $\mathbf{V}$ and rotating with an angular velocity $\mathbf{\Omega.}$
The velocity of the particle in $\mathcal{C}$ is given by $\mathbf{v}%
_{\mathcal{C}}$ and is related to $\mathbf{v}_{\mathcal{L}}$ according to
\begin{equation}
\mathbf{v}_{\mathcal{L}}=\mathbf{v}_{\mathcal{C}}\mathbf{+V+\Omega\times
r}_{\mathcal{C}}\mathbf{,} \label{velocity_Relation}%
\end{equation}
where $\mathbf{r}_{\mathcal{C}}$ is the coordinate of the particle in the
$\mathcal{C}$ frame, and is related to the coordinate $\mathbf{r}%
_{\mathcal{L}}$ of the particle in the lab frame $\mathcal{L}$ by%
\[
\mathbf{r}_{\mathcal{L}}\equiv\mathbf{R}+\mathbf{r}_{\mathcal{C}}.
\]
In the following, it would also be convenient to consider a nonrotating frame
$\mathcal{I}$, which is only moving with the velocity $\mathbf{V}$ with
respect to the lab frame, but whose origin coincides with that of
$\mathcal{C}$. The Lagrangian in the frame $\mathcal{C}$ is given by%
\begin{align}
L_{\mathcal{C}}  &  =\frac{1}{2}m\mathbf{(\mathbf{v}_{\mathcal{C}}%
+V+\Omega\times r}_{\mathcal{C}}\mathbf{)}^{2}-U\nonumber\\
&  =\frac{1}{2}m\mathbf{v}_{\mathcal{C}}^{2}+\frac{1}{2}m\mathbf{(\Omega\times
r}_{\mathcal{C}}\mathbf{)}^{2}+m\mathbf{v}_{\mathcal{C}}\cdot\mathbf{V+}%
m\mathbf{v}_{\mathcal{C}}\cdot\mathbf{\Omega\times\mathbf{r}}_{\mathcal{C}%
}\mathbf{+}m\mathbf{V}\cdot\mathbf{\Omega\times\mathbf{r}}_{\mathcal{C}%
}\mathbf{-}U, \label{Energy_Relation}%
\end{align}
in which we have omitted $%
\frac12
m\mathbf{V}^{2}$, which is a total time derivative.

The canonical momentum is obtained as
\[
\mathbf{p}_{\mathcal{C}}=\frac{\partial L_{\mathcal{C}}}{\partial
\mathbf{v}_{\mathcal{C}}}=m(\mathbf{v}_{\mathcal{C}}\mathbf{+V+\Omega
\times\mathbf{r}}_{\mathcal{C}}\mathbf{)},
\]
so that the energy of the particle becomes%
\begin{equation}
E_{\mathcal{C}}=\mathbf{p}_{\mathcal{C}}\cdot\mathbf{v}_{\mathcal{C}%
}-L_{\mathcal{C}}=\frac{1}{2}m\mathbf{v}_{\mathcal{C}}^{2}-\frac{1}%
{2}m\mathbf{(\Omega\times\mathbf{r}}_{\mathcal{C}}\mathbf{)}^{2}%
-m\mathbf{V}\cdot\mathbf{\Omega\times\mathbf{r}}_{\mathcal{C}}\mathbf{+}U.
\label{Internal_Energy_particle}%
\end{equation}
Expressing $\mathbf{v}_{\mathcal{C}}$ in terms of $\mathbf{v}_{\mathcal{L}}$
from Eq. (\ref{velocity_Relation}), we find that
\begin{equation}
E_{\mathcal{C}}=E_{\mathcal{L}}+\frac{1}{2}m\mathbf{V}^{2}-\mathbf{p}%
_{\mathcal{L}}\cdot\mathbf{V}-m\mathbf{r}_{\mathcal{C}}\times\mathbf{v}%
_{\mathcal{L}}\cdot\mathbf{\Omega=}E_{\mathcal{L}}+\frac{1}{2}%
m\mathbf{\mathbf{V}^{2}-\mathbf{p}}_{\mathcal{L}}\mathbf{\cdot\mathbf{V}%
-m}\cdot\mathbf{\Omega}, \label{Energy_Transformation_particle}%
\end{equation}
where
\begin{equation}
E_{\mathcal{L}}=\frac{1}{2}m\mathbf{v}_{\mathcal{L}}^{2}+U,\ \mathbf{p}%
_{\mathcal{L}}=m\mathbf{v}_{\mathcal{L}},\ \ \mathbf{m=}m\mathbf{r}%
_{\mathcal{C}}\times\mathbf{v}_{\mathcal{L}}. \label{Definition}%
\end{equation}
Incidentally, we also note that
\[
\mathbf{p}_{\mathcal{C}}\mathbf{=p}_{\mathcal{L}}.
\]
Thus,
\[
\mathbf{m}=\mathbf{r}_{\mathcal{C}}\times\mathbf{p}_{\mathcal{L}}%
=\mathbf{r}_{\mathcal{C}}\times\mathbf{p}_{\mathcal{C}},
\]
which explains why there is no need to use $\mathcal{L}$ or $\mathcal{C}$ as a
subscript in $\mathbf{m}$.

\section{A Rotating and Translating System\label{Appd_Rotating_System_0}}

Let us now extend the previous calculation for a single particle to a system
of particles of total mass $M$ at a given instant $t$. The system is also
characterized by the number of particles and its volume. For specificity, we
use $N$ and $V(t)$ to denote these quantities. The notation should not mean
that we are only considering the system $\Sigma$ here. The system we have in
mind is any generic system. We assume that this system is translating with a
velocity $\mathbf{V}(t)$ and rotating with an angular velocity $\mathbf{\Omega
}(t)$ as a whole, both of which can change in time. For notational simplicity,
we will suppress the explicit $t$-dependence of various quantities here. We
focus on one particular instant $t$. We take the center of mass of this system
to coincide with the origin of $\mathcal{C}$, so that $\mathcal{C}$ is fixed
to the body and rotating with it. In this case, $\mathcal{C}$ represents the
\emph{center of mass frame}. We now sum Eq.
(\ref{Energy_Transformation_particle}) over all particles, with the result
that it is replaced by%
\begin{equation}
E_{\mathcal{C}}=E_{\mathcal{L}}+\frac{\mathbf{P}^{2}}{2M}-\mathbf{P}%
_{\mathcal{L}}\cdot\mathbf{V}-\mathbf{M}\cdot\mathbf{\Omega},
\label{Energy_Transformation_System}%
\end{equation}
where we have introduced%
\begin{equation}
\mathbf{P=}M\mathbf{V},\ \ \mathbf{P}_{\mathcal{L}}=%
{\displaystyle\sum}
m\mathbf{v}_{\mathcal{L}},\ \ \mathbf{M\equiv}%
{\displaystyle\sum}
\mathbf{r}_{\mathcal{C}}\times\mathbf{p}_{\mathcal{L}}\equiv%
{\displaystyle\sum}
\mathbf{r}_{\mathcal{C}}\times\mathbf{p}_{\mathcal{C}}\mathbf{.}
\label{System_Definition}%
\end{equation}
Here, $\mathbf{M}_{\mathcal{C}}$ is the \emph{intrinsic} angular momentum of
the system of particles in the $\mathcal{C}$ frame. The summation in the above
formulas, which are also applicable to non-uniform rotation and translation of
the center of mass frame $\mathcal{C}$, is a sum over all the particles in the
system. The value of $E_{\mathcal{C}}$ in the $\mathcal{C}$ frame represents
the internal energy $E^{\text{i}}$ of the system. The equation
(\ref{Energy_Transformation_System}) generalizes the result given by Landau
and Lifshitz \cite[see their Eq. (39.13)]{Landau-Mechanics} to the case when
$\mathbf{V}\neq0.$ The present generalization is not limited to constant
rotation and translation.

We can express $\mathbf{M}$ as follows:%
\[
\mathbf{M=}%
{\displaystyle\sum}
m\mathbf{r}_{\mathcal{C}}\times(\mathbf{v}_{\mathcal{C}}\mathbf{+V+\Omega
\times r}_{\mathcal{C}})=%
{\displaystyle\sum}
m\mathbf{r}_{\mathcal{C}}\times\mathbf{v}_{\mathcal{C}}+\left(
{\displaystyle\sum}
m\mathbf{r}_{\mathcal{C}}\right)  \times\mathbf{V+}%
{\displaystyle\sum}
m\mathbf{r}_{\mathcal{C}}\times(\mathbf{\Omega\times r}_{\mathcal{C}}),
\]
in which the second sum on the right vanishes for $\mathcal{C}$, the center of
mass frame. For the same reason, the third term in the second equation in Eq.
(\ref{Internal_Energy_particle}) does not contribute to the energy of the
system. Thus, we find
\begin{subequations}
\begin{align}
\mathbf{M}  &  \mathbf{=}%
{\displaystyle\sum}
m\mathbf{r}_{\mathcal{C}}\times\mathbf{v}_{\mathcal{C}}+%
{\displaystyle\sum}
m\mathbf{r}_{\mathcal{C}}\times(\mathbf{\Omega\times r}_{\mathcal{C}%
})\label{Angular_momentum_C}\\
E_{\mathcal{C}}  &  =\frac{1}{2}%
{\displaystyle\sum}
m\mathbf{v}_{\mathcal{C}}^{2}-\frac{1}{2}m\mathbf{(\Omega\times\mathbf{r}%
}_{\mathcal{C}}\mathbf{)}^{2}\mathbf{+}U. \label{Internal_Energy_C_system}%
\end{align}
We see that
\end{subequations}
\begin{equation}
\mathbf{M\cdot\mathbf{\Omega}=}%
{\displaystyle\sum}
m\mathbf{r}_{\mathcal{C}}\cdot\left(  \mathbf{v}_{\mathcal{C}}\times
\mathbf{\Omega}\right)  +%
{\displaystyle\sum}
m(\mathbf{\Omega\times r}_{\mathcal{C}})^{2}. \label{Angular_Momentum_Work}%
\end{equation}
The first term vanishes when the motion is radial. Thus, it is the
contribution to the energy from the relative transverse motion in the
$\mathcal{C}$ frame and will vanish if the system is stationary in the this
frame. The latter condition is met when the system is in internal equilibrium
in accordance with Theorem \ref{marker_Uniform_Motion}.

For the center of mass frame $\mathcal{C},$ $\mathbf{P=P}_{\mathcal{L}},$ so
that the energy of the system in the frame $\mathcal{C}$ is given by%
\begin{equation}
E_{\mathcal{C}}=E_{\mathcal{L}}-\frac{\mathbf{P}^{2}}{2M}-\mathbf{M}%
\cdot\mathbf{\Omega,} \label{Internal_Energy}%
\end{equation}
which can be rewritten as
\begin{equation}
E_{\mathcal{L}}=E_{\mathcal{C}}+\frac{\mathbf{P}^{2}}{2M}+\mathbf{M}%
\cdot\mathbf{\Omega} \label{Energy_L_system0}%
\end{equation}
Using Eq. (\ref{Internal_Energy_C_system}), we find that
\begin{equation}
E_{\mathcal{L}}=E_{\mathcal{C}}+\frac{\mathbf{P}^{2}}{2M}+\frac{1}{2}%
{\displaystyle\sum}
m(\mathbf{\Omega\times r}_{\mathcal{C}})^{2}+%
{\displaystyle\sum}
m\mathbf{r}_{\mathcal{C}}\cdot\left(  \mathbf{v}_{\mathcal{C}}\times
\mathbf{\Omega}\right)  , \label{Energy_L_system1}%
\end{equation}
in which the last term is the contribution of the transverse motion.

The energy of the system in the $\mathcal{I}$ frame, in which the system has
only rotation, is given by
\begin{equation}
E_{\mathcal{I}}=E_{\mathcal{L}}-\frac{\mathbf{P}^{2}}{2M}.
\label{Internal_Energy_Iframe}%
\end{equation}
It is obvious that $E_{\mathcal{I}}$ does not depend on the velocity
$\mathbf{V}$. Thus,%
\[
E_{\mathcal{C}}=E_{\mathcal{I}}-\mathbf{M}\cdot\mathbf{\Omega}%
\]
does not depend explicitly on $\mathbf{V}$. We can take $E_{\mathcal{C}}$ as a
function of $E_{\mathcal{I}}$ and $\mathbf{\Omega,}$ from which it follows
that%
\begin{align}
\left.  \left(  \frac{\partial E_{\mathcal{C}}}{\partial\mathbf{V}}\right)
\right\vert _{E_{\mathcal{I}},V,N,\mathbf{\Omega}}  &  =0\mathbf{,}%
\label{Ec_V_Relation}\\
\left.  \left(  \frac{\partial E_{\mathcal{C}}}{\partial\mathbf{\Omega}%
}\right)  \right\vert _{E_{\mathcal{I}}\mathbf{,}V,N}  &  =-\mathbf{M}.
\label{Ec_Mc_Relation}%
\end{align}
All the above results are for a particular microstate of the system undergoing
a Hamiltonian dynamics. To obtain thermodynamics of the system, we need to
average the above equations over all microstates using their probabilities,
which will be taken up in Sect. \ref{marker_Rotating_Systems}. The averaging
takes into account the stochastic nature of the dynamics, which has not been
considered in both appendices.

\end{document}